\newtheorem{thm}{Theorem}[section]
\newtheorem{prop}[thm]{Proposition}
\newtheorem{defn}{Definition}[section]
\newtheorem{rem}{Remark}[section]
\begin{document}

\title{Multiple Multicasts with the Help of a Relay}
\author{Deniz G\"{u}nd\"{u}z, Osvaldo Simeone, Andrea J. Goldsmith, \\ H. Vincent Poor and Shlomo Shamai (Shitz) \thanks{Deniz G\"{u}nd\"{u}z is with the Dept. of Electrical Engineering, Stanford University, Stanford, CA 94305, USA and the Dept. of Electrical Engineering, Princeton University, Princeton, NJ 08544 USA (email: dgunduz@princeton.edu).} \thanks{Osvaldo Simeone is with the CWCSPR, New Jersey Institute of Technology, Newark, NJ 07102 USA (email: osvaldo.simeone@njit.edu).} \thanks{Andrea J. Goldsmith is with the Dept. of Electrical Engineering, Stanford University, Stanford, CA 94305, USA.}\thanks{H. Vincent Poor is with the Dept. of Electrical Engineering, Princeton University, Princeton, NJ 08544 USA.}\thanks{Shlomo Shamai (Shitz) is with the Dept. of Electrical Engineering, Technion, Haifa, 32000, Israel.}\thanks{This work was supported
by NSF under grants CNS-06-26611 and CNS-06-25637, the DARPA ITMANET program
under grant 1105741-1-TFIND and the ARO under MURI award W911NF-05-1-0246. The
work of S. Shamai has been supported by the Israel Science Foundation and the
European Commission in the framework of the FP7 Network of Excellence in
Wireless COMmunications NEWCOM++.}}

\maketitle
\vspace{-.5in}
\begin{abstract}
The problem of simultaneous multicasting of multiple messages with the help of a relay
terminal is considered. In particular, a model is studied in which a relay
station simultaneously assists two transmitters in multicasting their
independent messages to two receivers. The relay may also have an independent
message of its own to multicast. As a first step to address this general model, referred
to as the compound multiple access channel with a relay (cMACr), the capacity
region of the multiple access channel with a ``cognitive'' relay is
characterized, including the cases of partial and rate-limited cognition.
Then, achievable rate regions for the cMACr model are presented based on
decode-and-forward (DF) and compress-and-forward (CF) relaying strategies.
Moreover, an outer bound is derived for the special case, called the cMACr without cross-reception, in which each
transmitter has a direct link to one of the receivers while the connection to
the other receiver is enabled only through the relay terminal. The capacity region is characterized for a binary modulo additive
cMACr without cross-reception, showing the optimality of binary linear block codes, thus highlighting the
benefits of physical layer network coding and structured codes. Results are
extended to the Gaussian channel model as well, providing achievable
rate regions for DF and CF, as well as for a structured code design based on
lattice codes. It is shown that the performance with lattice codes approaches
the upper bound for increasing power, surpassing the rates achieved by the
considered random coding-based techniques.

\end{abstract}

\thispagestyle{empty}

\section{Introduction}

\label{s:intro}

Consider two non-cooperating satellites each multicasting radio/TV signals to
users on Earth. The coverage area and the quality of the transmission is
generally limited by the strength of the direct links from the satellites to
the users. To extend coverage, to increase capacity or to improve robustness,
a standard solution is that of introducing relay terminals, which may be other
satellite stations or stronger ground stations (see Fig. \ref{f:model_img}).
The role of the relay terminals is especially critical in scenarios in which
some users lack a direct link from any of the satellites. Moreover, it is
noted that the relays might have their own multicast traffic to transmit. A
similar model applies in the case of non-cooperating base stations
multicasting to mobile users in different cells: here, relay terminals located
on the cell boundaries may help each base station reach users in the
neighboring cells.

Cooperative transmission (relaying) has been extensively studied in the case
of two transmitting users, both for a single user with a dedicated relay
terminal \cite{Cover:IT:79}, \cite{Kramer:IT:05} and for two cooperating users
\cite{Willems:IT:83}. Extensions to scenarios with multiple users are
currently under investigation \cite{Kramer:IT:05}, \cite{Liang:IT:07} - \cite{Maric:MILCOM:07}. In this
work, we aim at studying the impact of cooperation in the setup of Fig.
\ref{f:model_img} that consists of two source terminals simultaneously
multicasting independent information to two receivers in the presence of a
relay station. While the source terminals cannot directly cooperate with each
other, the relay terminal is able to support both transmissions simultaneously
to enlarge the multicast capacity region of the two transmitters. Moreover, it
is assumed that the relay station is also interested in multicasting a local
message to the two receivers (see Fig. \ref{model}).

\begin{figure}[ptb]
\begin{center}
\includegraphics[width=5in]{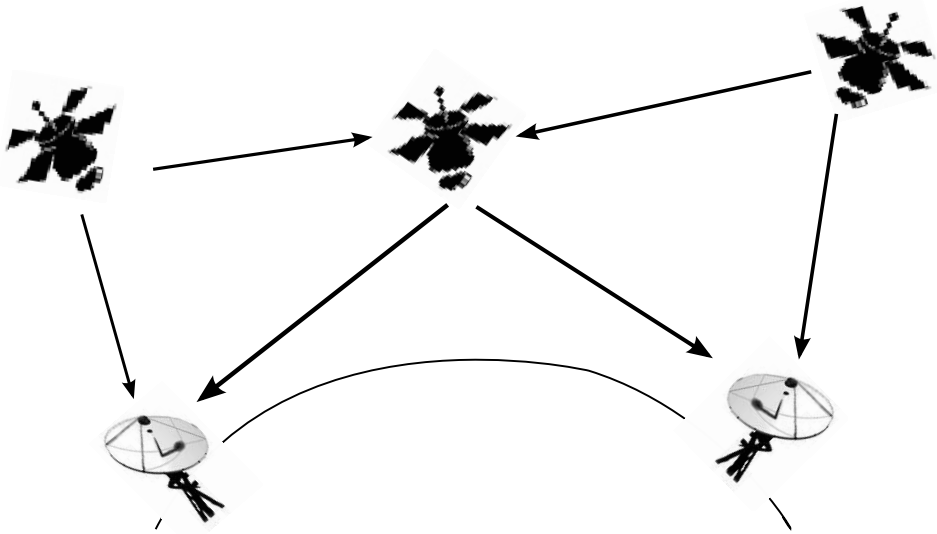}
\end{center}
\caption{Illustration for an application of the compound multiple access
channel with a relay.}%
\label{f:model_img}%
\end{figure}

The model under study is a \emph{compound multiple access channel with a
relay} (cMACr) and can be seen as an extension of several fundamental channel
models, such as the multiple access channel (MAC), the broadcast channel (BC)
and the relay channel (RC). The main goal of this work is to adapt basic
transmission strategies known from these key scenarios to the channel at hand
and to identify special cases of the more general model for which conclusive
capacity results can be obtained.

Below, we summarize our contributions:

\begin{itemize}
\item We start our analysis by studying a simplified version of the cMACr that
consists of a MAC with a ``cognitive'' relay (see Fig. \ref{model2}). In this
scenario the cognitive relay is assumed to be aware of both transmitters'
messages non-causally. We provide the capacity region for this model and
several extensions. While interesting on its own, this setup enables us to
conveniently introduce the necessary tools to address the analysis of the
cMACr. As an intermediate step between the cognitive relay model and the more
general model of cMACr, we also consider the relay with finite capacity
unidirectional links from the transmitters and provide the corresponding
capacity region.

\item We provide achievable rate regions for the cMACr model with decode-and-forward (DF) and
compress-and-forward (CF) relaying. In the CF scheme, the relay, instead of
decoding the messages, quantizes and broadcasts its received signal. This
corresponds to the joint source-channel coding problem of broadcasting a
common source to two receivers, each with its own correlated side information,
in a lossy fashion, studied in \cite{Nayak:IT:08}. This result indicates that
the pure channel coding rate regions for certain multi-user networks can be
improved by exploiting related joint source-channel coding techniques.

\item The similarity between the underlying scenario and the classical
butterfly example in network coding \cite{Ahlswede:IT:00} is evident, despite the fact that we have
multiple sources and a more complicated network with broadcasting constraints
and multiple access interference. Yet, we can still benefit from physical
layer coding techniques that exploit the network coding techniques. In order to
highlight the possibility of physical layer network coding, we focus on a special cMACr in which each source's signal is received directly by only one of the destinations, while the other destination is reached through the relay. This special model is called the \emph{cMACr without cross-reception}. We provide an outer bound for this
setting and show that it matches the DF achievable region, apart from an
additional sum rate constraint at the relay terminal. This
indicates the suboptimality of enforcing the relay to decode both messages,
and motivates a coding scheme that exploits the network coding aspects in the
physical layer.

\item Based on the observation above, we are interested in leveraging the network structure by exploiting ``structured codes''. We then focus on a modulo additive binary version of the cMACr, and characterize its capacity region, showing that it is achieved by binary linear block codes. In this scheme, the relay only decodes the binary sum of the transmitters' messages, rather than decoding each individual message. Since the receiver $1$ ($2$) can decode the message of transmitter $1$ ($2$) directly without the help of the relay, it is sufficient for the relay to forward only the binary sum. Similar to \cite{Korner:IT:79}, \cite{Nazer:ETT:08}, \cite{Knopp:GDR:07}, this result highlights the importance of structured codes in achieving the capacity region of certain multi-user networks.

\item Finally, we extend our results to the Gaussian case, and present a
comparison of the achievable rates and the outer bound. Additionally, we
extend the structured code approach to the Gaussian channel setting by proposing an achievable scheme based on nested lattice codes. We show
that, in the case of symmetric rates from the transmitters, nested lattice
coding improves the achievable rate significantly compared to the considered
random coding schemes in the moderate to high power regime.
\end{itemize}

The cMACr of Fig. \ref{model} can also been seen as a generalization of a
number of other specific channels that have been studied extensively in the literature. To start with, if there is no relay
terminal available, our model reduces to the compound multiple access channel
whose capacity is characterized in \cite{Ahlswede:1974}. Moreover, if there is
only one source terminal, it reduces to the dedicated relay broadcast
channel with a single common message explored in \cite{Kramer:IT:05},
\cite{Liang:IT:07}: Since the capacity is not known even for the simpler case
of a relay channel \cite{Cover:IT:79}, the capacity for the dedicated relay
broadcast channel remains open as well. If we have two sources but a single
destination, the model reduces to the multiple access relay channel model
studied in \cite{Kramer:IT:05}, \cite{Kramer:ISIT:00} whose
capacity region is not known in the general case either. Furthermore, if we
assume that transmitter 1 (and 2) has an orthogonal side channel of infinite
capacity to receiver 1 (2), then we can equivalently consider the message of
transmitter 1 (2) to be known in advance at receiver 1 (2) and the
corresponding channel model becomes equivalent to the restricted two-way relay
channel studied in \cite{Rankov:Asil:05}, \cite{Gunduz:All:08},
\cite{Wilson:IT:08}, and \cite{Nam:IZS:08}.

The cMACr model is also studied in \cite{Maric:MILCOM:07}, where
DF and amplify-and-forward (AF) based protocols are
analyzed. Another related problem is the interference relay channel model
studied in \cite{Sahin:Asil:07}, \cite{Maric:ISIT:08},
\cite{Sridharan:ISIT:08}: Note that, even though the interference channel
setup is not obtained as a special case of our model, achievable rate regions
proposed here can serve as inner bounds for that setup as well.

\textit{Notation}: To simplify notation, we will sometimes use the shortcut:
$x_{\{S\}}=(x_{i})_{i\in S}.$ We employ standard conventions (see, e.g.,
\cite{Cover:IT:79}), where the probability distributions are defined by the
arguments, upper-case letters represent random variables and the corresponding
lower-case letters represent realizations of the random variables. We will
follow the convention of dropping subscripts of probability distributions if
the arguments of the distributions are lower case versions of the
corresponding random variables. The superscripts identify the number of
samples to be included in a given vector, e.g., $y_{1}^{j-1}=[y_{1,1}\cdots
y_{1,j-1}].$

%---------------------------
\psfrag{W1}{$W_1$}\psfrag{W2}{$W_2$} \psfrag{W3}{$W_3$}
\psfrag{X1}{$X_1$}\psfrag{X2}{$X_2$}%
\psfrag{X3}{$X_3$} \psfrag{Y1}{$Y_1$}\psfrag{Y2}{$Y_2$}\psfrag{Y3}{$Y_3$} %
\psfrag{pxy}{${\textstyle p(y_1,y_2,y_3|x_1,x_2,x_3)}$}
\psfrag{hW1}{${\scriptstyle \hat{W}_1(1), \hat{W}_2(1), \hat{W}_3(1)}$}
\psfrag{hW2}{${\scriptstyle \hat{W}_1(2), \hat{W}_2(2), \hat{W}_3(2)}$} \psfrag{Rel}{\small Relay}%
\psfrag{S1}{\small Source 1}\psfrag{S2}{\small Source 2} \psfrag{D1}{\small Dest. 1}%
\psfrag{D2}{\small Dest. 2} \psfrag{Ch}{cMACr}
\begin{figure}[tbp]
\centering \includegraphics[width=5in]{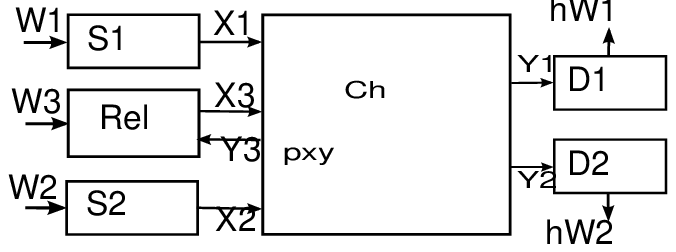}
\caption{A compound MAC with a relay (cMACr).}
\label{model}
\end{figure}
%---------------------------

The rest of the paper is organized as follows. The system model is introduced in
Section \ref{s:system_model}. In Section \ref{s:MAC_cogrel} we study the
multiple access channel with a cognitive relay, and provide the capacity
region for this model and several extensions. The compound multiple access
channel with a relay is studied in Section \ref{s:cMACr}, in which inner and
outer bounds are provided using decode-and-forward and compress-and-forward
type relaying strategies. Section \ref{sec:linear binary} is devoted to a special
binary additive cMACr model. For this model, we characterize the capacity
region and show that the linear binary block codes can achieve any point in
the capacity region, while random coding based achievability schemes have
suboptimal performance. In Section \ref{s:Gaussian}, we analyze Gaussian
channel models for both the MAC with a relay setup and the general cMACr
setup. We apply lattice coding/decoding for the cMACr and show that it
improves the achievable symmetric rate value significantly, especially for the
high power regime. Section \ref{s:conc} concludes the paper followed by the
appendices where we have included the details of the proofs.

\section{System Model}

\label{s:system_model}

A compound multiple access channel with relay consists of three
channel input alphabets $\mathcal{X}_{1}$, $\mathcal{X}_{2}$ and
$\mathcal{X}_{3}$ of transmitter 1, transmitter 2 and the relay, respectively,
and three channel output alphabets $\mathcal{Y}_{1}$, $\mathcal{Y}_{2}$ and
$\mathcal{Y}_{3}$ of receiver 1, receiver 2 and the relay, respectively. We
consider a discrete memoryless time-invariant channel without feedback, which
is characterized by the transition probability $p(y_{1},y_{2},y_{3}%
|x_{1},x_{2},x_{3})$ (see Fig. \ref{model}). Transmitter $i$ has message
$W_{i}\in\mathcal{W}_{i}$, $i=1,2$, while the relay terminal also has a
message $W_{3}\in\mathcal{W}_{3}$ of its own, all of which need to be
transmitted reliably to both receivers. Extension to a Gaussian model will be
considered in Sec. \ref{s:Gaussian}.

\begin{defn}
A $(2^{nR_{1}},2^{nR_{2}},2^{nR_{3}},n)$ code for the cMACr consists of three
sets $\mathcal{W}_{i}=\{1,\ldots,2^{nR_{i}}\}$ for $i=1,2,3$, two encoding
functions $f_{i}$ at the transmitters, $i=1,2$,
\begin{equation}
f_{i}:\mathcal{W}_{i}\rightarrow\mathcal{X}_{i}^{n},
\end{equation}
a set of (causal) encoding functions $g_{j}$ at the relay, $j=1,\ldots,n$,
\begin{equation}
g_{j}:\mathcal{W}_{3}\times\mathcal{Y}_{3}^{j-1}\rightarrow\mathcal{X}_{3},
\end{equation}
and two decoding functions $h_{i}$ at the receivers, $i=1,2$,
\begin{equation}
h_{i}:\mathcal{Y}_{i}^{n}\rightarrow\mathcal{W}_{1}\times\mathcal{W}_{2}%
\times\mathcal{W}_{3}.
\end{equation}

\end{defn}

%Compound MAC with memoryless channels and no feedback:%
%\begin{equation}
%p(y_{\{1,2,3\}i}|w_{\{1,2,3\}},x_{\{1,2\}}^{n},x_3^{i},y_{\{1,2,3\}}%
%^{i-1})=p(y_{\{1,2,3\}i}|x_{\{1,2,3\}i})=p_{Y_{\{1,2,3\}}|X_{\{1,2,3\}}%
%}(y_{\{1,2,3\}i}|x_{\{1,2,3\}i}).
%\end{equation}

We assume that the relay terminal is capable of full-duplex operation, i.e.,
it can receive and transmit at the same time instant. The joint distribution of the
random variables factors as
\begin{equation}
p(w_{\{1,2,3\}},x_{\{1,2,3\}}^{n},y_{\{1,2,3\}}^{n})=\prod_{i=1}^{3}%
p(w_{i})\cdot p(x_{1}^{n}|w_{1})p(x_{2}^{n}|w_{2})\prod_{j=1}^{n}%
p(x_{3j}|y_{3}^{j-1},w_{3})p(y_{\{1,2,3\}j}|x_{\{1,2,3\}j}%
).\label{distribution 1}%
\end{equation}

The average probability of block error for this code is defined as
\[
P_{e}^{n}\triangleq\frac1{2^{n(R_{1}+R_{2}+R_{3})}} \sum_{(W_{1}, W_{2},
W_{3}) \in\mathcal{W}_{1} \times\mathcal{W}_{2} \times\mathcal{W}_{3}}
\Pr\left[  \bigcup_{i=1,2}\{(\hat{W}_{1}(i),\hat{W}_{2}(i),\hat{W}_{3}%
(i))\neq(W_{1},W_{2},W_{3})\}\right]  .
\]

\begin{defn}
A rate triplet $(R_{1},R_{2},R_{3})$ is said to be \emph{achievable} for the
cMACr if there exists a sequence of $(2^{nR_{1}},2^{nR_{2}},2^{nR_{3}},n)$
codes with $P_{e}^{n}\rightarrow0$ as $n\rightarrow\infty$.
\end{defn}

\begin{defn}
The \emph{capacity region} $\mathcal{C}$ for the cMACr is the closure of the
set of all achievable rate triplets.
\end{defn}

\section{MAC with a Cognitive Relay}\label{s:MAC_cogrel}

Before addressing the more general cMACr model, in this section we study the
simpler MAC with a cognitive relay scenario shown in Fig. \ref{model2}. This
model, beside being relevant on its own, enables the introduction of tools and
techniques of interest for the cMACr. The model differs from the cMACr in that
the messages $W_{1}$ and $W_{2}$ of the two users are assumed to be
non-causally available at the relay terminal (in a ``cognitive'' fashion
\cite{Devroye:IT:06}) and there is only one receiver
($\mathcal{Y}_{2}=\mathcal{Y}_{3}=\emptyset$ and $\mathcal{Y}=\mathcal{Y}_{1}%
$). Hence, the encoding function at the relay is now defined as $f_{3}%
:\mathcal{W}_{1}\times\mathcal{W}_{2}\times\mathcal{W}_{3}\rightarrow
\mathcal{X}_{3}^{n}$, the discrete memoryless channel is characterized by the
conditional distribution $p(y|x_{1},x_{2},x_{3})$ and the average block error
probability is defined accordingly for a single receiver. Several extensions
of the basic model of Fig. \ref{model2} will also be considered in this
section. The next proposition provides the capacity region for the MAC with a
cognitive relay.

%---------------------------
\psfrag{W1}{$W_1$}\psfrag{W2}{$W_2$}\psfrag{W3}{$W_3$}
\psfrag{X1}{$X_1$}\psfrag{X2}{$X_2$}\psfrag{X3}{$X_3$}
\psfrag{Y1}{$Y$}
\psfrag{pxy}{${\textstyle p(y|x_1,x_2,x_3)}$}
\psfrag{hW1}{${\hat{W}_1, \hat{W}_2, \hat{W}_3}$}
\psfrag{Cog}{\small Cognitive}\psfrag{Rel}{\small Relay}%
\psfrag{S1}{\small Source 1}\psfrag{S2}{\small Source 2} \psfrag{D1}{\small Destination}
\psfrag{Ch}{Channel}
\begin{figure}[ptb]
\begin{center}
\includegraphics[width=5in]
{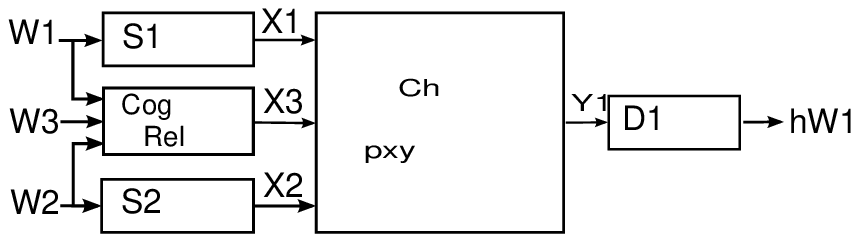}
\end{center}
\caption{MAC with a cognitive relay.}%
\label{model2}%
\end{figure}
%---------------------------

\begin{prop}
\label{prop:1} For the MAC with a cognitive relay, the capacity region is the
closure of the set of all non-negative $(R_{1},R_{2},R_{3})$ satisfying
\begin{subequations}
\label{region cognitive}%
\begin{align}
R_{3} &  \leq I(X_{3};Y|X_{1},X_{2},U_{1},U_{2},Q), \\
R_{1}+R_{3} &  \leq I(X_{1},X_{3};Y|X_{2},U_{2},Q), \\
R_{2}+R_{3} &  \leq I(X_{2},X_{3};Y|X_{1},U_{1},Q),
\end{align}
and
\begin{align}
R_{1}+R_{2}+R_{3} &  \leq I(X_{1},X_{2},X_{3};Y|Q)  \label{non-conditional}%
\end{align}
for some joint distribution of the form
\end{subequations}
\begin{equation}
p(q)p(x_{1},u_{1}|q)p(x_{2},u_{2}|q)p(x_{3}|u_{1},u_{2},q)p(y|x_{1}%
,x_{2},x_{3})\label{pmf}%
\end{equation}
for some auxiliary random variables $U_1$, $U_2$ and $Q$.
\end{prop}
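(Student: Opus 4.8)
\emph{Proof plan.} The plan is to establish the region in \eqref{region cognitive} by a random-coding achievability argument together with a converse based on Fano's inequality and a judicious identification of the auxiliary variables $U_1,U_2,Q$. No block-Markov structure is needed because the cognitive relay encodes non-causally as a function of all three messages.

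\emph{Achievability.} Fix a pmf of the form \eqref{pmf}. Generate a time-sharing sequence $q^{n}\sim\prod_{j}p(q_{j})$ and reveal it to all terminals. For $i=1,2$ and each $w_{i}\in\mathcal{W}_{i}$, generate a pair $(u_{i}^{n}(w_{i}),x_{i}^{n}(w_{i}))$ i.i.d.\ $\sim\prod_{j}p(u_{ij},x_{ij}\mid q_{j})$; for each triple $(w_{1},w_{2},w_{3})$ generate $x_{3}^{n}(w_{1},w_{2},w_{3})$ i.i.d.\ $\sim\prod_{j}p(x_{3j}\mid u_{1j}(w_{1}),u_{2j}(w_{2}),q_{j})$, i.e.\ superimposed on $u_{1}^{n}(w_{1})$ and $u_{2}^{n}(w_{2})$. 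Transmitter $i$ sends $x_{i}^{n}(w_{i})$ and the cognitive relay, which knows $(w_{1},w_{2},w_{3})$, sends $x_{3}^{n}(w_{1},w_{2},w_{3})$. The decoder looks for the unique triple $(\hat{w}_{1},\hat{w}_{2},\hat{w}_{3})$ for which $(q^{n},u_{1}^{n}(\hat{w}_{1}),x_{1}^{n}(\hat{w}_{1}),u_{2}^{n}(\hat{w}_{2}),x_{2}^{n}(\hat{w}_{2}),x_{3}^{n}(\hat{w}_{1},\hat{w}_{2},\hat{w}_{3}),y^{n})$ is jointly typical. Assuming $(1,1,1)$ was sent, union-bound the error probability over the seven patterns indexed by the nonempty subsets $S\subseteq\{1,2,3\}$ of wrongly decoded indices. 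Since $x_{3}^{n}(w_{1},w_{2},w_{3})$ is built on $u_{1}^{n}(w_{1})$ and $u_{2}^{n}(w_{2})$, the relay codeword is ``fresh'' as soon as $1\in S$ or $2\in S$; evaluating the associated joint-typicality exponents and invoking the Markov chains $(U_{1},X_{1})-Q-(U_{2},X_{2})$ and $(X_{1},X_{2})-(U_{1},U_{2},Q)-X_{3}$ implied by \eqref{pmf}, one finds that $S=\{3\},\{1,3\},\{2,3\},\{1,2,3\}$ produce exactly the four constraints of \eqref{region cognitive} (the last being the sum constraint \eqref{non-conditional}), while $S=\{1\},\{2\},\{1,2\}$ give constraints that are implied by those of $S=\{1,3\},\{2,3\},\{1,2,3\}$ because $R_{3}\ge 0$. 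Hence $P_{e}^{n}\to 0$ for all rates strictly inside the region.

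\emph{Converse.} Let $(R_{1},R_{2},R_{3})$ be achievable, so Fano gives $H(W_{1},W_{2},W_{3}\mid Y^{n})\le n\epsilon_{n}$ with $\epsilon_{n}\to 0$. Take $Q$ uniform on $\{1,\dots,n\}$ and independent of the messages, and set $U_{1}\triangleq W_{1}$, $U_{2}\triangleq W_{2}$, $X_{i}\triangleq X_{i,Q}$, $Y\triangleq Y_{Q}$. Since $W_{1},W_{2},W_{3}$ are independent, $X_{ij}=f_{ij}(W_{i})$ for $i=1,2$, and $X_{3j}=f_{3j}(W_{1},W_{2},W_{3})$ (the relay being cognitive), the induced law on $(Q,U_{1},U_{2},X_{1},X_{2},X_{3},Y)$ is of the restricted form \eqref{pmf}: given $Q$, $(X_{1},U_{1})$ is a function of $W_{1}$ and $(X_{2},U_{2})$ a function of $W_{2}$, hence conditionally independent; given $(U_{1},U_{2},Q)=(W_{1},W_{2},Q)$ the pair $(X_{1},X_{2})$ is deterministic, hence conditionally independent of $X_{3}$; and $Y$ obeys the channel law. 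Then $n(R_{1}+R_{2}+R_{3})=H(W_{1},W_{2},W_{3})\le I(W_{1},W_{2},W_{3};Y^{n})+n\epsilon_{n}$; expanding $I(W_{1},W_{2},W_{3};Y^{n})=\sum_{j}I(W_{1},W_{2},W_{3};Y_{j}\mid Y^{j-1})$, inserting $X_{1j},X_{2j},X_{3j}$ (functions of the messages), and dropping the messages and past outputs by memorylessness gives \eqref{non-conditional}. Running the same manipulation from $H(W_{1},W_{3})=H(W_{1},W_{3}\mid W_{2})$, from $H(W_{2},W_{3})=H(W_{2},W_{3}\mid W_{1})$, and from $H(W_{3})=H(W_{3}\mid W_{1},W_{2})$ — conditioning throughout on $W_{2}$, on $W_{1}$, and on $(W_{1},W_{2})$, which reveals the corresponding inputs — yields the remaining three inequalities of \eqref{region cognitive}. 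Finally a standard Fenchel--Eggleston--Carath\'eodory argument bounds the cardinalities of $U_{1},U_{2},Q$ without shrinking the region.

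\emph{Main difficulty.} The conceptual content is light and the work is bookkeeping. On the achievability side, the delicate point is the accurate evaluation of the seven error exponents, in particular recognizing that the relay codeword is regenerated together with $u_{i}^{n}$ and that the Markov structure of \eqref{pmf} collapses the three ``single-index'' bounds onto the four listed ones. On the converse side, the only nonroutine check is that the choice $U_{i}=W_{i}$, with $Q$ the time-sharing index, indeed induces a distribution of the product form \eqref{pmf}; everything else is standard single-letterization.
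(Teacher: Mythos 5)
Your proposal is correct and follows essentially the same route as the paper's proof in Appendix A: superposition random coding with cloud centers $u_1^n(w_1),u_2^n(w_2)$ and the relay codeword superimposed on both, joint-typicality decoding with a union bound over wrong-index patterns (your observation that the patterns $S=\{1\},\{2\},\{1,2\}$ are dominated because $R_3\ge 0$ is the same reduction the paper performs implicitly), and a Fano-based converse with $U_{1i}=W_1$, $U_{2i}=W_2$ and a time-sharing variable $Q$, including the same check that the induced single-letter law factorizes as \eqref{pmf}. The only additions beyond the paper are cosmetic (explicit enumeration of all seven error events and the cardinality-bounding remark), so no further changes are needed.
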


\begin{proof}
A more general MAC model with three users and any combination of
\textquotedblleft common messages\textquotedblright\ (i.e., messages known
\textquotedblleft cognitively\textquotedblright\ to more than one user) is
studied in Sec. VII of \cite{Slepian}, from which Proposition \ref{prop:1} can
be obtained as a special case. However, since a proof is not provided in
\cite{Slepian}, and the technique developed here will be used in deriving
other achievable regions in the paper, we provide a proof in Appendix
\ref{app:1}.
\end{proof}

Towards the goal of accounting for non-ideal connections between sources and
relay (as in the original cMACr), we next consider the cases of partial and
limited-rate cognition (rigorously defined below). We start with the
\textit{partial cognition }model, in which the relay is informed of the
message of only one of the two users, say of message $W_{1}$.

\begin{prop}
\label{prop:2} The capacity region of the MAC with a partially cognitive relay
(informed only of the message $W_{1})$ is given by the closure of the set of
all non-negative $(R_{1},R_{2},R_{3})$ satisfying
\begin{subequations}
\label{partially cognitive}%
\begin{align}
R_{2} &  \leq I(X_{2};Y|X_{1},X_{3},Q), \\
R_{3} &  \leq I(X_{3};Y|X_{1},X_{2},Q), \\
R_{1}+R_{3} &  \leq I(X_{1},X_{3};Y|X_{2},Q), \\
R_{2}+R_{3} &  \leq I(X_{2},X_{3};Y|X_{1},Q),
\end{align}
and
\begin{align}
R_{1}+R_{2}+R_{3} &  \leq I(X_{1},X_{2},X_{3};Y|Q).
\end{align}
for an input distribution of the form $p(q)p(x_{2}|q)p(x_{1},x_{3}|q).$
\end{subequations}
\end{prop}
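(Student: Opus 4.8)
The plan is to obtain this as a corollary of the proof technique for Proposition \ref{prop:1}, specialized to the case where the relay knows only $W_1$. For achievability, I would follow the same superposition/block-Markov-free scheme used for the fully cognitive relay, but with the auxiliary variable structure collapsed to reflect that the relay is ignorant of $W_2$. Concretely, since the relay cognitively shares $W_1$ with transmitter~1, the pair $(X_1,X_3)$ can be generated jointly conditioned on the time-sharing variable $Q$ (no auxiliary $U_1$ is needed — $X_1$ itself plays that role), while $X_2$ is generated independently conditioned on $Q$. This gives exactly the input distribution $p(q)p(x_2|q)p(x_1,x_3|q)$ claimed. I would then do a standard joint-typicality decoding argument at the single receiver, treating $(W_1,W_3)$ as a "super-message" carried cooperatively on $(X_1,X_3)$ against the interference $X_2$. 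Enumerating the error events — wrong $W_2$ only; wrong $(W_1,W_3)$ jointly (since they ride on correlated codewords, one cannot in general decode $W_1$ without $W_3$ in this scheme, hence no standalone $R_1$ or $R_3$ bound of the "$R_1\le\cdots$" type); wrong $W_3$ only; and all messages wrong — and bounding each probability by the usual counting/packing lemma yields precisely the five inequalities in \eqref{partially cognitive}.

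For the converse, I would use standard single-letterization. Introduce the time-sharing random variable $Q$ uniform over $\{1,\dots,n\}$ and independent of everything else, and set $X_i = X_{i,Q}$, $Y = Y_Q$. Fano's inequality gives $H(W_1,W_2,W_3\mid Y^n)\le n\epsilon_n$. For the bound on $R_2$: $nR_2 \le I(W_2;Y^n\mid W_1,W_3) + n\epsilon_n$, and since $X_1^n$ is a function of $W_1$ while $X_3^n$ is a function of $(W_1,Y_3^{j-1})$ hence of $W_1$ together with the channel noise — wait, here I must be slightly careful: the relay's input at time $j$ depends on $W_1$ and on its own past received symbols, but in this cognitive model $\mathcal{Y}_3=\emptyset$, so $X_3^n$ is a deterministic function of $W_1$ alone. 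That is the key simplification. Thus conditioning on $W_1$ fixes both $X_1^n$ and $X_3^n$, and the chain-rule/memorylessness manipulation gives $nR_2\le \sum_j I(X_{2,j};Y_j\mid X_{1,j},X_{3,j}) + n\epsilon_n \le n\,I(X_2;Y\mid X_1,X_3,Q)+n\epsilon_n$. The bounds on $R_3$, $R_1+R_3$, $R_2+R_3$, and $R_1+R_2+R_3$ follow the same pattern, each time conditioning on the appropriate subset of messages and using that $X_3^n$ is a function of $W_1$; e.g., for $R_1+R_3$ condition on $W_2$ (which fixes $X_2^n$) and expand $I(W_1,W_3;Y^n\mid W_2)$.

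The main obstacle I anticipate is getting the converse single-letter constraints to match the achievable region exactly — in particular verifying that no extra constraint (such as a standalone bound on $R_1$, or a bound of the form $R_1\le I(X_1;Y\mid X_2,X_3,Q)$) is forced by the converse, and conversely that the achievability scheme genuinely does not need one. The resolution is that because $X_3^n$ depends on $W_1$, the mutual information $I(W_1;Y^n\mid \cdot)$ in any converse step naturally produces a term involving $X_3$ as well, so it only ever appears in the combination $R_1+R_3$; and on the achievability side, since the relay's codeword is chosen as a superposition on the transmitter-1 codeword, a decoder resolving ambiguity in $W_1$ simultaneously resolves $W_3$, so the relevant error event is the joint one. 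Once the region is shown to be convex (via the usual argument that $Q$ convexifies it, and that cardinality bounds on $Q$ can be imposed by Carathéodory) the matching of inner and outer bounds completes the proof. I would defer the detailed $\epsilon$-counting to an appendix, as the paper does for Proposition~\ref{prop:1}.
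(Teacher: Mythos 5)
Your proposal follows essentially the same route as the paper: achievability by superposition coding (the relay codeword generated conditionally on transmitter 1's codeword, yielding exactly $p(q)p(x_{2}|q)p(x_{1},x_{3}|q)$ with no auxiliary random variables) and a converse via Fano's inequality and standard single-letterization with a time-sharing variable $Q$. Two minor imprecisions to fix: $X_{3}^{n}$ is a function of $(W_{1},W_{3})$, not of $W_{1}$ alone (harmless, since your $R_{2}$ bound conditions on $W_{3}$ as well), and your list of error events omits the event of incorrect $(W_{2},W_{3})$ with correct $W_{1}$, which is the one producing the $R_{2}+R_{3}$ constraint.
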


\begin{proof}
The proof can be found in Appendix \ref{app:2}.
\end{proof}

%\begin{rem} In the special case of $\mathcal{W}_2 = \mathcal{X}_2 =\emptyset$ (and $R_2=0$), the system model of a partially cognitive relay discussed above is equivalent to the
%\textit{asymmetric MAC} scenario in which one of the transmitters has access
%to both messages $W_1$ and $W_3$. We can obtain the capacity of this
%asymmetric MAC (originally characterized by Haroutunian in \cite{Harot}, see
%also \cite{Prelov} and \cite{Slepian}) from Proposition \ref{prop:2} as the
%set of all non-negative rate pairs $(R_1,R_3)$ satisfying
%\begin{subequations}
%\begin{align}
%R_3  &  \leq I(X_3;Y|X_1)\\
%R_1+R_3  &  \leq I(X_1,X_3;Y)
%\end{align}
%for some input distribution $p(x_1,x_3)$.
%\end{subequations}
%\end{rem}

\begin{rem}
The capacity region characterization requires two auxiliary random variables
in Proposition \ref{prop:1} (and in \cite{Slepian}), while no auxiliary random
variables are required in the formulation of Proposition \ref{prop:2}. This is
because, in the scenario covered by Proposition \ref{prop:1}, the relay's
codeword can depend on both $W_{1}$ and $W_{2}$, and the auxiliary random
variables quantify the amount of dependence on each message. On the contrary,
for Proposition \ref{prop:2}, the relay cooperates with only one source, and
no auxiliary random variable is needed. To further elaborate on this point,
another special case of the channel in Fig. \ref{model2} in which no auxiliary
random variable is necessary to achieve the capacity region is obtained when
each transmitter is connected to the receiver via an orthogonal channel, i.e.,
we have $Y=(Y_{1},Y_{2},Y_{3})$ and $p(y_{1},y_{2},y_{3}|x_{1},x_{2}%
,x_{3})=\prod_{i=1}^{3}p(y_{i}|x_{i})$. In this case, unlike Proposition
\ref{prop:2}, the lack of auxiliary random variables reflects the fact that no
coherent combining gain can be accrued via the use of the relay due to the
channels' orthogonality. Defining $C_{i}=\max_{p(x_{i})}I(X_{i};Y_{i})$, for
$i=1,2,3$, we obtain from Proposition \ref{prop:1} that the capacity region is
given by $\{(R_{1},R_{2},R_{3}):0\leq R_{1},0\leq R_{2},0\leq R_{3}\leq
C_{3},R_{1}+R_{2}\leq C_{1}+C_{2},R_{2}+R_{3}\leq C_{2}+C_{3},R_{1}%
+R_{2}+R_{3}\leq C_{1}+C_{2}+C_{3}\}$.
\end{rem}

The model in Fig. \ref{model2} can be further generalized to a scenario with
\textit{limited-capacity cognition}, in which the sources are connected to the
relay via finite-capacity orthogonal links, rather than having a priori
knowledge of the terminals' messages. This channel can be seen as an
intermediate step between the MAC with cognitive relay studied above and the
multiple access channel with relay for which an achievable region was derived
in \cite{Kramer:IT:05} for the case $R_{3}=0$. In particular, assume that
terminal 1 can communicate with the relay, prior to transmission, via a link
of capacity\ $C_{1}$ and that similarly terminal 2 can communicate with the
relay via a link of capacity $C_{2}.$ The following proposition establishes
the capacity of such a channel.

\begin{prop}\label{prop:conf}
The capacity region of the MAC with a cognitive relay connected to the source
terminals via (unidirectional) links of capacities $C_{1}$ and $C_{2}$ is
given by
\begin{subequations}
\label{conf}%
\begin{align}
R_{1} &  \leq I(X_{1};Y|X_{2},X_{3},U_{1},U_{2},Q)+C_{1}, \\
R_{2} &  \leq I(X_{2};Y|X_{1},X_{3},U_{1},U_{2},Q)+C_{2}, \\
R_{3} &  \leq I(X_{3};Y|X_{1},X_{2},U_{1},U_{2},Q), \\
R_{1}+R_{2} &  \leq I(X_{1},X_{2};Y|X_{3},U_{1},U_{2},Q)+C_{1}+C_{2}, \\
R_{1}+R_{3} &  \leq\min\left\{
\begin{array}
[c]{l}%
I(X_{1},X_{3};Y|X_{2},U_{1},U_{2},Q)+C_{1},\text{ }\\
I(X_{1},X_{3};Y|X_{2},U_{2},Q)
\end{array}
\right\}, \\
R_{2}+R_{3} &  \leq\min\left\{
\begin{array}
[c]{l}%
I(X_{2},X_{3};Y|X_{1},U_{1},U_{2},Q)+C_{2}\\
I(X_{2},X_{3};Y|X_{1},U_{1},Q)
\end{array}
\right\}
\end{align}
and
\begin{align}
R_{1}+R_{2}+R_{3} &  \leq\min\left\{
\begin{array}
[c]{l}%
I(X_{1},X_{2},X_{3};Y|U_{1},U_{2},Q)+C_{1}+C_{2},\\
I(X_{1},X_{2},X_{3};Y|U_{1},Q)+C_{1},\\
I(X_{1},X_{2},X_{3};Y|U_{2},Q)+C_{2},\\
I(X_{1},X_{2},X_{3};Y|Q)
\end{array}
\right\}
\end{align}
for some auxiliary random variables $U_1$, $U_2$ and $Q$ with joint distribution of the form (\ref{pmf}).
\end{subequations}
\end{prop}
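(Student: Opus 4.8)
\textit{Overall plan.} The plan is to prove the region by matching achievability and converse, reusing the coding/decoding machinery of Proposition~\ref{prop:1}.

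\textit{Achievability.} I would combine rate splitting with the scheme of Proposition~\ref{prop:1}. Source~$i$ splits $W_i=(W_{i0},W_{i1})$ with $R_{i0}+R_{i1}=R_i$ and $R_{i0}\le C_i$, and sends the index $W_{i0}$ over its noiseless link before transmission. The relay then knows $(W_{10},W_{20},W_3)$ and plays exactly the role of the cognitive relay of Proposition~\ref{prop:1} with respect to the sub-messages $(W_{10},W_{20})$: for a fixed time-sharing sequence $q^n$, generate cloud centres $u_i^n(w_{i0})$, superimpose $x_i^n(w_{i0},w_{i1})$ on $u_i^n(w_{i0})$, and let the relay form $x_3^n$ from $(u_1^n(w_{10}),u_2^n(w_{20}),w_3)$, all according to \eqref{pmf}. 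Joint-typicality decoding of the five sub-messages at the receiver yields, via the packing lemma, one mutual-information inequality per non-empty pattern of erroneous sub-messages --- effectively a MAC with a superposition structure in the ``users'' $U_1$, $X_1|U_1$, $U_2$, $X_2|U_2$, $X_3|(U_1,U_2)$, keeping in mind that resampling $W_{i0}$ also resamples $U_i^n$ and $X_3^n$. One then eliminates $R_{10},R_{11},R_{20},R_{21}$ by Fourier--Motzkin under $R_i=R_{i0}+R_{i1}$ and $0\le R_{i0}\le C_i$. The terms $C_i$, and in particular the minima in \eqref{conf}, emerge precisely here: e.g.\ the event ``$W_{10}$ wrong'' gives $R_{10}\le I(X_1,X_3;Y|X_2,U_2,Q)$ and ``$W_{11},W_3$ wrong'' gives $R_{11}+R_3\le I(X_1,X_3;Y|X_2,U_1,U_2,Q)$, so together with $R_{10}\le C_1$ and $R_1=R_{10}+R_{11}$ one obtains $R_1+R_3\le\min\{I(X_1,X_3;Y|X_2,U_2,Q),\ C_1+I(X_1,X_3;Y|X_2,U_1,U_2,Q)\}$, and similarly for the remaining inequalities.

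\textit{Converse.} Let $V_i\in\{1,\ldots,2^{nC_i}\}$ be the (deterministic) index that source~$i$ sends to the relay; then $V_i$ is a function of $W_i$, $H(V_i)\le nC_i$, the inputs $X_1^n,X_2^n$ are functions of $W_1,W_2$, and $X_3^n$ is a function of $(V_1,V_2,W_3)$ alone (the relay receives nothing from the channel). In the spirit of the conferencing-encoder converse of \cite{Willems:IT:83}, I would take the single-letter auxiliaries $U_1=V_1$, $U_2=V_2$, $Q=J$ with $J$ uniform on $\{1,\ldots,n\}$, and $X_i=X_{iJ}$, $Y=Y_J$. The key preliminary step is to check that the induced law of $(Q,X_1,U_1,X_2,U_2,X_3,Y)$ has the form \eqref{pmf}: given $J$, $(X_{1J},V_1)$ and $(X_{2J},V_2)$ are functions of the independent messages $W_1$ and $W_2$, hence conditionally independent; $X_{3J}$ is a function of $(V_1,V_2,W_3)$ with $W_3$ independent of everything else; and $Y_J$ given $(X_{1J},X_{2J},X_{3J})$ is the memoryless channel. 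Each inequality then follows by standard single-letterization from Fano's inequality $H(W_1,W_2,W_3|Y^n)\le n\epsilon_n$: write $n$ times the relevant rate sum as a conditional entropy of the messages, introduce the indices $V_i$ into the mutual information (which costs at most $H(V_i)\le nC_i$, and is done only for the message groups meant to carry a ``$+C_i$''), expand over time, and bound each term by ``conditioning reduces entropy'' together with the facts that $(X_{2j},X_{3j},V_2)$ is a function of $(V_1,W_2,W_3)$ and that $Y_j$ given the current inputs is independent of the past and of the messages. For instance, $nR_1=H(W_1|W_2,W_3)\le I(W_1,V_1;Y^n|W_2,W_3)+n\epsilon_n\le nC_1+\sum_j I(X_{1j};Y_j|X_{2j},X_{3j},V_1,V_2)+n\epsilon_n$, single-letterizing to $R_1\le C_1+I(X_1;Y|X_2,X_3,U_1,U_2,Q)$; the bounds carrying no $C_i$ (e.g.\ $R_1+R_3\le I(X_1,X_3;Y|X_2,U_2,Q)$, obtained from $H(W_1,W_3|W_2)$ without introducing $V_1$) and the sum-rate bounds with the various subsets of $\{C_1,C_2\}$ follow the same recipe, the subset of $V_i$'s introduced dictating which $C_i$'s appear and which auxiliaries remain in the conditioning.

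\textit{Main obstacle.} I expect the difficulty to lie in bookkeeping rather than in any single hard idea. On the achievability side it is the Fourier--Motzkin elimination: one must enumerate all error events with the correct resampling logic for $U_i^n$ and $X_3^n$, run the elimination over the rate-split variables, and verify that the projection onto $(R_1,R_2,R_3)$ is exactly \eqref{conf}, with no spurious surviving inequality and none missing. On the converse side the only genuinely delicate point is confirming that the choice $U_i=V_i$, $Q=J$ induces a joint distribution of the form \eqref{pmf}; once that is settled, every bound is a routine manipulation.
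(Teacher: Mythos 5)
Your proposal matches the paper's proof essentially step for step: the achievability is the same rate-splitting/superposition scheme in which each source's sub-message of rate at most $C_i$ is conveyed to the relay over the link and then handled exactly as in Proposition \ref{prop:1} (the paper's ``common/private'' split $W_j=[W_{jp}\ W_{jc}]$ with cloud centers $u_j^n(w_{jc})$ and relay codeword $x_3^n(w_{1c},w_{2c},w_3)$), and the converse likewise defines the link indices $V_1,V_2$, sets $U_{1i}=V_1$, $U_{2i}=V_2$ and $Q$ as the time-sharing index, uses $H(V_i)\le nC_i$ together with Fano, and single-letterizes to the distribution (\ref{pmf}). If anything, your converse bookkeeping (conditioning on $W_3$ so that $X_3^n$ is determined by the conditioning set before $X_{3i}$ is inserted) is slightly more careful than the paper's illustrative derivation, but the route is the same.
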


\begin{proof}
The proof is sketched in Appendix \ref{app:conf}.
\end{proof}

\begin{rem}\label{r:cMAC_cogrel}
Based on the results of this section, we can now make a further step towards
the analysis of the cMACr of Fig. \ref{model} by considering the \textit{cMACr
with a cognitive relay}. This channel is given as in Fig. \ref{model} with the
only difference that the relay here is informed \textquotedblleft for
free\textquotedblright\ of the messages $W_{1}$ and $W_{2}$ (similarly to Fig.
\ref{model2}) and that the signal received at the relay is non-informative,
e.g., $\mathcal{Y}_{2}=\emptyset.$ The capacity of such a channel follows
easily from Proposition \ref{prop:1} by taking the union over the distribution
$p(q)p(x_{1},u_{1}|q)p(x_{2},u_{2}|q)p(x_{3}|u_{1},u_{2},q)$ $p(y_{1}%
,y_{2}|x_{1},x_{2},x_{3})$ of the intersection of the two rate regions
(\ref{region cognitive}) evaluated for the two outputs $Y_{1}$ and $Y_{2}$.
Notice that this capacity region depends on the channel inputs only through
the marginal distributions $p(y_{1}|x_{1},x_{2},x_{3})$ and $p(y_{2}%
|x_{1},x_{2},x_{3}).$
\end{rem}

%In this section we have focused on the multiple access channel with a relay model, which can be considered as a special case of the more general cMACr. In the next section, we provide inner and The capacity results and achievability techniques provided here for the MAC with relay will

\section{Inner and Outer bounds on the Capacity Region of the Compound MAC
with a Relay}\label{s:cMACr}

In this section, we focus on the general cMACr model illustrated in Fig.
\ref{model}. As stated in Section \ref{s:intro}, single-letter
characterization of the capacity region for this model is open even for
various special cases. Our goal here is to provide inner and outer bounds,
which are then shown to be tight in certain meaningful special scenarios.

The following inner bound is obtained by the decode-and-forward (DF) strategy
\cite{Cover:IT:79} at the relay terminal. The relay fully decodes both
messages of both users so that we have a MAC from the transmitters to the
relay terminal. Once the relay has decoded the messages, the transmission to
the receivers takes place similarly to the MAC with a cognitive relay model of
Section \ref{s:MAC_cogrel}.

\begin{prop}\label{p:achDF}
For the cMACr as seen in Fig. \ref{model}, any rate triplet $(R_{1}%
,R_{2},R_{3})$ with $R_{j}\geq0$, $j=1,2,3$, satisfying
\begin{subequations}
\label{ach:DF}%
\begin{align}
R_{1} &  \leq I(X_{1};Y_{3}|U_{1}, X_{2},X_{3},Q), \label{ach 1}\\
R_{2} &  \leq I(X_{2};Y_{3}|U_{2}, X_{1},X_{3},Q), \\
R_{1}+R_{2} &  \leq I(X_{1},X_{2};Y_{3}|U_{1}, U_{2}, X_{3},Q), \label{ach 3}\\
R_{3} &  \leq\min\{I(X_{3};Y_{1}|X_{1},X_{2},U_{1},U_{2},Q),\text{ }%
I(X_{3};Y_{2}|X_{1},X_{2},U_{1},U_{2},Q)\}, \label{ach 4}\\
R_{1}+R_{3} &  \leq\min\{I(X_{1},X_{3};Y_{1}|X_{2},U_{2},Q),\text{ }%
I(X_{1},X_{3};Y_{2}|X_{2},U_{2},Q)\}, \\
R_{2}+R_{3} &  \leq\min\{I(X_{2},X_{3};Y_{1}|X_{1},U_{1},Q),\text{ }%
I(X_{2},X_{3};Y_{2}|X_{1},U_{1},Q)\}
\end{align}
and
\begin{align}
R_{1}+R_{2}+R_{3} &  \leq\min\{I(X_{1},X_{2},X_{3};Y_{1}|Q),\text{ }%
I(X_{1},X_{2},X_{3};Y_{2}|Q)\}\label{ach 7}%
\end{align}
for auxiliary random variables $U_1$, $U_2$ and $Q$ with a joint distribution of the form
\end{subequations}
\begin{equation}
p(q)p(x_{1},u_{1}|q)p(x_{2},u_{2}|q)p(x_{3}|u_{1},u_{2},q)p(y_{1},y_{2}%
,y_{3}|x_{1},x_{2},x_{3})
\end{equation}
is achievable by DF.
\end{prop}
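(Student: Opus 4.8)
The plan is to use the decode-and-forward strategy in a block-Markov superposition scheme, so that the relay first learns $W_1$ and $W_2$ over a multiple-access channel and then cooperates with the two sources toward the two receivers exactly as in the MAC-with-cognitive-relay model of Section~\ref{s:MAC_cogrel}. Fix a distribution $p(q)p(x_1,u_1|q)p(x_2,u_2|q)p(x_3|u_1,u_2,q)$ and transmit over $B$ blocks of $n$ symbols; transmitters $1$ and $2$ split their messages into $B-1$ pieces $w_{i,1},\dots,w_{i,B-1}$ and the relay splits $W_3$ into $B$ pieces $w_{3,1},\dots,w_{3,B}$, so that the transmitters incur a multiplicative rate loss $(B-1)/B\to 1$. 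I would generate $u_i^n(m)$, $m\in\{1,\dots,2^{nR_i}\}$, i.i.d.\ according to $p(u_i|q)$, superimpose $x_i^n(m,k)$, $k\in\{1,\dots,2^{nR_i}\}$, i.i.d.\ according to $p(x_i|u_i,q)$ on $u_i^n(m)$, for $i=1,2$, and generate $x_3^n(m_1,m_2,m_3)$, $m_3\in\{1,\dots,2^{nR_3}\}$, i.i.d.\ according to $p(x_3|u_1,u_2,q)$ on top of $u_1^n(m_1)$ and $u_2^n(m_2)$. In block $b$, transmitter $i$ sends $x_i^n(w_{i,b-1},w_{i,b})$ (with $w_{i,0}$ and $w_{i,B}$ fixed dummy indices), and the relay, which decoded $(w_{1,b-1},w_{2,b-1})$ at the end of block $b-1$, sends $x_3^n(w_{1,b-1},w_{2,b-1},w_{3,b})$ --- a legitimate causal relay function.

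\emph{Decoding at the relay.} At the end of block $b$ the relay knows $q^n$, $u_1^n(w_{1,b-1})$, $u_2^n(w_{2,b-1})$ and its own transmitted $x_3^n(w_{1,b-1},w_{2,b-1},w_{3,b})$, and joint-typicality decodes $(w_{1,b},w_{2,b})$ from $y_3^n$. This is the standard two-user MAC decoding problem with $(U_1,U_2,X_3,Q)$ as known side information, so it succeeds with vanishing error probability provided $R_1\le I(X_1;Y_3|U_1,U_2,X_2,X_3,Q)$, $R_2\le I(X_2;Y_3|U_1,U_2,X_1,X_3,Q)$, and $R_1+R_2\le I(X_1,X_2;Y_3|U_1,U_2,X_3,Q)$. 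These are exactly the constraints (\ref{ach 1})--(\ref{ach 3}): the sum-rate bound is (\ref{ach 3}) verbatim, and each single-rate bound reduces to the one in (\ref{ach:DF}) because, under (\ref{pmf}), conditioning on $X_3$ makes the ``cross'' auxiliary conditionally independent of the other source's side --- i.e.\ $(X_1,Y_3)$ is independent of $U_2$ given $(U_1,X_2,X_3,Q)$, and symmetrically --- so the spurious auxiliary can be dropped from the conditioning without changing the mutual information.

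\emph{Decoding at the receivers.} Receiver $r\in\{1,2\}$ uses backward decoding: proceeding from block $B$ down to block $1$, once it has decoded block $b+1$ it knows $(w_{1,b},w_{2,b})$, so in block $b$ the remaining unknowns are $(w_{1,b-1},w_{2,b-1},w_{3,b})$, for which it seeks the unique triple making $(q^n,u_1^n(\hat{w}_{1,b-1}),u_2^n(\hat{w}_{2,b-1}),x_1^n(\hat{w}_{1,b-1},w_{1,b}),x_2^n(\hat{w}_{2,b-1},w_{2,b}),x_3^n(\hat{w}_{1,b-1},\hat{w}_{2,b-1},\hat{w}_{3,b}),y_{r,b}^n)$ jointly typical. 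Since $w_{1,b}$ and $w_{2,b}$ are already known at this stage, the block-$b$ codewords of transmitters $1$ and $2$ are effectively indexed by $w_{1,b-1}$ and $w_{2,b-1}$ alone, with cloud centers $u_1^n(w_{1,b-1})$ and $u_2^n(w_{2,b-1})$; hence the block-$b$ decoding problem is precisely that of the MAC-with-cognitive-relay of Proposition~\ref{prop:1} with output $Y_r$ and ``messages'' $(w_{1,b-1},w_{2,b-1},w_{3,b})$. Consequently, running the same error-event enumeration as in the proof of Proposition~\ref{prop:1} (Appendix~\ref{app:1}) --- in which the channel Markov chain $Y_r-(X_1,X_2,X_3)-(U_1,U_2)$ is used to erase the auxiliaries from the typicality exponents --- receiver $r$ decodes reliably whenever (\ref{ach 4})--(\ref{ach 7}) hold with $Y_r$ in place of the generic output. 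Demanding this for both $r=1$ and $r=2$ produces the minima in (\ref{ach 4})--(\ref{ach 7}). Combining the two analyses, any rate triple in the interior of (\ref{ach:DF}) has average error probability $\to 0$ as $n\to\infty$ for fixed $B$; letting $B\to\infty$ removes the $(B-1)/B$ loss and taking the closure yields the claimed region.

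The step I expect to demand the most care is the chain of conditional-independence reductions that convert the ``raw'' joint-typicality exponents into the clean mutual-information terms of (\ref{ach:DF}). At the relay this is the check that conditioning on $X_3$ (a stochastic function of $U_1,U_2$) makes the unneeded auxiliary conditionally independent of the relevant variables; at the receivers it is the verification, for each of the seven nonempty error patterns on $(\hat{w}_{1,b-1},\hat{w}_{2,b-1},\hat{w}_{3,b})$, of which of $u_1^n,u_2^n,x_1^n,x_2^n,x_3^n$ are re-drawn and that the resulting exponent collapses, via $Y_r-(X_1,X_2,X_3)-(U_1,U_2)$, to the matching term of (\ref{ach 4})--(\ref{ach 7}) --- which is exactly what the proof of Proposition~\ref{prop:1} establishes. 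The remaining ingredients --- block-Markov encoding, the backward-decoding schedule, and the vanishing rate penalty --- are standard.
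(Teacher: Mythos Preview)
Your proposal is correct and follows essentially the same approach as the paper: block-Markov encoding with DF at the relay, the joint (superposition) encoding of Proposition~\ref{prop:1} to handle the relay's private message, and backward decoding at the two receivers, with (\ref{ach 1})--(\ref{ach 3}) arising from MAC decoding at the relay and (\ref{ach 4})--(\ref{ach 7}) from the cognitive-relay analysis applied at each receiver. The paper's proof is only a short sketch pointing to these ingredients, and your write-up simply fills in the details (codebook construction, the conditional-independence reductions that drop the spurious auxiliary in the relay constraints, and the backward-decoding error enumeration) in the expected way.
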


\begin{proof}
The proof follows by combining the block-Markov transmission strategy with DF
at the relay studied in Sec. IV-D of \cite{Kramer:IT:05}, the joint encoding
used in Proposition \ref{prop:1} to handle the private relay message and
backward decoding at the receivers. Notice that conditions (\ref{ach 1}%
)-(\ref{ach 3}) ensure correct decoding at the relay, whereas (\ref{ach 4}%
)-(\ref{ach 7}) follow similarly to Proposition \ref{prop:1} and Remark
\ref{r:cMAC_cogrel} ensuring correct decoding of the messages at both receivers.
\end{proof}

Next, we consider applying the compress-and-forward (CF) strategy
\cite{Cover:IT:79} at the relay terminal. With CF, the relay does not decode
the source message, but facilitates decoding at the destination by
transmitting a quantized version of its received signal. In quantizing its
received signal, the relay takes into consideration the correlated received
signal at the destination terminal and applies Wyner-Ziv source compression
(see \cite{Cover:IT:79} for details). In the cMACr scenario, unlike the
single-user relay channel, we have two distinct destinations, each with
different side information correlated with the relay received signal. This
situation is similar to the problem of lossy broadcasting of a common source
to two receivers with different side information sequences considered in
\cite{Nayak:IT:08} (and solved in some special cases), and applied to the
two-way relay channel setup in \cite{Gunduz:All:08}. Here, for simplicity, we
consider broadcasting only a single quantized version of the relay received
signal to both receivers. The following proposition states the corresponding
achievable rate region.

\begin{prop}
\label{p:achCF} For the cMACr of Fig. \ref{model}, any rate triplet
$(R_{1},R_{2},R_{3})$ with $R_{j}\geq0$, $j=1,2,3$, satisfying
\begin{align}
R_{1}  &  \leq\min\{I(X_{1};Y_{1},\hat{Y}_{3}|X_{2},X_{3},Q),I(X_{1}%
;Y_{2},\hat{Y}_{3}|X_{2},X_{3},Q)\}, \label{a:CF1}\\
R_{2}  &  \leq\min\{I(X_{2};Y_{2},\hat{Y}_{3}|X_{1},X_{3},Q),I(X_{2}%
;Y_{1},\hat{Y}_{3}|X_{1},X_{3},Q)\}, \label{a:CF2}%
\end{align}
and
\begin{align}
R_{1}+R_{2}\leq\min\{I(X_{1},X_{2};Y_{1},\hat{Y}_{3}|X_{3},Q),I(X_{1}%
,X_{2};Y_{2},\hat{Y}_{3}|X_{3},Q)\}\label{a:CF3}%
\end{align}
such that
\begin{align}
R_{3}+I(Y_{3};\hat{Y}_{3}|X_{3},Y_{1},Q)  &  \leq I(X_{3};Y_{1}%
|Q)\label{a:cCF4}
\end{align}
and
\begin{align}
R_{3}+I(Y_{3};\hat{Y}_{3}|X_{3},Y_{2},Q)  &  \leq I(X_{3};Y_{2}%
|Q)\label{a:cCF5}%
\end{align}
for random variables $\hat{Y}_{3}$ and $Q$ satisfying the joint distribution
\begin{equation}
p(q,x_{1},x_{2},x_{3},y_{1},y_{2},y_{3},\hat{y}_{3})=p(q)p(x_{1}%
|q)p(x_{2}|q)p(x_{3}|q)p(\hat{y}_{3}|y_{3},x_{3},q)p(y_{1},y_{2},y_{3}%
|x_{1},x_{2},x_{3})
\end{equation}
is achievable with $\hat{Y}_{3}$ having bounded cardinality.
\end{prop}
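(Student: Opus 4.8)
The plan is to establish the Compress-and-Forward achievable region by a block-Markov coding scheme with Wyner--Ziv binning at the relay and joint decoding at each receiver. I would use $B$ transmission blocks, in each of which the two sources send fresh messages $W_1$ and $W_2$ (drawn i.i.d.\ from $p(x_1|q)$ and $p(x_2|q)$), while the relay transmits a codeword $x_3^n$ that carries both its own message $W_3$ and a bin index describing a quantized version $\hat{Y}_3^n$ of the signal it received in the \emph{previous} block. Since the relay here does not decode the source messages, $X_3$ is independent of $X_1,X_2$ given $Q$, which matches the stated joint distribution. The codebook generation is standard: generate $2^{n(R_3+\hat{R})}$ sequences $\hat y_3^n$ according to $\prod p(\hat y_{3i}|x_{3i},q)$ — actually one must be careful and generate them conditioned on the $x_3^n$ codeword of the \emph{next} block; and randomly partition them into $2^{nR_3'}$ bins, with the bin index plus $W_3$ jointly indexing the relay codeword $x_3^n$ of the next block.

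Next I would analyze the relay's quantization and the decoders. At the end of block $b$, the relay finds an index $\hat Y_3^n(b)$ jointly typical with its received $Y_3^n(b)$ (and with the appropriate $x_3^n$); this succeeds with high probability provided the total quantization rate satisfies $R_3 + \hat R \ge I(Y_3;\hat Y_3|X_3,Q)$ (up to vanishing terms), which is a covering-lemma argument. At receiver $k\in\{1,2\}$, decoding proceeds backward from block $B$ to block $1$. Having already decoded everything in block $b+1$, receiver $k$ knows the relay codeword $x_3^n(b+1)$, hence the bin index sent in block $b+1$, which constrains $\hat Y_3^n(b)$ to its bin of size $2^{n(R_3+\hat R - R_3')}$. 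Using $Y_k^n(b)$ as Wyner--Ziv side information, it then finds the unique $\hat Y_3^n(b)$ in that bin jointly typical with $(Y_k^n(b),x_3^n(b+1),q^n)$; this requires $R_3+\hat R - R_3' \le I(\hat Y_3;Y_k|X_3,Q)$, i.e.\ combined with the covering condition and the constraint $R_3' \le I(X_3;Y_k|Q)$ coming from decoding the bin index and $W_3$ reliably through $x_3^n(b+1)$, one obtains after eliminating $\hat R$ and $R_3'$ exactly the constraints \eqref{a:cCF4}--\eqref{a:cCF5}, namely $R_3 + I(Y_3;\hat Y_3|X_3,Y_k,Q) \le I(X_3;Y_k|Q)$. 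Once $\hat Y_3^n(b)$ and $x_3^n(b)$ are in hand, receiver $k$ decodes $(W_1,W_2)$ of block $b$ by joint typicality with $(Y_k^n(b),\hat Y_3^n(b),x_3^n(b),q^n)$; a standard MAC error analysis (three error events: wrong $W_1$, wrong $W_2$, both wrong) yields \eqref{a:CF1}--\eqref{a:CF3} with the channel output augmented by $\hat Y_3$, and taking the intersection over $k=1,2$ accounts for the $\min$ in each bound. Finally, a union bound over the $B$ blocks and letting $B\to\infty$ (so the rate loss from the initialization/termination blocks vanishes), together with the usual standard-form arguments to make the error probability small and to introduce $Q$ by time-sharing, completes the achievability; the cardinality bound on $\hat Y_3$ follows from the Carath\'eodory/Fenchel--Eggleston argument applied to the region's defining functionals.

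The main obstacle, and the point requiring the most care, is the coupling introduced by having a \emph{single} quantization description $\hat Y_3$ that must simultaneously serve two receivers with different side information $Y_1$ and $Y_2$. Unlike the single-relay channel, one cannot choose the binning rate $R_3'$ to exactly match one receiver; instead the bin must be decodable at \emph{both}, which is why the quantization-rate constraint appears once for each receiver in \eqref{a:cCF4}--\eqref{a:cCF5} rather than being merged, and why I emphasize that these two constraints, after Fourier--Motzkin elimination of the auxiliary rates $\hat R$ and $R_3'$, are genuinely the binding conditions. A secondary subtlety is the ordering of the conditioning in codebook generation (the $\hat y_3^n$ sequences of block $b$ are generated conditioned on the $x_3^n$ codeword used in block $b+1$) and the corresponding backward-decoding schedule; getting this bookkeeping right is what makes the joint-typicality events in successive blocks independent enough for the union bound to go through. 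These are the steps I would write out in full in the appendix, and the rest is routine joint-typicality and covering-lemma computation.
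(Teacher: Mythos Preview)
Your plan is sound and will yield the stated region, but it differs from the paper's proof in two structural respects. The paper uses \emph{forward} (sliding-window) decoding rather than backward decoding: at the end of block $i$, receiver $k$ determines the pair $(s_i,w_{3i})$ by a two-block joint test, requiring simultaneously that $(x_3^n(s_i,w_{3i}),y_k^n(i))$ be typical and that $(\hat y_3^n(s_i|\cdot),x_3^n(s_{i-1},w_{3,i-1}),y_k^n(i-1))$ be typical. Correspondingly, the paper does \emph{not} introduce explicit Wyner--Ziv binning with a separate bin rate; the full compression index is carried by $x_3^n$, and the side-information gain from $Y_k$ enters through the two-block typicality check. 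This single-step analysis gives $R_3+I(Y_3;\hat Y_3|X_3,Y_k)<I(X_3;Y_k)$ directly, in place of your Fourier--Motzkin elimination of $\hat R$ and $R_3'$. Both routes are standard and equivalent; yours cleanly separates the source- and channel-coding parts, while the paper's avoids the auxiliary rates and the elimination step.

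Two small slips to fix when you write it out. First, the quantization codebook $\hat y_3^n$ for block $b$ must be generated conditioned on the relay codeword $x_3^n$ transmitted in block $b$ (the one correlated with $Y_3^n(b)$ through the channel and known to the relay when it compresses), not on the block-$(b{+}1)$ codeword as you state; the factorization $p(\hat y_3|y_3,x_3,q)$ forces this. Second, your rate bookkeeping is tangled: the covering condition should be $\hat R\ge I(Y_3;\hat Y_3|X_3,Q)$ without the extra $R_3$, and if $R_3'$ denotes the bin rate then the channel constraint is $R_3+R_3'\le I(X_3;Y_k|Q)$. Once these are straightened out, the elimination yields \eqref{a:cCF4}--\eqref{a:cCF5} as claimed.
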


\begin{proof}
The proof can be found in Appendix \ref{app:achCF}.
\end{proof}

\begin{rem}
The achievable rate region given in Proposition \ref{p:achCF} can be
potentially improved. Instead of broadcasting a single quantized version of
its received signal, the relay can transmit two descriptions so that the
receiver with an overall better quality in terms of its channel from the relay
and the side information received from its transmitter, receives a better
description, and hence higher rates (see \cite{Nayak:IT:08} and
\cite{Gunduz:All:08} for details). Another possible extension which we will
not pursue here is to use the partial DF scheme together with the above CF
scheme similar to the coding technique in \cite{Gunduz:All:08}.
\end{rem}

We are now interested in studying the special case in which each source
terminal can reach only one of the destination terminals directly. Assume, for
example, that there is no direct connection between source terminal 1 and
destination terminal 2, and similarly between source terminal 2 and
destination terminal 1. In practice, this setup might model either a larger
distance between the disconnected terminals, or some physical constraint in
between the terminals blocking the connection. Obviously, in such a case, no
positive multicasting rate can be achieved without the help of the relay, and
hence, the relay is essential in providing coverage to multicast data to both
receivers. We model this scenario by the following (symbol-by-symbol) Markov
chain conditions:
\begin{subequations}
\label{markov}%
\begin{align}
Y_{1}- &  (X_{1},X_{3})-X_{2}\mbox{ and }\label{ch_ass_1}\\
Y_{2}- &  (X_{2},X_{3})-X_{1},\label{ch_ass_2}%
\end{align}
which state that the output at receiver 1 depends only on the inputs of
transmitter 1 and the relay (\ref{ch_ass_1}), and similarly, the output at
receiver 2 depends only on the inputs of transmitter 2 and the relay
(\ref{ch_ass_2}). The following proposition provides an outer bound for the
capacity region in such a scenario.
\end{subequations}
\begin{prop}\label{p:outerbound}
Assuming that the Markov chain conditions (\ref{markov}) hold for any channel
input distribution satisfying (\ref{distribution 1}), a rate triplet
($R_{1},R_{2},R_{3}$) with $R_{j}\geq0$, $j=1,2,3,$ is achievable only if
\begin{subequations}
\label{outer bound}%
\begin{align}
R_{1} &  \leq I(X_{1};Y_{3}|U_{1}, X_{2}, X_{3},Q), \\
R_{2} &  \leq I(X_{2};Y_{3}|U_{2}, X_{1},X_{3},Q), \\
R_{3} &  \leq\min\{I(X_{3};Y_{1}|X_{1},X_{2},U_{1},U_{2},Q),\text{ }%
I(X_{3};Y_{2}|X_{1},X_{2},U_{1},U_{2},Q)\}, \\
R_{1}+R_{3} &  \leq\min\{I(X_{1},X_{3};Y_{1}|U_{2},Q),\text{ }I(X_{3}%
;Y_{2}|X_{2},U_{2},Q)\}, \\
R_{2}+R_{3} &  \leq\min\{I(X_{3};Y_{1}|X_{1},U_{1},Q),\text{ }I(X_{2}%
,X_{3};Y_{2}|U_{1},Q)\}
\end{align}
and
\begin{align}
R_{1}+R_{2}+R_{3} &  \leq\min\{I(X_{1},X_{3};Y_{1}|Q),\text{ }I(X_{2}%
,X_{3};Y_{2}|Q)\}
\end{align}
for some auxiliary random variables $U_1$, $U_2$ and $Q$ satisfying the joint distribution
\end{subequations}
\begin{equation}
p(q)p(x_{1},u_{1}|q)p(x_{2},u_{2}|q)p(x_{3}|u_{1},u_{2},q)p(y_{1},y_{2}%
,y_{3}|x_{1},x_{2},x_{3}). \label{distribution outer}%
\end{equation}
\end{prop}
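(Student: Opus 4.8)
\textit{Proof proposal.}

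The plan is a weak-converse argument. From any sequence of codes with $P_e^n\to0$, Fano's inequality gives $H(W_1,W_2,W_3\,|\,Y_k^n)\le n\epsilon_n$ at both receivers $k=1,2$ with $\epsilon_n\to0$, hence $H(W_{\{S\}}\,|\,Y_k^n,W_{\{S^c\}})\le n\epsilon_n$ for every $S\subseteq\{1,2,3\}$. Each of the six inequalities in (\ref{outer bound}) will come from choosing the right $S$ and the right receiver $k$, writing $n\sum_{i\in S}R_i\le I(W_{\{S\}};Y_k^n\,|\,W_{\{S^c\}})+n\epsilon_n$, and single-letterizing with a time-sharing variable $Q$ uniform on $\{1,\dots,n\}$, independent of all messages and noise. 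As a preliminary simplification I would note that, for the converse, one may assume without loss of generality that $Y_1,Y_2,Y_3$ are conditionally independent given $(X_1,X_2,X_3)$: no terminal observes more than one of the $Y_k$'s, so only the marginal laws $p(y_1|x_1,x_3)$, $p(y_2|x_2,x_3)$, $p(y_3|x_1,x_2,x_3)$ affect the code's performance and the region (\ref{outer bound}). The only non-routine step is the construction of the auxiliaries $U_1,U_2$, which I leave for last.

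\textit{The ``routing-through-the-relay'' bounds} ($R_1\le I(X_1;Y_3|U_1,X_2,X_3,Q)$ and its mirror image). Take $S=\{1\}$, $k=2$. By the no-cross-reception condition (\ref{ch_ass_2}), $Y_2^n$ is generated from $(X_2^n,X_3^n)$ alone, and $(X_2^n,X_3^n)$ is a function of $(W_2,W_3,Y_3^{n-1})$ because of the causal relay encoding; hence $W_1-(W_2,W_3,Y_3^n)-Y_2^n$, so $I(W_1;Y_2^n|W_2,W_3)\le I(W_1;Y_3^n|W_2,W_3)$, and since $W_1-X_1^n-(Y_3^n,W_2,W_3)$ this is at most $I(X_1^n;Y_3^n|W_2,W_3)$. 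A time-index expansion, using memorylessness to collapse the per-letter term $H(Y_{3j}|\cdot,X_{1j},X_{2j},X_{3j})$ to $H(Y_{3j}|X_{1j},X_{2j},X_{3j})$ and noting that $X_{2j},X_{3j}$ are determined by the conditioning, gives $nR_1\le\sum_j I(X_{1j};Y_{3j}|U_{1j},X_{2j},X_{3j})+n\epsilon_n$ for a suitable auxiliary $U_{1j}$; introducing $Q$ finishes it. The $R_2$ bound follows symmetrically with $S=\{2\}$, $k=1$ and (\ref{ch_ass_1}).

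\textit{The cut-set-type bounds.} For $R_1+R_2+R_3\le\min\{I(X_1,X_3;Y_1|Q),I(X_2,X_3;Y_2|Q)\}$ take $S=\{1,2,3\}$, $k=1$ (resp.\ $2$): the Markov chain $(W_{\{1,2,3\}},Y_1^{j-1})-(X_{1j},X_{2j},X_{3j})-Y_{1j}$ together with (\ref{ch_ass_1}) gives $I(W_{\{1,2,3\}};Y_1^n)\le\sum_j I(X_{1j},X_{3j};Y_{1j})$, and single-letterization completes it. The same mechanism, now conditioning on the message(s) of the ``other'' source(s) so that the corresponding channel input either drops out (via (\ref{markov})) or remains as an explicit conditioning variable, yields the rest: $S=\{3\}$, $k=1,2$ conditioned on $(W_1,W_2)$ gives the $R_3$ bounds; $S=\{2,3\}$, $k=1$ conditioned on $W_1$ gives $R_2+R_3\le I(X_3;Y_1|X_1,U_1,Q)$ and $S=\{2,3\}$, $k=2$ conditioned on $W_1$ gives $R_2+R_3\le I(X_2,X_3;Y_2|U_1,Q)$; symmetrically for the two $R_1+R_3$ bounds. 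In every case the auxiliary appears exactly as the variable needed to absorb the ``history'' $Y_k^{j-1}$ (or $Y_3^{j-1}$) together with the conditioned message.

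\textit{Constructing the auxiliaries, and the main obstacle.} The crux is to produce a \emph{single} pair $(U_{1j},U_{2j})$ that makes all the per-letter inequalities above valid \emph{and} makes the induced single-letter law factor as in (\ref{distribution outer}), i.e.\ $(X_1,U_1)\perp(X_2,U_2)\,|\,Q$ and $X_3-(U_1,U_2,Q)-(X_1,X_2)$. A natural choice that secures the distributional structure is $U_{1j}:=X_1^{j-1}$, $U_{2j}:=X_2^{j-1}$ with $Q$ the uniform time index: conditional independence of $(X_1^j,X_1^{j-1})$ and $(X_2^j,X_2^{j-1})$ given $Q$ follows from $W_1\perp W_2$ and separate encoding, while $X_{3j}$, a function of $W_3$ and the channel noise once $(X_1^{j-1},X_2^{j-1})$ are fixed, is independent of $(X_{1j},X_{2j})$ given $(X_1^{j-1},X_2^{j-1},Q)$. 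The difficulty is that the information terms ``want'' $U_{1j}$ to be the relay's accumulated observation $Y_3^{j-1}$ (or $Y_k^{j-1}$), which is entangled with $W_2$ and therefore incompatible with the product form; replacing those histories by $X_1^{j-1}$ in the conditioning forces control of cross terms of the type $\sum_j I(X_1^{j-1};Y_{3j}\,|\,Y_3^{j-1},\cdot)$, which I would handle with the Csisz\'ar sum identity, checking that all discarded quantities are nonnegative so that the bounds loosen in the correct direction. Once a consistent choice is fixed, bounded cardinalities for $U_1,U_2,Q$ follow from the usual support-lemma argument, and (\ref{outer bound}) is the union of the resulting single-letter constraints over all laws of the form (\ref{distribution outer}).
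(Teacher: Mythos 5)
Your genie/data-processing step for the ``through-the-relay'' bounds is exactly the right (and the paper's) idea: because of (\ref{ch_ass_2}), $Y_2^n$ is generated from $(X_2^n,X_3^n)$, which are functions of $(W_2,W_3,Y_3^{n-1})$, so decodability of $W_1$ at receiver 2 forces $H(W_1|W_2,W_3,Y_3^n)\leq n\delta_n$; the paper phrases this through conditional entropies rather than a DPI on mutual informations, but it is the same argument. The cut-set-type manipulations you describe also match the paper. The genuine gap is the step you yourself call the crux and then leave open: the choice of auxiliaries. You assert mid-proof that ``the auxiliary appears exactly as the variable needed to absorb the history $Y_k^{j-1}$ (or $Y_3^{j-1}$),'' which cannot be right (an auxiliary containing a channel-output history is entangled with both users and the relay noise and destroys the factorization (\ref{distribution outer})), and in the last paragraph you switch to $U_{1j}=X_1^{j-1}$, $U_{2j}=X_2^{j-1}$ but concede that you would still need Csisz\'ar-sum manipulations with unverified sign conditions to reconcile this with the per-letter bounds. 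As written, the proof is not complete at its central point.

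The difficulty you perceive is in fact illusory, and the paper's resolution is elementary: take $U_{1i}=W_1$ and $U_{2i}=W_2$. The histories never need to enter the auxiliaries at all — in every chain the terms $Y_k^{i-1}$ or $Y_3^{i-1}$ appear only in positions where dropping them (conditioning reduces entropy) loosens the bound in the right direction, while the negative entropy terms collapse to $H(Y_{ki}|X_{1i},X_{2i},X_{3i})$ by memorylessness and can then be further conditioned on $U_{1i},U_{2i}$, again in the loosening direction; no Csisz\'ar-sum identity is needed anywhere. With this choice the factorization (\ref{distribution outer}) is automatic: given $Q=i$, $(X_{1i},U_{1i})$ is a function of $W_1$ and $(X_{2i},U_{2i})$ of $W_2$, which are independent, and since $X_{1i},X_{2i}$ are deterministic given $(U_{1i},U_{2i})$ the chain $X_3-(U_1,U_2,Q)-(X_1,X_2)$ holds trivially, even though $X_{3i}$ depends on $Y_3^{i-1}$. (Your alternative $U_{ji}=X_j^{i-1}$ could also be pushed through by the same conditioning-reduces-entropy steps, since $X_j^{i-1}$ is a function of $W_j$, but your proposal does not carry this out.) A separate, minor point: your preliminary ``without loss of generality the outputs are conditionally independent given the inputs'' claim is not valid here, because the relay re-injects a function of $Y_3^{i-1}$ into the channel, so the coupling of $Y_3$ with $Y_1,Y_2$ given $(X_1,X_2,X_3)$ can affect the code's performance; fortunately you never actually use this reduction, and the paper's converse does not need it.
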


\begin{proof}
The proof can be found in Appendix \ref{app:outbnd}.
\end{proof}

By imposing the condition (\ref{markov}) on the DF achievable rate region of
Proposition \ref{p:achDF}, it can be easily seen that the only
difference between the outer bound (\ref{outer bound}) and the achievable
region with DF (\ref{ach:DF}) is that the latter contains the additional
constraint (\ref{ach 3}), which generally reduces the rate region. The
constraint (\ref{ach 3}) accounts for the fact that the DF scheme leading to
the achievable region (\ref{ach:DF}) prescribes both messages $W_{1}$ and
$W_{2}$ to be decoded at the relay terminal. The following remark provides two
examples in which the DF scheme achieves the outer bound (\ref{outer bound})
and thus the\ capacity region. In both cases, the multiple access interference
at the relay terminal is eliminated from the problem setup so that the
condition (\ref{ach 3}) does not limit the performance of DF.

\begin{rem}
In addition to the Markov conditions in (\ref{markov}), consider orthogonal
channels from the two users to the relay terminal, that is, we have
$Y_{3}\triangleq(Y_{31},Y_{32})$, where $Y_{3k}$ depends only on inputs
$X_{k}$ and $X_{3}$ for $k=1,2$; that is, we assume $X_{1}-(X_{2}%
,X_{3})-Y_{32}$ and $X_{2}-(X_{1},X_{3})-Y_{31}$ form Markov chains for any
input distribution. Then, it is easy to see that the sum-rate constraint at
the relay terminal is redundant and hence the outer bound in Proposition
\ref{p:outerbound} and the achievable rate region with DF in Proposition
\ref{p:achDF} match, yielding the full capacity region for this
scenario. As another example where DF\ is optimal, we consider a \textit{relay
multicast channel} setup, in which a single relay helps transmitter 1 to
multicast its message $W_{1}$ to both receivers, i.e., $R_{2}=R_{3}=0$ and
$X_{2}=\emptyset$. For such a setup, under the assumption that $X_{1}%
-X_{3}-Y_{2}$ forms a Markov chain, the achievable rate with DF relaying in
Proposition \ref{p:achDF} and the above outer bound match.
Specifically, the capacity $C$ for this \textit{multicast relay channel} is
given by
\begin{equation}
C=\max_{p(x_{1},x_{3})}\min\{I(X_{1};Y_{3}|X_{3}),\text{ }I(X_{1},X_{3}%
;Y_{1}),\text{ }I(X_{3};Y_{2})\}.
\end{equation}

\end{rem}

Notice that, apart from some special cases (like the once illustrated above),
the achievable rate region with DF is in general suboptimal due to the
requirement of decoding both the individual messages at the relay terminal. In
fact, this requirement may be too restrictive, and simply decoding a function
of the messages at the relay might suffice. To illustrate this point, consider
the special case of the cMACr characterized by $X_{i}=(X_{i,1},X_{i,2})$,
$Y_{i}=(Y_{i,1},Y_{i,2})$ and $Y_{i,1}=X_{i,1}$ for $i=1,2$ and the channel
given as
\[
p(y_{1},y_{2},y_{3})=p(y_{3}|x_{1,2},x_{2,2})p(y_{1,1}|x_{1,1})p(y_{2,1}%
|x_{2,1})p(y_{1,2},y_{2,2}|x_{3}).
\]
In this model, each transmitter has an error-free orthogonal channel to its
receiver. By further assuming that these channels have enough capacity to
transmit the corresponding messages reliably (i.e., message $i$ is available
at receiver $i)$, the channel at hand is seen to be a form of the two-way relay
channel. In this setup, as shown in \cite{Knopp:GDR:07}, \cite{Wilson:IT:08},
\cite{Nam:IZS:08} and \cite{Gunduz:All:08}, DF relaying is suboptimal while
using a structured code achieves the capacity in the case of finite field
additive channels and improves the achievable rate region in the case of
Gaussian channels. In the following section, we explore a similar scenario for
which the outer bound (\ref{outer bound}) is the capacity region of the cMACr,
which cannot be achieved by either DF or CF.

\section{Binary cMACr: Achieving Capacity Through Structured Codes
\label{sec:linear binary}}

Random coding arguments have been highly successful in proving the existence
of capacity-achieving codes for many source and channel coding problems in
multi-user information theory, such as MACs, BCs, RCs with degraded signals
and Slepian-Wolf source coding. However, there are various multi-user
scenarios for which the known random coding-based achievability results fail
to achieve the capacity, while \emph{structured codes} can be shown to perform
optimally. The best known such example for such a setup is due to K\"{o}rner
and Marton \cite{Korner:IT:79}, who considered encoding the modulo sum of two
binary random variables. See \cite{Nazer:ETT:08} for more examples and references.

Here, we consider a binary symmetric (BS) cMACr model and show that structured
codes achieve its capacity, while the rate regions achievable with DF or CF
schemes are both suboptimal. We model the BS cMACr as follows:
\begin{subequations}
\label{bsc}%
\begin{align}
Y_{1} &  =X_{1}\oplus X_{3}\oplus Z_{1},\\
Y_{2} &  =X_{2}\oplus X_{3}\oplus Z_{2},\mbox { and }\\
Y_{3} &  =X_{1}\oplus X_{2}\oplus Z_{3}%
\end{align}
\end{subequations}
where $\oplus$ denotes binary addition, and the noise components $Z_{i}$ are
independent identically distributed (i.i.d.) with\footnote{$X\sim
\mathcal{B}(\varepsilon)$ denotes a Bernoulli distribution for which
$p(X=1)=\varepsilon$ and $p(X=0)=1-\varepsilon$.} $\mathcal{B}(\varepsilon
_{i})$, $i=1,2,3$, and they are independent of each other and the channel
inputs. Notice that this channel satisfies the Markov condition given in
(\ref{markov}). We assume that the relay does not have a private message,
i.e., $R_{3}=0$. The capacity region for this BS cMACr, which can be achieved
by structured codes, is characterized in the following proposition.

\begin{prop}\label{p:binary}
For the binary symmetric cMACr characterized in (\ref{bsc}), the capacity region
is the union of all rate pairs $(R_{1}, R_{2})$ satisfying
\begin{subequations}
\label{binary example}%
\begin{align}
R_{1}  &  \leq1-H_{b}(\varepsilon_{3}),\label{bin example 1}\\
R_{2}  &  \leq1-H_{b}(\varepsilon_{3})~~ \mbox { and }\label{bin example 2}\\
R_{1}+R_{2}  &  \leq\min\{1-H_{b}(\varepsilon_{1}),1-H_{b}(\varepsilon
_{2})\},\label{bin example 3}%
\end{align}
\end{subequations}
where $H_{b}(\varepsilon)$ is the binary entropy function defined as
$H_{b}(\varepsilon) \triangleq-\varepsilon\log\varepsilon-(1-\varepsilon
)\log(1-\varepsilon).$
\end{prop}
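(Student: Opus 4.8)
The plan is to prove the two inclusions separately: a single‑letter \emph{converse} showing every achievable pair $(R_1,R_2)$ satisfies \eqref{bin example 1}--\eqref{bin example 3}, and an \emph{achievability} argument producing, for every pair strictly inside the region, a sequence of codes with vanishing error probability. The latter is the interesting direction and is built from \emph{linear} block codes together with block‑Markov relaying; this is where the structured‑code gain appears (the relay decodes only the binary \emph{sum} of the messages).

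For the converse I would exploit two structural facts: $Y_i=X_i\oplus X_3\oplus Z_i$ with $Z_i\sim\mathcal{B}(\varepsilon_i)$ independent of everything, and that, since $R_3=0$, the relay input $X_3^n$ is a deterministic causal function of $Y_3^{n-1}$. For $R_1\le 1-H_b(\varepsilon_3)$: since $W_1$ must be recovered at receiver~$2$ and, given $W_2$, the output $Y_2^n$ is a function of $(Y_3^{n-1},Z_2^n)$ only, the Markov relation $W_1-(Y_3^{n-1},W_2)-Y_2^n$ plus Fano give $nR_1\le I(W_1;Y_2^n\mid W_2)+n\epsilon_n\le\sum_j I(W_1;Y_{3,j}\mid W_2,Y_3^{j-1})+n\epsilon_n$, and each summand is $\le 1-H_b(\varepsilon_3)$ because conditioning on $W_2$ (hence on $X_{2,j}$) and on $Y_3^{j-1}$ leaves $Y_{3,j}=X_{1,j}\oplus X_{2,j}\oplus Z_{3,j}$ with $H(Y_{3,j})\le 1$ and $H(Y_{3,j}\mid W_1,W_2,Y_3^{j-1})=H_b(\varepsilon_3)$; $R_2\le 1-H_b(\varepsilon_3)$ is symmetric. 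For $R_1+R_2\le 1-H_b(\varepsilon_i)$: receiver~$i$ decodes both messages from $Y_i^n$, so $n(R_1+R_2)\le I(W_1,W_2;Y_i^n)+n\epsilon_n\le\sum_j\bigl[H(Y_{i,j})-H(Y_{i,j}\mid W_1,W_2,Y_i^{j-1},Z_3^{\,j-1})\bigr]+n\epsilon_n\le n\bigl(1-H_b(\varepsilon_i)\bigr)+n\epsilon_n$, using that $(X_{i,j},X_{3,j})$ is a function of $(W_1,W_2,Z_3^{\,j-1})$ while $Z_{i,j}$ is independent of it.

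For achievability I would use $B$ blocks of length $n$; assume w.l.o.g.\ $R_1\ge R_2$ and put $k=\lceil nR_1\rceil$. Pick two $n\times k$ binary matrices $G$ and $G_3$; in block $b$ transmitter $i$ sends the codeword $G\bar W_{i,b}$, where $\bar W_{i,b}$ is $W_{i,b}$ zero‑padded to $k$ bits. By linearity the relay then observes $Y_3^n(b)=G\bigl(\bar W_{1,b}\oplus\bar W_{2,b}\bigr)\oplus Z_3^n(b)=G S_b\oplus Z_3^n(b)$, so it decodes only the sum $S_b=\bar W_{1,b}\oplus\bar W_{2,b}$ --- one codeword of the rate‑$R_1$ code $G$ over BSC$(\varepsilon_3)$, reliable when $\max\{R_1,R_2\}<1-H_b(\varepsilon_3)$ --- and in block $b+1$ it transmits $G_3 S_b$ (with $S_0\equiv 0$). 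Receiver~$1$ then does forward sliding‑window decoding: knowing blocks $<b$ it knows $S_{b-1}$, cancels $G_3 S_{b-1}$ from $Y_1^n(b)$ and decodes $\bar W_{1,b}$ from $G\bar W_{1,b}\oplus Z_1^n(b)$; using the just‑decoded $\bar W_{1,b}$ to strip the \emph{known} part of $G_3 S_b=G_3\bar W_{1,b}\oplus \hat G_3 W_{2,b}$ (with $\hat G_3$ the first $\lceil nR_2\rceil$ columns of $G_3$) from $Y_1^n(b+1)$, it then jointly decodes $(\bar W_{1,b+1},W_{2,b})$, i.e.\ a single codeword of the linear code generated by $[\,G\mid\hat G_3\,]$ --- of rate $R_1+R_2$ over BSC$(\varepsilon_1)$ --- which in particular yields $W_{2,b}$ provided $R_1+R_2<1-H_b(\varepsilon_1)$ and $[\,G\mid\hat G_3\,]$ has full column rank. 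Receiver~$2$ is symmetric, needing $R_1+R_2<1-H_b(\varepsilon_2)$. A random choice of $(G,G_3)$ makes the finitely many relevant submatrices simultaneously good channel codes at their respective rates and of full column rank (union bound); letting $n\to\infty$ and then $B\to\infty$ closes the gap to the region.

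I expect the main obstacle to be the achievability bookkeeping, on two points. First, the Elias‑type claim that a single random linear code --- together with every submatrix of it invoked above ($G$, $[\,G\mid\hat G_3\,]$, and the analogous matrix at receiver~$2$) --- simultaneously attains the relevant BSC capacity at its own rate and is injective, so that decoding a combined codeword recovers each constituent message. Second, and more conceptually, ordering the interference cancellation so that it is the \emph{sum}‑rate $R_1+R_2$, rather than $2\max\{R_1,R_2\}$, that governs the joint‑decoding step: this is precisely why each receiver must first peel off its own direct‑link message and use it to remove the relay's ``stale'' (previous‑block) contribution before decoding the cross message. The converse, by contrast, is routine cut‑set/Fano bookkeeping specialized to the modulo‑additive channel.
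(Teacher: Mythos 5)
Your proposal is correct and follows essentially the same route as the paper: the achievability scheme is the paper's almost verbatim (zero-padded identical linear codes at the two sources, the relay decoding only the XOR of the padded messages and re-encoding it with an independent linear generator, and block-Markov operation in which each receiver cancels the known part of the relay signal and jointly decodes the fresh direct message with the stale cross message over a binary-symmetric MAC, the random-linear-code optimality being exactly the point the paper settles by citation). The only cosmetic difference is that your converse re-derives the bounds directly at the $n$-letter level via Fano and the causal dependence of $X_3^n$ on $Y_3^{n-1}$, whereas the paper specializes its general outer bound (Proposition \ref{p:outerbound}) to the channel (\ref{bsc}) and verifies that independent uniform inputs attain it -- the underlying cut-at-the-relay argument is the same.
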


\begin{proof}
The proof can be found in Appendix \ref{app:binary}.
\end{proof}

For comparison, the rate region achievable with the DF scheme given in
(\ref{ach:DF}) is given by (\ref{binary example}) with the additional
constraint
\[
R_{1}+R_{2}\leq1-H_{b}(\varepsilon_{3}),
\]
showing that the DF scheme achieves the capacity (\ref{binary example}) only
if $\varepsilon_{3}\leq\min\{\varepsilon_{1},\varepsilon_{2}\}$. The
suboptimality of DF\ follows from the fact that the relay terminal needs
to decode only the binary sum of the messages, rather than the individual messages
sent by the source terminals. In fact, in the achievability scheme leading to
(\ref{binary example}), the binary sum is decoded at the relay and broadcast
to the receivers, which can then decode both messages using this binary sum.

\section{Gaussian Channels}\label{s:Gaussian}

In this section, we focus on the Gaussian channel setup and find the Gaussian
counterparts of the rate regions characterized in Section \ref{s:MAC_cogrel}
and Section \ref{s:cMACr}. We will also quantify the gap between the
inner and outer bounds for the capacity region of the cMACr proposed in
Section \ref{s:cMACr}. As done in\ Sec. \ref{s:MAC_cogrel}, we first
deal with the MAC with a cognitive relay model.

\subsection{Gaussian MAC with a Cognitive Relay}

We first consider the Gaussian MAC with a cognitive relay setup. The multiple
access channel at time $i$, $i=1,\ldots,n$, is characterized by the relation
\begin{equation}
Y_{i}=X_{1i}+X_{2i}+X_{3i}+Z_{i},\label{GMAC}%
\end{equation}
where $Z_{i}$ is the channel noise at time $i$, which is i.i.d. zero-mean
Gaussian with unit variance. We impose a separate average block power
constraint on each channel input:
\begin{equation}
\frac1{n}\sum\limits_{i=1}^{n}E[X_{ji}^{2}]\leq P_{j}\label{power constraints}%
\end{equation}
for $j=1,2,3$. The capacity region for this Gaussian model can be
characterized as follows.

\begin{prop}
\label{prop:6} The capacity region of the Gaussian MAC with a cognitive relay
(\ref{GMAC}) with power constraints (\ref{power constraints}) is the union of
all rate triplets $(R_{1},R_{2},R_{3})$ satisfying
\begin{subequations}
\label{G cognitive}%
\begin{align}
R_{3} &  \leq\frac12\log(1+(1-\alpha_{3}^{\prime}-\alpha_{3}^{\prime\prime
})P_{3}), \\
R_{1}+R_{3} &  \leq\frac12\log(1+P_{1}+(1-\alpha_{3}^{\prime\prime}%
)P_{3}+2\sqrt{\alpha_{3}^{\prime}P_{1}P_{3}}), \\
R_{2}+R_{3} &  \leq\frac12\log(1+P_{2}+(1-\alpha_{3}^{\prime})P_{3}%
+2\sqrt{\alpha_{3}^{\prime\prime}P_{2}P_{3}}))
\end{align}
and
\begin{align}
R_{1}+R_{2}+R_{3} &  \leq\frac12\log(1+P_{1}+P_{2}+P_{3}+2\sqrt{\alpha
_{3}^{\prime}P_{1}P_{3}}+2\sqrt{\alpha_{3}^{\prime\prime}P_{2}P_{3}}),
\end{align}
where the union is taken over all parameters $0\leq\alpha_{3}^{\prime}%
,\alpha_{3}^{\prime\prime}\leq1$ and $\alpha_{3}^{\prime}+\alpha_{3}%
^{\prime\prime}\leq1.$
\end{subequations}
\end{prop}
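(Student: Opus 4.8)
The plan is to derive this as the Gaussian specialization of Proposition~\ref{prop:1}. First I would set up the Gaussian auxiliary random variables: since the relay knows both $W_1$ and $W_2$, it can coherently combine with each transmitter. Write $X_1 = \sqrt{P_1}\,V_1$, $X_2 = \sqrt{P_2}\,V_2$ with $V_1,V_2$ independent unit-variance Gaussians, and split the relay power as $X_3 = \sqrt{\alpha_3'P_3}\,V_1 + \sqrt{\alpha_3''P_3}\,V_2 + \sqrt{(1-\alpha_3'-\alpha_3'')P_3}\,V_3$, where $V_3$ is an independent unit-variance Gaussian and $\alpha_3',\alpha_3''\ge 0$, $\alpha_3'+\alpha_3''\le 1$. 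The auxiliary variables of Proposition~\ref{prop:1} are taken as $U_1 = V_1$, $U_2 = V_2$ (so that $X_3$ is a deterministic function of $U_1,U_2$ and the private part $V_3$), and the time-sharing variable $Q$ is set to be constant; one checks this satisfies the required factorization $p(x_1,u_1|q)p(x_2,u_2|q)p(x_3|u_1,u_2,q)$. Then I would simply evaluate the four mutual information terms in \eqref{region cognitive} under this jointly Gaussian distribution, using that for Gaussians the conditional mutual informations reduce to $\frac12\log$ of the ratio of conditional variances, which gives exactly \eqref{G cognitive}.

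For the achievability direction the above evaluation suffices, together with the standard argument that a Gaussian input distribution (rather than the bounded alphabets of the discrete model) is admissible for Gaussian channels with power constraints — this is routine and I would cite the usual discretization/quantization argument (e.g.\ as in \cite{Cover:IT:79}). Note that the conditional sum constraint $R_1+R_3\le I(X_1,X_3;Y|X_2,U_2,Q)$ becomes $\frac12\log(1+P_1+(1-\alpha_3'')P_3+2\sqrt{\alpha_3'P_1P_3})$ because conditioning on $X_2$ and $U_2=V_2$ removes the $\sqrt{P_2}V_2$ and $\sqrt{\alpha_3''P_3}V_2$ contributions from $Y$, leaving a coherent combination of $V_1$ with correlation coefficient $\sqrt{\alpha_3'P_1P_3}$ between the useful components; the other bounds follow symmetrically, and the first bound $R_3\le\frac12\log(1+(1-\alpha_3'-\alpha_3'')P_3)$ comes from additionally conditioning on $X_1,U_1$.

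The converse requires showing that no other joint distribution can do better, i.e.\ that restricting to the jointly Gaussian form above is without loss of optimality. The main obstacle is exactly this step: one must argue that for any admissible input distribution satisfying the power constraints and the Markov structure $p(x_1,u_1)p(x_2,u_2)p(x_3|u_1,u_2)$, each of the four mutual information expressions in \eqref{region cognitive} is upper bounded by the corresponding Gaussian expression for a suitable choice of $\alpha_3',\alpha_3''$. The standard tool is the maximum-entropy property of the Gaussian: for each bound, identify the relevant correlation coefficients $\rho_1 = \mathrm{E}[X_1X_3]/\sqrt{P_1P_3}$ and $\rho_2 = \mathrm{E}[X_2X_3]/\sqrt{P_2P_3}$ induced by the given distribution, set $\alpha_3' = \rho_1^2$ and $\alpha_3'' = \rho_2^2$ (the independence of $(X_1,U_1)$ from $(X_2,U_2)$ forces $\mathrm{E}[X_1X_2]=0$ and keeps the cross-correlations controlled so that $\alpha_3'+\alpha_3''\le 1$ follows from the positive-semidefiniteness of the covariance matrix of $(X_1,X_2,X_3)$), and then bound each differential entropy $h(Y\mid\cdot)$ by the Gaussian with the matched conditional variance while lower-bounding the subtracted conditional entropy terms by the noise entropy. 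I would carry out this covariance-matching argument term by term, and flag the verification that the same pair $(\alpha_3',\alpha_3'')$ works simultaneously for all four constraints — which it does, since each constraint only depends on the input distribution through its second-order statistics — as the one place where a little care is needed; the remaining computations are routine and I would relegate them to the appendix.
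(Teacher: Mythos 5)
Your proposal is correct, but its converse takes a genuinely different route from the paper's. The paper (Appendix G) follows the Gaussian-MAC-with-conferencing argument of Bross, Lapidoth and Wigger: for each $Q=q$ it replaces the auxiliaries by $V_{j}=E[X_{j}|U_{j}]-E[X_{j}]$, forms the jointly Gaussian tuple with the same covariance as $(V_{1},V_{2},X_{1},X_{2},X_{3})$, and invokes Lemmas 5--6 of \cite{Bross} to check that this Gaussian tuple still factorizes as (\ref{pmf}) --- a step the paper explicitly flags as \emph{not} following from the conditional maximum entropy theorem alone, since a covariance matrix compatible with the Markov structure need not admit a Gaussian law with that structure; it then removes $Q$ by concavity and optimizes the Gaussian parametrization (showing $\alpha_{1}=\alpha_{2}=1$ is optimal). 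You sidestep the construction of a Gaussian auxiliary structure entirely: you bound each of the four mutual informations in (\ref{region cognitive}) directly, via the conditional maximum entropy bound together with the fact that the MMSE given $(X$'s$,U$'s$)$ is at most the \emph{linear} MMSE given the $X$'s alone, so the auxiliaries can simply be discarded at the estimation step; choosing $\alpha_{3}'=\rho_{1}^{2}$, $\alpha_{3}''=\rho_{2}^{2}$ with $\rho_{j}=E[X_{j}X_{3}]/\sqrt{P_{j}P_{3}}$ then reproduces exactly the right-hand sides of (\ref{G cognitive}), and positive semidefiniteness of the covariance of $(X_{1},X_{2},X_{3})$ (using $E[X_{1}X_{2}]=0$) gives $\alpha_{3}'+\alpha_{3}''\leq1$. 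This is more elementary and avoids the delicate factorization issue; what the paper's route buys is a stronger structural statement (jointly Gaussian inputs exhaust the single-letter region), which it reuses when evaluating the DF region and outer bound for the Gaussian cMACr. Three points in your write-up need tightening: (i) the claim that ``each constraint only depends on the input distribution through its second-order statistics'' is false as stated --- what is true is that each constraint is \emph{upper-bounded} by a function of the second-order statistics of $(X_{1},X_{2},X_{3})$ once the $U$'s are dropped in the estimation step; (ii) $E[X_{1}X_{2}]=0$ requires a (WLOG) zero-mean normalization, since independence alone only gives $E[X_{1}X_{2}]=E[X_{1}]E[X_{2}]$; (iii) the converse must also dispose of the time-sharing variable $Q$, either by working with unconditional second moments after per-$q$ mean removal, or by the concavity argument the paper uses --- routine, but it should be stated.
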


\begin{proof}
The proof can be found in Appendix \ref{app:prop6}.
\end{proof}

Notice that $\alpha_{3}^{\prime}$ and $\alpha_{3}^{\prime\prime}$ in
(\ref{G cognitive}) represent the fraction of total power invested by the
cognitive relay to help transmitter 1 and transmitter 2, respectively. Next,
we present the capacity region for the Gaussian partially cognitive relay
setup of (\ref{partially cognitive}).

\begin{prop}
The capacity region of the Gaussian MAC with a partially cognitive relay
(informed only of the message $W_{1})$ is given by
\begin{subequations}
\label{G partial}%
\begin{align}
R_{2} &  \leq\frac12\log(1+P_{2}), \\
R_{3} &  \leq\frac12\log(1+(1-\rho^{2})P_{3}), \\
R_{1}+R_{3} &  \leq\frac12\log(1+P_{1}+P_{3}+2\rho\sqrt{P_{1}P_{3}}), \\
R_{2}+R_{3} &  \leq\frac12\log(1+P_{2}+(1-\rho^{2})P_{3})
\end{align}
and
\begin{align}
R_{1}+R_{2}+R_{3} &  \leq\frac12\log(1+P_{1}+P_{2}+P_{3}+2\rho\sqrt{P_{1}%
P_{3}})\label{sum-rate partial}%
\end{align}
with the union taken over the parameter $0\leq\rho\leq1.$
\end{subequations}
\end{prop}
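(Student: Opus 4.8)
The plan is to derive this as the Gaussian specialization of the partially cognitive relay capacity region in Proposition~\ref{prop:2}, paralleling the argument for Proposition~\ref{prop:6}. Recall from Proposition~\ref{prop:2} that the capacity region is given by (\ref{partially cognitive}) with input distribution $p(q)p(x_{2}|q)p(x_{1},x_{3}|q)$. For the Gaussian channel (\ref{GMAC}) with power constraints (\ref{power constraints}), the first step is to argue that jointly Gaussian inputs with a time-sharing variable $Q$ suffice, and then to reduce the description to a single correlation parameter. Since the relay cooperates only with transmitter~$1$, we set $X_{2}$ independent of $(X_{1},X_{3})$, take $X_{1}\sim\mathcal{N}(0,P_{1})$, $X_{2}\sim\mathcal{N}(0,P_{2})$, and let $X_{3}$ be correlated with $X_{1}$ through a parameter $\rho\in[0,1]$, i.e. $E[X_{1}X_{3}]=\rho\sqrt{P_{1}P_{3}}$ with $\mathrm{Var}(X_{3})=P_{3}$. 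The usual concavity/maximum-entropy argument (entropy-power or the fact that Gaussian maximizes conditional differential entropy under a covariance constraint) shows that, for each mutual-information expression in (\ref{partially cognitive}), the Gaussian choice is optimal and a scalar $Q$ is unnecessary once we optimize over $\rho$; this collapses the region to a single auxiliary parameter.

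Next I would evaluate each of the five bounds in (\ref{partially cognitive}) under this Gaussian input. The key observation, exactly as in the motivation following Proposition~\ref{prop:6}, is that conditioning on $X_{1}$ and $X_{3}$ (which are jointly Gaussian) leaves only the $X_{2}+Z$ contribution, so $I(X_{2};Y|X_{1},X_{3},Q)=\frac12\log(1+P_{2})$. For $R_{3}\leq I(X_{3};Y|X_{1},X_{2},Q)$, conditioning on $X_{1},X_{2}$ removes those terms and leaves the conditional variance of $X_{3}$ given $X_{1}$, which is $(1-\rho^{2})P_{3}$, giving $\frac12\log(1+(1-\rho^{2})P_{3})$. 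For $R_{1}+R_{3}\leq I(X_{1},X_{3};Y|X_{2},Q)$, conditioning on $X_{2}$ leaves $X_{1}+X_{3}+Z$ whose variance is $P_{1}+P_{3}+2\rho\sqrt{P_{1}P_{3}}$, yielding $\frac12\log(1+P_{1}+P_{3}+2\rho\sqrt{P_{1}P_{3}})$. The bound $R_{2}+R_{3}\leq I(X_{2},X_{3};Y|X_{1},Q)$: conditioning on $X_{1}$, the residual signal is $X_{2}+(X_{3}-E[X_{3}|X_{1}])+Z$ with variance $P_{2}+(1-\rho^{2})P_{3}+1$, giving $\frac12\log(1+P_{2}+(1-\rho^{2})P_{3})$. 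Finally $R_{1}+R_{2}+R_{3}\leq I(X_{1},X_{2},X_{3};Y|Q)=\frac12\log(1+P_{1}+P_{2}+P_{3}+2\rho\sqrt{P_{1}P_{3}})$. This reproduces (\ref{G partial}) verbatim, and the union over $\rho\in[0,1]$ follows from the correlation coefficient constraint.

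For the converse side, the argument is that Proposition~\ref{prop:2} already gives a converse in single-letter form for the DMC; it suffices to observe that the Gaussian channel is memoryless and that the power constraints (\ref{power constraints}) translate via Jensen's inequality into second-moment constraints on the single-letterized auxiliary variables. One then applies the maximum differential entropy lemma to upper bound each mutual-information term by its Gaussian value, and finally invokes the standard argument (as for the Gaussian MAC) that an arbitrary covariance structure between the effective per-letter inputs can be parameterized by a single correlation $\rho$ without loss, because only the quantity $E[X_{1}X_{3}]$ enters the bounds. Combining achievability and converse establishes the claimed region. The one subtlety — the step I expect to require the most care — is verifying that the same parameter $\rho$ simultaneously governs all five bounds in the converse (rather than allowing distinct per-constraint correlations), which rests on the fact that each bound is monotone in $|E[X_{1}X_{3}]|$ in the same direction; checking this monotonicity jointly (note $(1-\rho^2)P_3$ decreases while $2\rho\sqrt{P_1P_3}$ increases, yet in every constraint the increasing term dominates the region boundary appropriately, or the decreasing one is absent) is the crux, and it is handled exactly as in the proof of Proposition~\ref{prop:6} in Appendix~\ref{app:prop6}, which is why the proof is relegated to the appendix.
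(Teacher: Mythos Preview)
Your approach is correct and essentially the same as the paper's: specialize Proposition~\ref{prop:2} to the Gaussian channel via the conditional maximum entropy theorem, with $\rho$ the correlation coefficient between $X_1$ and $X_3$. The one place you over-complicate is in invoking the machinery of Appendix~\ref{app:prop6} and worrying about a ``subtlety'' with a single $\rho$ governing all constraints; that subtlety arises in Proposition~\ref{prop:6} only because of the auxiliary variables $U_1,U_2$, whereas Proposition~\ref{prop:2} has none, so here the conditional maximum entropy theorem applies directly to $(X_1,X_2,X_3)$ and the sole free parameter is automatically the single correlation $E[X_1X_3]/\sqrt{P_1P_3}$.
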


\begin{proof}
The result follows straightforwardly from (\ref{partially cognitive}) and the conditional maximum
entropy theorem by defining $\rho$ as the correlation coefficient between
$X_{1}$ and $X_{3}.$
\end{proof}

Notice that, the same arguments as above can also be extended to the MAC with cognition via
finite-capacity links of Proposition \ref{prop:conf}.

\subsubsection{Numerical Examples}

For clarity of the presentation we consider $R_{3}=0.$ In this case, it is
clear that the choice $\alpha_{3}^{\prime}+\alpha_{3}^{\prime\prime}=1$ is
optimal for (\ref{G cognitive}) and $\rho=1$ is optimal in (\ref{G partial}).
Fig. \ref{regionrelay} shows the capacity regions with full or partial
cognition ((\ref{G cognitive}) and (\ref{G partial}), respectively) for
$P_{1}=P_{2}=3~\mathrm{dB}$ and for different values of $P_{3}$, namely
$P_{3}=-6~\mathrm{dB}$ and $3~\mathrm{dB}$. It can be observed from Fig.
\ref{regionrelay} that, even with a small power $P_{3}$, a cognitive relay has
the potential for significantly improving the achievable rate regions.
Moreover, in the partially cognitive case, this advantage is accrued not only
by the transmitter that directly benefits from cognition (here transmitter 1)
but also by the other transmitter (transmitter 2), due to the fact that
cognition is able to boost the achievable sum-rate (see
(\ref{sum-rate partial})).

\begin{figure}[ptb]
\begin{center}
\includegraphics[width=4.2644in]
{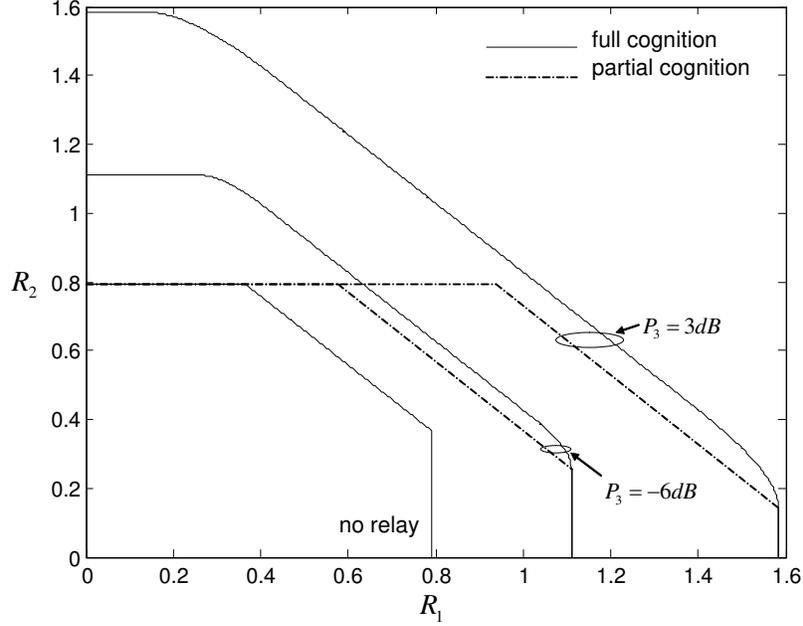}
\end{center}
\caption{Capacity regions of the Gaussian MAC with a cognitive relay with full
or partial cognition ((\ref{G cognitive}) and (\ref{G partial}), respectively)
for $P_{1}=P_{2}=3~\mathrm{dB}$ and for different values of $P_{3}$, namely
$P_{3}=-6~\mathrm{dB}$ and $3~\mathrm{dB.}$}
\label{regionrelay}
\end{figure}

We now consider a typical cognitive radio scenario where the two
\textquotedblleft primary\textquotedblright\ users, transmitter 1 and
transmitter 2, transmit at rates $R_{1}$ and $R_{2}$, respectively, within the
standard MAC capacity region with no relay (i.e., ($R_{1},R_{2}$) satisfy
(\ref{G cognitive}) with $R_{3}=0$ and $P_{3}=0)$ and are oblivious to the
possible presence of a cognitive node transmitting to the same receiver$.$ By
assumption, the cognitive node can rapidly acquire the messages of the two
active primary users (exploiting the better channel from the primary users as
compared to the receiver) and is interested in transmitting at the maximum
rate $R_{3}$ that does not affect the rates achievable by the primary users.
In other words, the rate $R_{3}$ is selected so as to maximize $R_{3}$ under
the constraint that $(R_{1},R_{2},R_{3})$ still belongs to the capacity region
(the one characterized by (\ref{G cognitive}) for full cognition and by
(\ref{G partial}) for partial cognition). Fig. \ref{maxsecrate} shows such a rate
$R_{3} $ for both full and partial cognitive relays for $P_{1}=P_{2}=3dB$ and
two different primary rate pairs, namely $R_{1}=R_{2}=0.3$ and $R_{1}%
=R_{2}=0.55$ (which is close to the sum-rate boundary as shown in Fig.
\ref{region cognitive}). It is seen that both full and partial cognition
afford remarkable achievable rates even when the primary users select rates at
the boundary of their allowed rates.

\begin{figure}[ptb]
\begin{center}
\includegraphics[width=4.7928in]{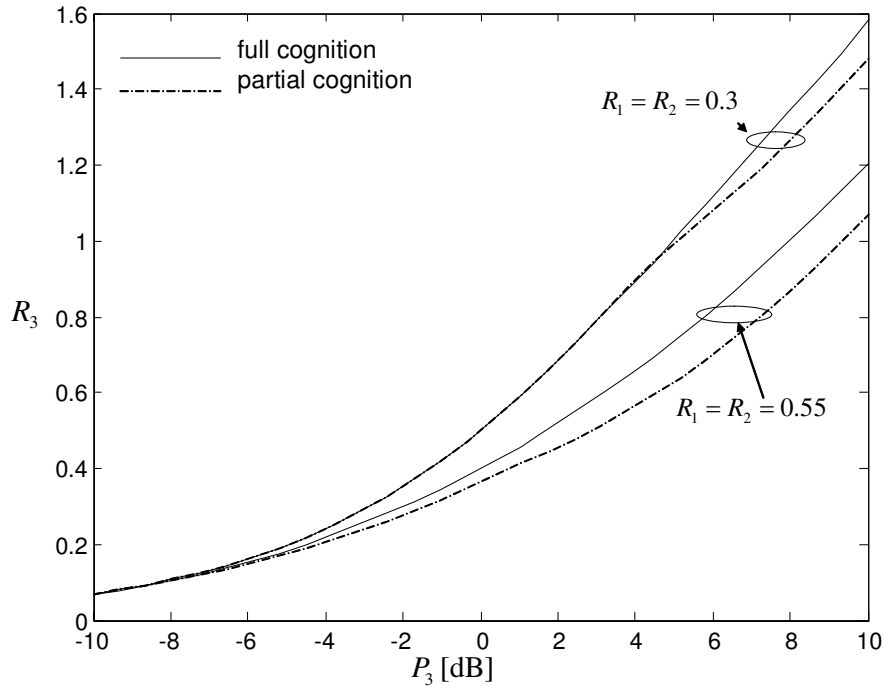}
\end{center}
\caption{Maximum rate $R_{3}$ that does not affect the rates achievable by the
primary users 1 and 2 for $P_{1}=P_{2}=3dB$ and $R_{1}=R_{2}=0.3$ or
$R_{1}=R_{2}=0.55$ ($R_{3}$ is the maximum relay rate so that $(R_{1}%
,R_{2},R_{3})$ still belongs to the capacity region ((\ref{G cognitive}) for
full cognition and ( \ref{G partial}) for partial cognition). }%
\label{maxsecrate}
\end{figure}

\subsection{Gaussian Compound MAC with a Relay}\label{p:GcMACr}

A Gaussian cMACr\ satisfying the Markov conditions (\ref{markov}) is given by
\begin{subequations}
\label{G mac relay}%
\begin{align}
Y_{1} &  =X_{1}+\eta X_{3}+Z_{1}\\
Y_{2} &  =X_{2}+\eta X_{3}+Z_{2}\\
Y_{3} &  =\gamma(X_{1}+X_{2})+Z_{3},
\end{align}
where $\gamma\geq0$ is the channel gain from the users to the relay and
$\eta\geq0$ is the channel gain from the relay to both receiver 1 and receiver
2. The noise components $Z_{i}$, $i=1,2,3$ are i.i.d. zero-mean unit variance
Gaussian random variables. We enforce the average power constraints given in
(\ref{power constraints}). Considering for simplicity the case $R_{3}=0,$ we
have the following result.
\end{subequations}
\begin{prop}
The following rate region is achievable for the Gaussian cMACr characterized
by (\ref{G mac relay}) by using the DF strategy:
\begin{subequations}
\label{region compound mac}%
\begin{align}
R_{1} &  \leq\min\left\{
\begin{array}
[c]{l}%
\frac12\log\left(  1+\gamma^{2}P_{1}\left(  1-\frac{\alpha_{1}\alpha
_{3}^{\prime}}{1-\alpha_{2}\alpha_{3}^{\prime\prime}}\right)  \right)  ,\\
\frac12\log\left(  1+P_{1}+\eta^{2}P_{3}(1-\alpha_{3}^{\prime\prime})\right)
\end{array}
\right\}, \\
R_{2} &  \leq\min\left\{
\begin{array}
[c]{l}%
\frac12\log\left(  1+\gamma^{2}P_{2}\left(  1-\frac{\alpha_{2}\alpha
_{3}^{\prime\prime}}{1-\alpha_{1}\alpha_{3}^{\prime}}\right)  \right)  ,\\
\frac12\log\left(  1+P_{2}+\eta^{2}P_{3}(1-\alpha_{3}^{\prime})\right)
\end{array}
\right\}
\end{align}
and
\begin{align}
R_{1}+R_{2} &  \leq\min\left\{
\begin{array}
[c]{l}%
\frac12\log\left(  1+\gamma^{2}(P_{1}+P_{2})\left(  1-\frac{(\sqrt{\alpha
_{1}\alpha_{3}^{\prime}P_{1}}+\sqrt{\alpha_{2}\alpha_{3}^{\prime\prime}P_{2}%
})^{2}}{P_{1}+P_{2}}\right)  \right)  ,\\
\frac12\log\left(  1+P_{1}+\eta^{2}P_{3}+2\eta\sqrt{\alpha_{1}\alpha
_{3}^{\prime}P_{1}P_{3}}\right)  ,\\
\frac12\log\left(  1+P_{2}+\eta^{2}P_{3}+2\eta\sqrt{\alpha_{1}\alpha
_{3}^{\prime\prime}P_{2}P_{3}}\right)
\end{array}
\right\}  ,\label{sum rate gaussian}%
\end{align}
with the union taken over the parameters $0\leq\alpha_{1},\alpha_{2},\alpha
_{3}^{\prime},\alpha_{3}^{\prime\prime}\leq1$ and $\alpha_{3}^{\prime}%
+\alpha_{3}^{\prime\prime}\leq1.$ Moreover, an outer bound to the capacity
region is given by (\ref{region compound mac}) without the first sum-rate
constraint in (\ref{sum rate gaussian}).
\end{subequations}
\end{prop}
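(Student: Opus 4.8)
The plan is to obtain the achievable region as a specialization of the general DF region in Proposition~\ref{p:achDF} to the Gaussian channel (\ref{G mac relay}) with $R_3=0$, and to obtain the outer bound by specializing Proposition~\ref{p:outerbound} to the same channel. In both cases the single-letter mutual information expressions must be evaluated under a jointly Gaussian input distribution, and the structure of that distribution is dictated by the factorization $p(q)p(x_1,u_1|q)p(x_2,u_2|q)p(x_3|u_1,u_2,q)$. Since $R_3=0$ and the relay carries no private message, I would drop the time-sharing variable $Q$ (a standard convexity argument: the Gaussian region is already convex after optimizing the power-split parameters) and take $U_1,U_2$ to be the parts of $X_3$ correlated with source~1 and source~2 respectively. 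Concretely, write $X_3=\sqrt{\alpha_3'P_3/(\alpha_1 P_1)}\,U_1+\sqrt{\alpha_3''P_3/(\alpha_2 P_2)}\,U_2+X_3'$, where $U_k$ is the component of $X_k$ that the relay coherently combines with (so $X_k=\sqrt{\alpha_k}\,V_k'+\sqrt{1-\alpha_k}\,V_k''$ with $U_k$ aligned to $\sqrt{\alpha_k P_k}$-scaled $V_k'$), the $\alpha_k$ measure how much of each source's power is ``beamformed'' through the relay, and $\alpha_3',\alpha_3''$ are the fractions of relay power allocated to helping source~1 and source~2, with $X_3'$ the independent ``fresh'' relay signal of power $(1-\alpha_3'-\alpha_3'')P_3$.

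Next I would plug this Gaussian choice into the six bounds of (\ref{ach:DF}). The MAC-to-relay constraints (\ref{ach 1})--(\ref{ach 3}) become, after conditioning on $U_1,U_2,X_3$, the mutual informations $I(X_1;Y_3|U_1,X_2,X_3)$, etc.; since $Y_3=\gamma(X_1+X_2)+Z_3$ and conditioning on $U_1$ removes the coherent part of $X_1$, the effective power of $X_1$ seen at the relay is $P_1(1-\alpha_1\alpha_3'/(1-\alpha_2\alpha_3''))$ after also accounting for the fact that $U_2$ is not conditioned away in the single-message bound — this is exactly the first term in the $R_1$ bound of (\ref{region compound mac}). (The slightly asymmetric denominator $1-\alpha_2\alpha_3''$ arises because $U_1$ and $U_2$ are correlated through $X_3$, so the conditional variance of the relay-aligned component of $X_1$ given $U_1$ must be computed jointly.) The broadcast-side constraints (\ref{ach 4})--(\ref{ach 7}) are evaluated at $Y_1=X_1+\eta X_3+Z_1$ and $Y_2=X_2+\eta X_3+Z_2$; using the Markov structure (\ref{markov}), $Y_1$ does not depend on $X_2,U_2$, so $I(X_1,X_3;Y_1|X_2,U_2)=I(X_1,X_3;Y_1|U_2)$ and one gets $\frac12\log(1+P_1+\eta^2 P_3(1-\alpha_3'')+\text{coherent cross terms})$; similarly the sum-rate bound at $Y_1$ gives $\frac12\log(1+P_1+\eta^2 P_3+2\eta\sqrt{\alpha_1\alpha_3'P_1P_3})$ and the symmetric expression at $Y_2$. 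The sum-rate bound (\ref{ach 3}) at the relay produces the term $\frac12\log(1+\gamma^2(P_1+P_2)(1-(\sqrt{\alpha_1\alpha_3'P_1}+\sqrt{\alpha_2\alpha_3''P_2})^2/(P_1+P_2)))$, which is the ``extra'' sum-rate constraint that, per the discussion after Proposition~\ref{p:outerbound}, is the only discrepancy between DF and the outer bound; dropping it gives precisely the claimed outer bound. For the converse direction, I would invoke Proposition~\ref{p:outerbound} and argue that the optimizing input distribution can be taken Gaussian: each single-letter bound there is of the form $I(\cdot;Y_k|\cdot)$ with $Y_k$ Gaussian given the inputs, so by the conditional maximum-entropy principle a Gaussian input simultaneously maximizes every term, and the parametrization by $(\alpha_1,\alpha_2,\alpha_3',\alpha_3'')$ exhausts all admissible covariance structures compatible with (\ref{pmf}).

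The main obstacle I anticipate is bookkeeping in the correlated-auxiliary computation rather than any conceptual difficulty: because $U_1$ and $U_2$ are both functions of the single relay input $X_3$ (so $U_1 \not\!\perp U_2$ in general), the conditional variances $\mathrm{Var}(X_1\mid U_1)$, $\mathrm{Var}(X_1,X_3\mid U_2)$ and the various cross-covariances entering the mutual-information formulas must be evaluated on the full joint Gaussian covariance matrix of $(X_1,X_2,X_3,U_1,U_2)$, and it takes care to see that the resulting expressions collapse to the clean forms in (\ref{region compound mac}) — in particular that the ratio $\alpha_1\alpha_3'/(1-\alpha_2\alpha_3'')$ is what survives. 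A secondary subtlety is justifying that $Q$ can be dropped: one should check that the union over the four power-split parameters already yields a convex region (or, more safely, retain $Q$ in the statement of the outer bound and note that the achievable region's convexity follows from a standard time-sharing argument over the parameter set), so that the inner and outer bounds can be written with the same parametrization and differ only in the single sum-rate term. Everything else — substituting into $\frac12\log(1+\mathrm{SNR})$, using $I(A;B|C)=h(B|C)-h(B|A,C)$ for Gaussians, and verifying the Markov simplifications from (\ref{markov}) — is routine.
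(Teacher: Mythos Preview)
Your overall strategy is exactly the paper's: specialize the DF region of Proposition~\ref{p:achDF} and the outer bound of Proposition~\ref{p:outerbound} to the Gaussian model~(\ref{G mac relay}) with $R_3=0$, argue that jointly Gaussian inputs suffice, plug the parametrization (\ref{def Xj})--(\ref{def X3}) into the single-letter bounds, and observe that inner and outer bounds differ only by the relay-MAC sum-rate term. So the plan is right.

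Two points need correction. First, your anticipated ``main obstacle'' rests on a mistaken reading of the factorization. In $p(x_1,u_1)p(x_2,u_2)p(x_3\mid u_1,u_2)$ the pair $(X_1,U_1)$ is independent of $(X_2,U_2)$; hence $U_1\perp U_2$ \emph{always}. The auxiliary variables are not ``functions of the single relay input $X_3$'' --- it is $X_3$ that is generated from $(U_1,U_2)$, not the reverse. With the paper's parametrization (\ref{def Xj})--(\ref{def X3}), $U_1,U_2,S_1,S_2,S_3$ are independent standard normals, and all the conditional variances fall out cleanly; there is no joint-covariance subtlety of the kind you describe. (The factor $1-\alpha_2\alpha_3''$ you highlight comes from the conditional variance of $X_1$ given $(X_2,X_3)$ alone, via the $2\times 2$ inverse of the $(X_2,X_3)$ covariance block, not from any dependence between $U_1$ and $U_2$.)

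Second, ``by the conditional maximum-entropy principle a Gaussian input simultaneously maximizes every term'' is not quite enough for the outer bound. The issue, flagged in the paper's proof of Proposition~\ref{prop:6} (Appendix~\ref{app:prop6}), is that a covariance matrix arising from variables satisfying the Markov factorization (\ref{pmf}) need not admit a \emph{Gaussian} tuple with the \emph{same} Markov structure. The paper handles this by the Bross--Lapidoth--Wigger device: replace $U_k$ by $V_k=E[X_k\mid U_k]-E[X_k]$ (which can only enlarge each conditional mutual information), and then verify that the resulting covariance does admit a Gaussian realization with the required factorization. You should invoke that argument rather than a bare appeal to maximum entropy. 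The removal of $Q$ then follows by concavity in the per-$q$ powers, as you say.
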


\begin{proof}
It is enough to prove that jointly Gaussian inputs are sufficient to exhaust
the DF achievable region (\ref{ach:DF})\ and the outer bound
(\ref{outer bound}). This can be done similarly to Proposition \ref{prop:6}.
Then, setting the random variables at hand as (\ref{def Xj})-(\ref{def X3})
(see Appendix \ref{app:prop6}) in (\ref{ach:DF})\ and (\ref{outer bound}) and
after some algebra the result can be derived.
\end{proof}

It is noted that, similarly to (\ref{G cognitive}), in
(\ref{region compound mac}) the parameters $\alpha_{3}^{\prime}$ and
$\alpha_{3}^{\prime\prime}$ represent the fractions of power that the relay
uses to cooperate with transmitter 1 and 2, respectively. Moreover, the first
term in each of the three $\min\{\cdot\}$ functions correspond to the
condition that the relay is able to decode the two messages, while the other
terms refer to constraints on decoding at the two receivers.

Next, we characterize the achievable rate region for the Gaussian setup with
the CF strategy of Proposition \ref{p:achCF}. Here, we assume a Gaussian
quantization codebook without claiming optimality.

\begin{prop}
The following rate region is achievable for the Gaussian cMACr
(\ref{G mac relay}):
%\begin{align}
%R_1  & \leq \min \left\{ \frac12 \log\left(  1 + \alpha_1 P_1 +
%\frac{\gamma^2 \alpha_1 P_1}{1+N_q} \right) , \frac12 \log\left(
%1+ \frac{\gamma^2 \alpha_1 P_1}{1+N_q} \right)  \right\} \nonumber\\
%R_2  & \leq \min\left\{ \frac12 \log\left(1 + \alpha_2 P_2 +
%\frac{\gamma^2 \alpha_2 P_2}{1+N_q} \right) , \frac12 \log\left(
%1+ \frac{\gamma^2 \alpha_2 P_2}{1+N_q} \right)  \right\} \nonumber \\
%R_1 + R_2 & \leq  \min \left\{ \frac12 \log\left(  1 + \alpha_1 P_1 +
%\frac{\gamma^2 \alpha_1 P_1}{1+N_{q}} \right) + \frac12 \log\left(1+
%\frac{\gamma^2 \alpha_2 P_2}{1+N_{q}} \right) , \right. \nonumber \\
%&~~~~~~~~~~~ \left. \frac12 \log\left(  1 + \alpha_2 P_2 + \frac{\gamma^2 \alpha_2 P_2}{1+N_q} \right) + \frac12 \log\left(
%1+ \frac{\gamma^2 \alpha_1 P_1}{1+N_q} \right)  \right\} \nonumber
%\end{align}%
\begin{subequations}
\label{ach:CF}%
\begin{align}
R_{1} &  \leq\frac12\log\left(  1+\frac{\gamma^{2}\alpha_{1}P_{1}}{1+N_{q}%
}\right)
\end{align}
and
\begin{align}
R_{2} &  \leq\frac12\log\left(  1+\frac{\gamma^{2}\alpha_{2}P_{2}}{1+N_{q}%
}\right)
\end{align}
where
\end{subequations}
\[
N_{q}=\frac{1+\gamma^{2}(\alpha_{1}P_{1}\alpha_{2}P_{2}+\alpha_{1}P_{1}%
+\alpha_{2}P_{2})+\min\left\{  \alpha_{1}P_{1},\alpha_{2}P_{2}\right\}  }%
{\eta^{2}P_{3}},
\]
for all $0\leq\alpha_{i}\leq1$, $i=1,2$.
\end{prop}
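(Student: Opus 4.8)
The plan is to specialize the general compress‑and‑forward inner bound of Proposition~\ref{p:achCF} (with $R_3=0$) to the Gaussian channel~(\ref{G mac relay}) by taking jointly Gaussian auxiliaries and a Gaussian additive quantizer, and then to evaluate every mutual‑information term in closed form. Concretely, I would take $Q$ deterministic, let $X_i\sim\mathcal{N}(0,\alpha_iP_i)$ for $i=1,2$ and $X_3\sim\mathcal{N}(0,P_3)$ be mutually independent (this respects the power constraints~(\ref{power constraints}) and the product form required by Proposition~\ref{p:achCF}), and set $\hat Y_3=Y_3+\hat Z$ with $\hat Z\sim\mathcal{N}(0,N_q)$ independent of everything else; since $Y_3=\gamma(X_1+X_2)+Z_3$ carries no $X_3$, conditioning the quantizer on $X_3$ is vacuous. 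As Proposition~\ref{p:achCF} is stated for a finite‑alphabet $\hat Y_3$, I would first invoke the routine extension to well‑behaved continuous alphabets (fine quantization of the Gaussian test channel followed by a limiting argument, exactly as in the standard Gaussian compress‑and‑forward analysis).

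For the rate constraints~(\ref{a:CF1})--(\ref{a:CF3}) I would use the Markov structure~(\ref{markov}). Given $(X_2,X_3)$, the output $Y_2=X_2+\eta X_3+Z_2$ is independent of $X_1$ and $\hat Y_3$ is conditionally independent of $Y_2$ given $(X_1,X_2,X_3)$, so $I(X_1;Y_2,\hat Y_3|X_2,X_3)=I(X_1;\hat Y_3|X_2,X_3)=\frac12\log\!\big(1+\gamma^2\alpha_1P_1/(1+N_q)\big)$, whereas the other branch $I(X_1;Y_1,\hat Y_3|X_2,X_3)$ additionally contains the direct observation $X_1+Z_1$ and is therefore no smaller; hence the minimum in~(\ref{a:CF1}) equals the claimed bound on $R_1$, and symmetrically for~(\ref{a:CF2}) and $R_2$. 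A short computation of the relevant $2\times2$ Gaussian determinants shows that the sum‑rate constraint~(\ref{a:CF3}) is not the active one for these choices, so the stated (rectangular) region is recovered.

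The core of the argument is the compression constraint. With $R_3=0$, condition~(\ref{a:cCF4}) becomes $I(Y_3;\hat Y_3|X_3,Y_1)\le I(X_3;Y_1)$. I would evaluate $I(X_3;Y_1)=\frac12\log\frac{1+\alpha_1P_1+\eta^2P_3}{1+\alpha_1P_1}$ and, writing the side information as $Y_1-\eta X_3=X_1+Z_1$ and using the MMSE of $\gamma X_1$ given $X_1+Z_1$, obtain $I(Y_3;\hat Y_3|X_3,Y_1)=\frac12\log\frac1{N_q}\!\left(N_q+1+\gamma^2\alpha_2P_2+\frac{\gamma^2\alpha_1P_1}{1+\alpha_1P_1}\right)$. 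Substituting and clearing denominators makes the $N_q(1+\alpha_1P_1)$ terms cancel, leaving a linear inequality in $N_q$; treating~(\ref{a:cCF5}) with $Y_2$ the same way and keeping $N_q$ at the smallest value consistent with both, one obtains (after the identity $\gamma^2(\alpha_1P_1\alpha_2P_2+\alpha_1P_1+\alpha_2P_2)=\gamma^2[(1+\alpha_1P_1)(1+\alpha_2P_2)-1]$) the value of $N_q$ in the statement. Plugging this $N_q$ back into the rate bounds finishes the proof.

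I expect the compression constraint to be the main obstacle: the side‑information (MMSE) term $\gamma^2\alpha_1P_1/(1+\alpha_1P_1)$ must be carried carefully through the algebra, one must identify which of the two Wyner--Ziv constraints (for $Y_1$ versus $Y_2$) is binding, and one must check that the chosen $N_q$ simultaneously keeps the rate bounds well defined and leaves~(\ref{a:CF3}) slack. The remaining ingredients — verifying the power constraints, checking the product form, and the continuous‑alphabet extension — are standard.
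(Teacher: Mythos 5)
Your overall route --- specializing Proposition \ref{p:achCF} with $R_3=0$, independent Gaussian inputs of powers $\alpha_1P_1,\alpha_2P_2,P_3$, and a Gaussian test channel $\hat Y_3=Y_3+\hat Z$, then evaluating all terms using the Markov structure (\ref{markov}) --- is exactly the intended derivation (the paper offers no separate proof of this proposition), and your closed-form evaluations of the individual-rate terms, of $I(X_3;Y_k)$ and of $I(Y_3;\hat Y_3|X_3,Y_k)$ are correct. Two of your concluding claims, however, do not hold.

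First, the compression step. Your own algebra gives, from (\ref{a:cCF4}), $N_q\ge\bigl[1+\alpha_1P_1+\gamma^2(\alpha_1P_1\alpha_2P_2+\alpha_1P_1+\alpha_2P_2)\bigr]/(\eta^2P_3)$, and from (\ref{a:cCF5}) the same expression with $\alpha_2P_2$ in place of $\alpha_1P_1$ in the second term. The smallest $N_q$ consistent with \emph{both} constraints therefore carries $\max\{\alpha_1P_1,\alpha_2P_2\}$ in the numerator, not $\min$ as in the statement; with the stated ($\min$) value of $N_q$, the constraint at the receiver whose own transmitter has the larger power is violated, so the quantization index cannot be recovered there. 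Hence the step ``one obtains the value of $N_q$ in the statement'' is not correct as written: what your argument actually establishes is the (slightly smaller) region with $\max$ in place of $\min$, and you should either prove that version or explicitly flag the discrepancy (the two coincide only when $\alpha_1P_1=\alpha_2P_2$, the symmetric case the paper uses later).

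Second, the claim that the sum-rate constraint (\ref{a:CF3}) is never active is false. With your choices, the receiver-1 branch of (\ref{a:CF3}) evaluates to $\frac12\log\bigl[(1+\alpha_1P_1)+\gamma^2(\alpha_1P_1\alpha_2P_2+\alpha_1P_1+\alpha_2P_2)/(1+N_q)\bigr]$, and requiring the corner point of the claimed rectangle to satisfy it reduces, after cancellation, to $\gamma^2\alpha_2P_2\bigl(\gamma^2-1-N_q\bigr)\le(1+N_q)^2$ (symmetrically for receiver 2). This fails whenever $\gamma^2$ exceeds $1+N_q$ sufficiently, e.g.\ a strong source-to-relay gain $\gamma^2\gg1$ together with a large $\eta^2P_3$ (which drives $N_q$ down); in that regime the sum-rate constraint strictly cuts into the rectangle, so the region cannot be stated without it (or without a restriction on the parameters). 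Your ``short computation of the relevant $2\times2$ determinants'' cannot show the constraint is always slack, because it is not.
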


%Another possible coding scheme in the case of Gaussian channels is to use
%amplify-and-forward (AF) relaying \cite{Maric:MILCOM:07}. In this scheme, the
%relay simply scales its received signal and transmits
%\[
%X_{3}[n] = \alpha Y_{3}[n-1],
%\]
%where $n$ is the time index, and we have a one unit delay at the relay
%terminal. The scaling factor $\alpha$ is chosen such that
%\[
%\alpha\leq\sqrt{\frac{P_{3}}{\gamma^{2}(P_{1}+P_{2})+1}}.
%\]
%This unit delay at the relay leads to a multiple access inter-symbol
%interference (ISI) channel at each of the destinations whose capacity is
%characterized in \cite{Cheng:IT:93}. Specialized to our channel, we have
%%\begin{subequations}\label{ach:AF}%
%\begin{align}
%C^{AF}  &  = \bigcup_{\substack{S_{1}(w), S_{2}(w) \in\mathds{R}
%\\S_{i}(w)\geq0~ \forall w\in[0,\pi] \\\frac{1}{\pi} \int_{0}^{\pi}S_{i}(w)dw
%\leq P_{i}, i=1,2 }} \bigcap_{i=1,2} \bigg\{ (R_{1}, R_{2}): R_{1}\geq0,
%R_{2}\geq0,\nonumber\\
%R_{1}  &  \leq\frac{1}{2\pi}\int_{0}^{\pi}\log[1+S_{1}(w)T_{1i}%
%(w)]dw,\nonumber\\
%R_{2}  &  \leq\frac{1}{2\pi}\int_{0}^{\pi}\log[1+S_{2}(w)T_{2i}%
%(w)]dw,\nonumber\\
%R_{1} + R_{2}  &  \leq\frac{1}{2\pi}\int_{0}^{\pi}\log[1+S_{1}(w)T_{1i}%
%(w)+S_{2}(w)T_{2i}(w)]dw \bigg\}\nonumber
%\end{align}
%where
%\begin{align}
%T_{11} = &  T_{22} = \frac{1+\alpha^{2} \gamma^{2} \eta^{2} + 2\alpha
%\gamma\eta\cos2\pi f}{1+\alpha^{2} \eta^{2}}
%\end{align}
%and
%\begin{align}
%T_{12} = &  T_{21} = \frac{\alpha^{2} \gamma^{2} \eta^{2} + 2\alpha\gamma
%\eta\cos2\pi f}{1+\alpha^{2} \eta^{2}}.
%\end{align}

\subsubsection{Using Structured Codes}

In Sec. \ref{sec:linear binary}, we have shown that for a binary additive
compound MAC with a relay, it is optimal to use structured (block linear) codes
rather than conventional unstructured (random) codes. The reason for this
performance advantage is that linear codes, when received by the relay over an
additive channel, enable the latter to decode the sum of the original messages
with no rate loss, without requiring joint decoding of the messages. Here, in
view of the additive structure of the Gaussian channel, we would like to
extend the considerations of Sec. \ref{sec:linear binary} to the scenario at
hand. For simplicity, we focus on a symmetric scenario where $P_{1}%
=P_{2}=P_{3}=P$, $R_{1}=R_{2}=R$ (and $R_{3}=0).$ Under such assumptions, the
outer bound of Proposition \ref{p:GcMACr} sets the following upper bound on
the equal rate $R$ (obtained by setting $\alpha_{3}^{\prime}=\alpha
_{3}^{\prime\prime}=\alpha_{3}$ and $\alpha_{1}=\alpha_{2}=\alpha$ in
(\ref{region compound mac})):%
\begin{align}
R &  \leq\max_{\substack{0\leq\alpha\leq1 \\0\leq\alpha_{3}\leq1}}\min\left\{
\frac{1}{2}\log\left(  1+\gamma^{2}P\left(  \frac{1-2\alpha\alpha_{3}%
}{1-\alpha\alpha_{3}}\right)  \right)  ,\frac{1}{2}\log\left(  1+P+\eta
^{2}P(1-\alpha_{3})\right)  ,\right. \nonumber\\
&  \left.  \frac{1}{4}\log\left(  1+P\left(  1+\eta^{2}+2\eta\sqrt
{\alpha\alpha_{3}}\right)  \right)  \right\}  ,\label{upper bound symm}%
\end{align}
whereas the rate achievable with DF is given by the right hand side
(\ref{upper bound symm}) with an additional term in $\min\{\cdot\}$ given by
$1/4\cdot\log\left(  1+2\gamma^{2}P\left(  1-2\alpha\alpha_{3}\right)
\right)  .$ The rate achievable by CF can be similarly found from
(\ref{ach:CF}) by setting $\alpha_{1}=\alpha_{2}=\alpha$ and maximizing over
$0\leq\alpha\leq1.$

As is well known, the counterpart of binary block codes over binary
additive channels in the case of Gaussian channels is given by lattice codes
which can achieve the Gaussian channel capacity in the limit of infinite block
lengths (see \cite{Erez:IT:04} for further details). A lattice is a discrete
subgroup of the Euclidean space $\mathds{R}^{n}$ with the vector addition
operation, and hence provides us a modulo sum operation at the relay terminal
similar to the binary case.

For the Gaussian cMACr setting given in (\ref{G mac relay}), we use the same
nested lattice code at both transmitters. Similar to the transmission
structure used in the binary setting, we want the relay terminal to
decode only the modulo sum of the messages, where the modulo operation is with
respect to a coarse lattice as in \cite{Wilson:IT:08}, whereas the messages
are mapped to a fine lattice, i.e., we use the nested lattice structure as in
\cite{Erez:IT:04}. The relay terminal then broadcasts the modulo sum of the
message points to both receivers. Each receiver decodes the message from the
transmitter that it hears directly and the modulo sum of the messages from
the relay as explained in Appendix \ref{app:Lattice}. Using these two, each
receiver can also decode the remaining message. We have the following rate
region that can be achieved by the proposed lattice coding scheme.

\begin{prop}\label{p:Lattice}
For the symmetric Gaussian cMACr characterized by
(\ref{G mac relay}), an equal rate $R$ can be achieved using a lattice
encoding/decoding scheme if
\begin{equation}\label{rate_achievable_lattice}
R \leq \min\left\{  \frac{1}{2}\log\left(  \frac{1}{2}+\gamma^{2}P\right)
,\frac{1}{2}\log\left(  1+P\min\{1,\eta^{2}\}\right)  ,\frac{1}{4}%
\log(1+P(1+\eta^{2}))\right\}  .
\end{equation}

\end{prop}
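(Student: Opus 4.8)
The plan is to adapt the nested-lattice ``compute-and-forward'' idea used for the two-way relay channel in \cite{Wilson:IT:08}, \cite{Nam:IZS:08}, \cite{Gunduz:All:08} to the symmetric cMACr, combined with a block-Markov schedule: both transmitters share a single nested lattice code, the relay decodes only the modulo-lattice \emph{sum} of the two transmitted codewords, and each receiver first recovers the message of the transmitter it hears directly together with this forwarded sum, and then cancels one from the other to obtain the remaining message. The three terms in (\ref{rate_achievable_lattice}) will come, respectively, from (a) the rate at which the relay can compute the lattice sum over the MAC it sees, (b) the individual-rate constraints of the Gaussian MAC seen at each receiver, and (c) its sum constraint specialized to equal rates.

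First I would fix the code. Take nested lattices $\Lambda_c\subset\Lambda_f$ in $\mathds{R}^n$ with $\Lambda_c$ good for shaping (second moment $P$) and $\Lambda_f$ Poltyrev-good, so that the nested codebook $\Lambda_f\cap\mathcal{V}(\Lambda_c)$ has rate $R$ approaching the stated bounds \cite{Erez:IT:04}; both transmitters use this \emph{same} codebook, via a bijection between $\mathcal{W}_i$ and $\Lambda_f\cap\mathcal{V}(\Lambda_c)$. Transmission runs over $B$ blocks. In block $b$, transmitter $i$ maps $W_i^{(b)}$ to a point $\lambda_i^{(b)}$ and sends $X_i^{(b)}=[\lambda_i^{(b)}+d_i^{(b)}]\bmod\Lambda_c$ with independent dithers uniform over $\mathcal{V}(\Lambda_c)$; by the Crypto Lemma each $X_i^{(b)}$ is uniform over $\mathcal{V}(\Lambda_c)$, has power $P$, and is independent of $\lambda_i^{(b)}$. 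Having computed at the end of block $b-1$ the sum index $T^{(b-1)}=[\lambda_1^{(b-1)}+\lambda_2^{(b-1)}]\bmod\Lambda_c$, the relay transmits in block $b$ a power-$P$ codeword $X_3^{(b)}$ carrying $T^{(b-1)}$ (an ordinary point-to-point code of rate $R$ suffices). Dummy symbols in the first and last blocks make the scheme causal at the cost of a rate factor $(B-1)/B\to1$.

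Then I would establish the two decoding steps. At the relay, from $Y_3^{(b)}=\gamma\big(X_1^{(b)}+X_2^{(b)}\big)+Z_3^{(b)}$ one forms $[\alpha Y_3^{(b)}-d_1^{(b)}-d_2^{(b)}]\bmod\Lambda_c$ with the MMSE coefficient $\alpha$; rescaling $Y_3^{(b)}$ by $\gamma$ so the effective noise variance is $1/\gamma^2$, the standard nested-lattice computation bound (cf. \cite{Erez:IT:04}, \cite{Wilson:IT:08}) shows this equals $T^{(b)}$ with probability $\to1$ as long as $R\le\tfrac12\log\!\big(\tfrac12+\gamma^2P\big)$, i.e. the first term of (\ref{rate_achievable_lattice}). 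At receiver $1$ (receiver $2$ being symmetric), $Y_1^{(b)}=X_1^{(b)}+\eta X_3^{(b)}+Z_1^{(b)}$ is a two-user Gaussian MAC with independent inputs $X_1^{(b)}$ (power $P$, conveying $W_1^{(b)}$ at rate $R$) and $\eta X_3^{(b)}$ (received power $\eta^2P$, conveying $T^{(b-1)}$ at rate $R$), the independence holding because $W_1^{(b)}$ is independent of everything sent in earlier blocks. Joint decoding of this MAC succeeds whenever $R\le\tfrac12\log(1+P)$, $R\le\tfrac12\log(1+\eta^2P)$ and $2R\le\tfrac12\log(1+P+\eta^2P)$; the first two combine to the second term of (\ref{rate_achievable_lattice}) and the third gives $R\le\tfrac14\log\big(1+P(1+\eta^2)\big)$. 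Once $W_1^{(b)}$ and the index $T^{(b-1)}$ are in hand, and since $\lambda_1^{(b-1)}$ is known from the previous block, receiver $1$ recovers $\lambda_2^{(b-1)}=[T^{(b-1)}-\lambda_1^{(b-1)}]\bmod\Lambda_c$, hence $W_2^{(b-1)}$. A union bound over the $B$ blocks makes the overall error probability vanish, and letting $n,B\to\infty$ yields (\ref{rate_achievable_lattice}).

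The hard part will be the two places where the lattice structure interacts with an AWGN argument: (i) making the relay's computation step precise, i.e. verifying that after MMSE scaling the effective noise is Gaussian-dominated with exactly the variance that produces the $\tfrac12+\gamma^2P$ term, which relies on the goodness of both lattices together with the Crypto Lemma; and (ii) justifying that the Gaussian MAC region invoked at the receivers is still attained when one of the two MAC inputs, $X_1^{(b)}$, is a nested-lattice codeword rather than i.i.d. Gaussian — this is handled either by exploiting that $X_1^{(b)}$ is uniform over $\mathcal{V}(\Lambda_c)$ (so it behaves like a good code of power $P$ for the receiver's threshold decoder) or by successive decoding of $\eta X_3^{(b)}$ and $X_1^{(b)}$ with a time-sharing argument to reach the symmetric-rate point; the full details are deferred to Appendix \ref{app:Lattice}.
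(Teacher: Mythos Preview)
Your high-level scheme coincides with the paper's: same nested-lattice code at both sources, block-Markov schedule, relay computes $[\lambda_1+\lambda_2]\bmod\Lambda_c$ under the constraint $R\le\tfrac12\log(\tfrac12+\gamma^2P)$, and each receiver treats its channel as a two-user Gaussian MAC with inputs $X_i$ (power $P$) and $\eta X_3$ (power $\eta^2P$), giving exactly the remaining two terms in (\ref{rate_achievable_lattice}).

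The one substantive gap is precisely the point you flag as (ii). The paper makes explicit that joint typicality decoding at the receiver is \emph{not} available here, because the source input $X_i$ is a dithered lattice point rather than an i.i.d.\ Gaussian. Your two proposed fixes are both problematic. The ``uniform over $\mathcal{V}(\Lambda_c)$'' heuristic is not by itself a proof that the full MAC pentagon is achievable. More importantly, your fallback of successive decoding plus \emph{time-sharing} to reach the symmetric-rate corner is exactly what the paper rules out: time-sharing forces the source to operate at different rates in different sub-blocks, and in the high-rate sub-block the relay's computation constraint $R\le\tfrac12\log(\tfrac12+\gamma^2P)$ becomes more restrictive, so the overall achievable $R$ drops below (\ref{rate_achievable_lattice}).

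The paper's fix is to use \emph{rate-splitting} at the relay (in the sense of \cite{Rimoldi:IT:96}): the relay splits its index into two virtual users with powers $\delta$ and $P-\delta$, and the receiver runs three rounds of successive single-user decoding (random code for the first relay layer, Euclidean lattice decoding for $X_i$, random code for the second relay layer). Because only single-user decoders are needed, the lattice input poses no difficulty, and because the \emph{source} rate is held fixed at $R$ throughout, the relay-side computation constraint is unaffected. Replacing your time-sharing suggestion with this rate-splitting argument closes the gap; everything else in your outline is correct and matches the paper.
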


\begin{proof}
The proof can be found in Appendix \ref{app:Lattice}.
\end{proof}

\begin{rem}
Achievability of (\ref{rate_achievable_lattice}), discussed in Appendix
\ref{app:Lattice}, requires transmission at rates corresponding to symmetric
rate point on the boundary of the MAC regions from each transmitter and the
relay to the corresponding receiver. However, here, of the two senders over
each MAC, one sender employs lattice coding (the sources), so that the standard
joint typicality argument fails to prove achievability of these rate points. The
problem is solved by noticing that, even in this scenario, it is
straightforward to operate at the corner points of the MAC region by using
single user encoding and successive decoding. Now, in general, two different
techniques are possible to achieve any boundary rate point by using only
transmission at the corner-point rates, namely time-sharing and rate-splitting
\cite{Rimoldi:IT:96}. In our case, it can be seen that time-sharing would
generally cause a rate reduction with respect to
(\ref{rate_achievable_lattice}), due to the constraint arising from decoding at
the relay. On the contrary, rate-splitting does not have such a drawback: the relay
terminal splits its message and power into two parts and acts as two virtual
users, while single-user coding is applied for each virtual relay user as well
as the message from the transmitter. Since lattice coding achieves the optimal
performance for single user decoding, we can achieve any point on the boundary
of the MAC region.
\end{rem}

\begin{figure}[ptb]
\begin{center}
\includegraphics[width=5in]
{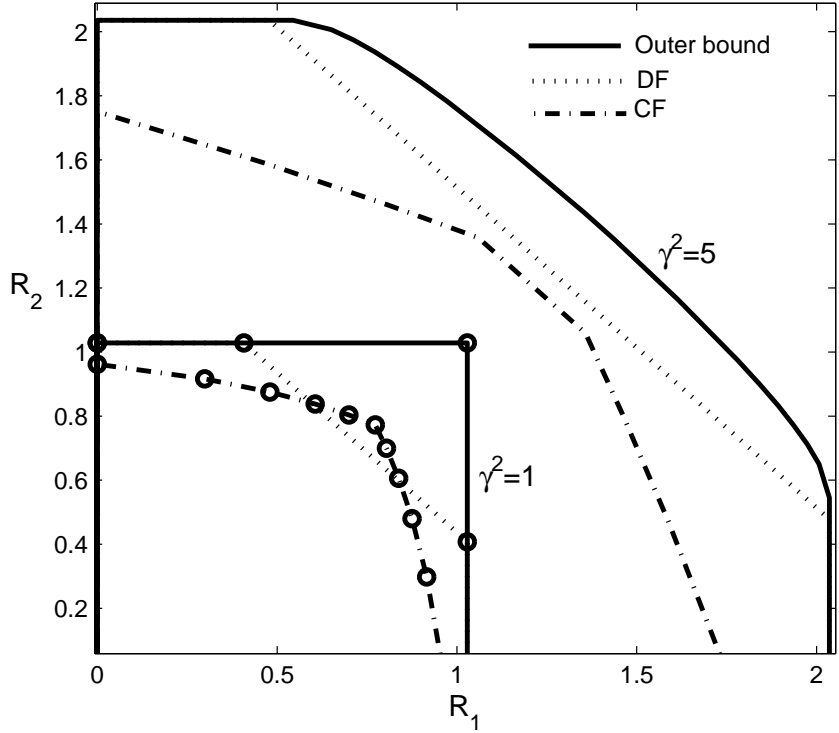}
\end{center}
\caption{Achievable rate region and outer bound for $P_{1}=P_{2}=P_{3}=5~dB$,
$\eta^{2}=10$ and different values of the channel gain from the terminals to
the relay, namely $\gamma^{2}=1,5.$}%
\label{compound region}%
\end{figure}

\subsubsection{Numerical examples}

Consider cMACr with powers $P_{1}=P_{2}=P_{3}=5~dB$ and
channel gain $\eta^{2}=10$ from the relay to the two terminals. Fig.
\ref{compound region} shows the achievable rate region and outer bound for
different values of the channel gain from the terminals to the relay, namely
$\gamma^{2}=1$ and $\gamma^2=5$. It can be seen that, if the channel to the relay is
weak, then CF improves upon DF at certain parts of the rate region. However, as $\gamma^{2}$ increases, DF gets very close to the outer bound dominating the CF rate region, since the sum rate constraint in DF scheme becomes less restricting.

In Fig. \ref{fig:lattice}, the equal rate achievable with lattice codes (\ref{rate_achievable_lattice}) is compared with the upper bound (\ref{upper
bound symm}) and the symmetric rates achievable with DF and CF for $%
\gamma^{2}=1/10$ and $\eta^{2}=10$ versus $P_{1}=P_{2}=P_{3}=P$. We see that, for sufficiently large $P$, the lattice-based scheme is close to optimal, whereas for smaller $P$, CF or DF have better performance. The performance loss of lattice-based schemes with respect to the
upper bound is due to the fact that lattice encoding does not enable
coherent power combining gains at the destination. It is also noted that
both DF and lattice-based schemes have the optimal multiplexing gain of $1/2$
(in terms of equal rate).

\begin{figure}[ptb]
\begin{center}
\includegraphics[width=5in]
{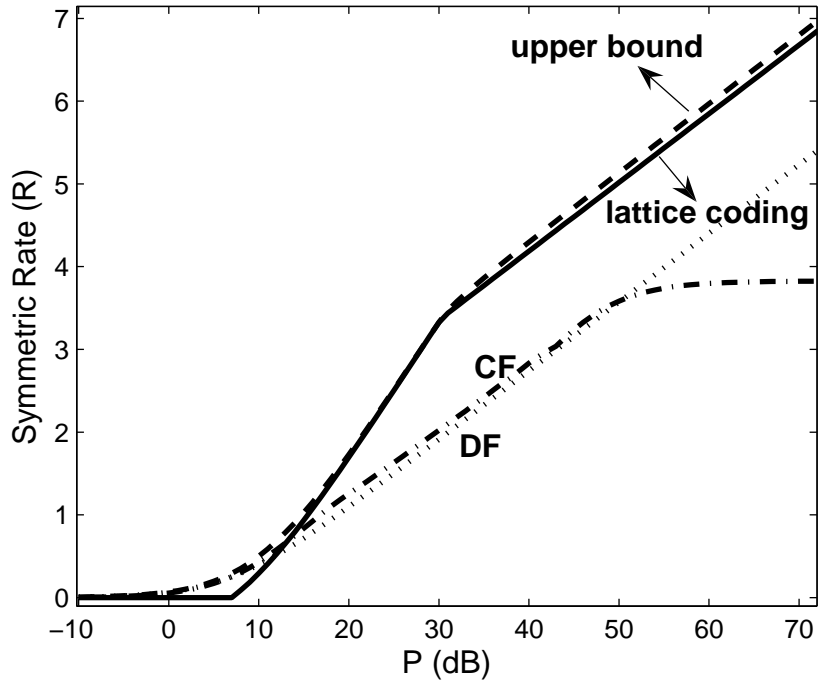}
\end{center}
\caption{Equal rate achievable with lattice codes (\protect\ref%
{rate_achievable_lattice}) compared with the upper bound (\protect\ref{upper
bound symm}) and the rates achievable with DF and CF for $\protect\gamma%
^{2}=1/10$ and $\protect\eta^{2}=10$ versus $P_{1}=P_{2}=P_{3}=P$. }
\label{fig:lattice}
\end{figure}

\section{Conclusions}

\label{s:conc}

We have considered a compound multiple access channel with a relay terminal.
In this model, the relay terminal simultaneously assists both transmitters
while multicasting its own information at the same time. We first have
characterized the capacity region for a multiple access channel with a
cognitive relay and related models of partially cognitive relay and cognition
through finite capacity links. We then have used the coding technique that achieves
the capacity for these models to provide an achievable rate region with DF
relaying in the case of a general cMACr. We have also considered a CF based
relaying scheme, in which the relay broadcasts a compressed version of its
received signal using the received signals at the receivers as side
information. Here we have used a novel joint source-channel coding scheme to
improve the achievable rate region of the underlying multi-user channel coding problem.

We then have focused on another promising approach to improve rates in certain
multi-user networks, namely using structures codes, rather than random coding
schemes. We have proved that the capacity can be achieved by linear block codes in
the case of finite field additive channels. Motivated by the gains achieved
through such structured coding approaches, we have then analyzed the
performance of nested lattice codes in the Gaussian setting. Our results show
that lattice coding achieves rates higher than other random coding schemes for
a wide range of power constraints. We have also presented the achievable rate
regions with the proposed random coding schemes, and provided a comparison.
Our analysis has revealed that no single coding scheme dominates all the
others uniformly over all channel conditions. Hence a combination of various
random coding techniques as well as structured coding might be required to
improve the achievable rates or to meet the upper bounds in a general
multi-user network model.

\appendices

\section{Proof of Proposition \ref{prop:1} \label{app:1}}

\subsection{Types and Typical Sequences}

Here, we briefly review the notions of types and strong typicality that will
be heavily used in the proofs. See \cite{CsiszarKorner} for further details. The type $P_{x^{n}}$ of
an $n$-tuple $x^{n}$ is the empirical distribution
\[
P_{x^{n}}=\frac1{n}N(a|x^{n})
\]
where $N(a|x^{n})$ is the number of occurrences of the letter $a$ in vector
$x^{n}$. The set of all $n$-tuples $x^{n}$ with type $Q$ is called the type
class $Q$ and denoted by $T_{Q}^{n}$. For a probability distribution $p_{X}$, the set of $\epsilon$-strongly typical
$n$-tuples according to $p_{X}$ is denoted by $T_{[X]_{\epsilon}}^{n}$ and is
defined by
\[
T_{[X]_{\epsilon}}^{n}=\left\{  x\in\mathcal{X}^{n}:\left\vert \frac
1{n}N(a|x^{n})-p_{X}(a)\right\vert \leq\epsilon,~\forall a\in\mathcal{X}%
\mbox{ and }N(a|x^{n})=0\mbox{ whenever } p_{X}(x)=0\right\}  .
\]

The definitions of type and strong typicality can be extended to joint and
conditional distributions in a similar manner \cite{CsiszarKorner}. The
following results concerning typical sets will be used in the sequel. For any
$\epsilon>0$, we have
\begin{equation}
\label{type_d1}\left\vert \frac1{n}\log|T_{[X]_{\epsilon}}^{n}%
|-H(X)\right\vert \leq\epsilon
\end{equation}
and
\begin{equation}
\label{type_d2}\Pr(X^{n}\in T_{[X]_{\epsilon}}^{n})\geq1-\epsilon
\end{equation}
for sufficiently large $n$. Given a joint distribution $p_{XY}$, if the i.i.d.
sequences $(x^{n},y^{n})\sim p_{X}^{n}p_{Y}^{n}$, where $P_{X}^{n}$ and
$P_{Y}^{n}$ are $n$-fold products of the marginals $p_{X}$ and $p_{Y}$, then
\[
\mathrm{Pr}\{(x^{n},y^{n})\in T_{[XY]_{\epsilon}}^{n}\}\leq
2^{-n(I(X;Y)-3\epsilon)}.
\]

%Finally, for a joint distribution $P_{XYZ}$, if $(x^n, y^n, z^n) \sim P_X^n P_Y^n P_Z^n $ with similar definitions for $P_X^n$, $P_Y^n$ and $P_Z^n$, then
%\begin{eqnarray}\label{type2}
%\mathrm{Pr} \{(x^n, y^n, z^n) \in T_{[XYZ]_\delta}^n \} \leq 2^{-n(I(X;Y,Z)+ I(Y;X,Z) + I(Z;Y,X) - 4 \delta)}.
%\end{eqnarray}

\subsection{Converse\label{converse prop 1}}

Starting from the Fano inequality, imposing the condition $P_{e}%
^{n}\rightarrow0$ as $n\rightarrow\infty$, we have%
\begin{equation}
H(W_{1},W_{2},W_{3}|Y^{n})\leq n\delta_{n}\label{Fano1}%
\end{equation}
with $\delta_{n}\rightarrow0$ as $n\rightarrow\infty$. Then we also have
$H(W_{1},W_{3}|Y^{n},W_{2})\leq n\delta_{n}$. We can obtain
\begin{align}
n(R_{1}+R_{3})  &  =H(W_{1},W_{3})=H(W_{1},W_{3}|W_{2})\leq I(W_{1}%
,W_{3};Y^{n}|W_{2})+n\delta_{n}\\
&  =\sum_{i=1}^{n}H(Y_{i}|W_{2},Y^{i-1})-H(Y_{i}|W_{1},W_{2},W_{3}%
,Y^{i-1})+n\delta_{n}\\
&  =\sum_{i=1}^{n}H(Y_{i}|X_{2i},W_{2},Y^{i-1})-H(Y_{i}|X_{1i},X_{2i}%
,X_{3i},W_{1},W_{2},W_{3},Y^{i-1})+n\delta_{n}\label{1_3}\\
&  =\sum_{i=1}^{n}H(Y_{i}|X_{2i},U_{2i},Y^{i-1})-H(Y_{i}|X_{1i},X_{2i}%
,X_{3i},U_{1i},U_{2i})+n\delta_{n}\label{1_4}\\
&  \leq\sum_{i=1}^{n}H(Y_{i}|X_{2i},U_{2i})-H(Y_{i}|X_{1i},X_{2i}%
,X_{3i},U_{1i},U_{2i})+n\delta_{n}\\
&  =\sum_{i=1}^{n}I(X_{1i},U_{1i},X_{3i};Y_{i}|X_{2i},U_{2i})+n\delta_{n}\\
&  =\sum_{i=1}^{n}I(X_{1i},X_{3i};Y_{i}|X_{2i},U_{2i})+n\delta_{n}%
\end{align}
where in (\ref{1_3}) we have used the fact that the codewords are function of
the messages, in (\ref{1_4}) we have defined $U_{1i}\triangleq W_{1}$ and
$U_{2i}\triangleq W_{2}$ and used the fact that $Y^{i-1}-$($X_{1i}%
,X_{2i},X_{3i})-Y_{i}$\ forms a Markov chain, and the last equality follows
from the Markov chain relationship $(U_{1i},U_{2i})-(X_{1i},X_{2i},X_{3i})-Y_{i}.$

We can similarly obtain%
\[
n(R_{2}+R_{3})\leq\sum_{i=1}^{n}I(X_{2i},X_{3i};Y_{i}|X_{1i},U_{1i}%
)+n\delta_{n}%
\]
starting from $n(R_{2}+R_{3})\leq I(W_{2},W_{3};Y^{n}|W_{1})+n\delta_{n}$
(which follows from the Fano inequality (\ref{Fano1}) since it implies
$H(W_{2},W_{3}|Y^{n},W_{1})\leq n\delta_{n}$) and
\[
nR_{3}\leq\sum_{i=1}^{n}I(X_{3i};Y_{i}|X_{1i},X_{2i},U_{1i},U_{2i}%
)+n\delta_{n}%
\]
from the inequality $nR_{3}\leq I(W_{3};Y^{n}|W_{1},W_{2})+n\delta_{n}$ (which
follows from (\ref{Fano1}) as $H(W_{3}|Y^{n},W_{1},W_{2})\leq n\delta_{n}).$
From (\ref{Fano1}), we also have:%
\begin{align*}
n(R_{1}+R_{2}+R_{3})  &  \leq I(W_{1},W_{2},W_{3};Y^{n})+n\delta_{n}\\
&  =\sum_{i=1}^{n}H(Y_{i}|Y^{i-1})-H(Y_{i}|W_{1},W_{2},W_{3},Y^{i-1}%
)+n\delta_{n}\\
&  =\sum_{i=1}^{n}H(Y_{i}|Y^{i-1})-H(Y_{i}|X_{1i},X_{2i},X_{3i},W_{1}%
,W_{2},W_{3},Y^{i-1})+n\delta_{n}
\end{align*}
\begin{align*}
&  \leq\sum_{i=1}^{n}H(Y_{i})-H(Y_{i}|X_{1i},X_{2i},X_{3i})+n\delta_{n}=\\
&  =\sum_{i=1}^{n}I(X_{1i},X_{2i},X_{3i};Y_{i})+n\delta_{n}.
\end{align*}

Now, introducing the time-sharing random variable $Q$ independent from
everything else and uniformly distributed over $\{1,..,n\}$ and
defining $X_{j}\triangleq X_{jQ}$ for $j=1,2,3,$ $Y\triangleq Y_{Q} $ and
$U_{j}\triangleq U_{jQ}$ for $j=1,2,$ we get (\ref{region cognitive}). Notice
that the joint distribution satisfies (\ref{pmf}).

\subsection{Achievability}

\textit{Code Construction}: Generate an i.i.d. sequence $Q^{n}$ with marginal $p(q)$
for $i=1,2,...,n.$ Fix a realization of such a sequence $Q^{n}=q^{n}$. Generate $2^{nR_{j}}$
codewords $u_{j}^{n}(w_{j}),$ $w_{j}=1,2,...,2^{nR_{j}}$ also i.i.d. with probability distribution $\prod_{i=1}^{n}%
p(u_{ji}|q_{i}),$ for $j=1,2$. For each pair $w_{1},w_{2},$ generate
$2^{nR_{3}}$ codewords i.i.d. according to $\prod_{i=1}^{n}p(x_{3i}%
|u_{1i}(w_{1}),u_{2i}(w_{2}),q_{i})$, and label these codewords as $x_{3}%
^{n}(w_{1},w_{2},w_{3})$ for $w_{3}\in\lbrack1,2^{nR_{3}}]$. Also generate
$2^{nR_{j}}$ codewords $x_{j}^{n}(w_{j}),$ $j=1,2$, i.i.d. with probability distribution
$\prod_{i=1}^{n}p(x_{ji}|u_{ji}(w_{j}),q_{i}) $ and label them as $x_{j}%
^{n}(w_{j})$ for $w_{j}\in\lbrack1,2^{nR_{j}}]$.

\textit{Encoders}: Given ($w_{1},w_{2},w_{3}$), encoder $j$ transmits
$x_{j}^{n}(w_{j})$, $j=1,2,$ and encoder 3 transmits $x_{3}^{n}(w_{1}%
,w_{2},w_{3}).$

\textit{Decoders}: The decoder looks for a triplet $(\tilde{w}_{1},\tilde
{w}_{2},\tilde{w}_{3})$ such that%
\[
(q^{n},u_{1}^{n}(\tilde{w}_{1}),u_{2}^{n}(\tilde{w}_{2}),x_{1}^{n}(\tilde
{w}_{1}),x_{2}^{n}(\tilde{w}_{2}),x_{3}^{n}(\tilde{w}_{1},\tilde{w}_{2}%
,\tilde{w}_{3}),y^{n})\in T_{[QU_{1}U_{2}X_{1}X_{2}X_{3}Y]_{\epsilon}}^{n}.
\]
If none or more than one such triplet is found, an error is declared.

\textit{Error analysis: } Assume $(w_{1},w_{2},w_{3})=(1,1,1)$ was sent. We
have an error if either the correct codewords are not typical with the
received sequence or there is an incorrect triplet of messages whose
corresponding codewords are typical with the received sequence. Define the
event (conditioned on the transmission of $(w_{1},w_{2},w_{3})=(1,1,1)$)
\[
E_{k,l,m}\triangleq\{(Q^{n},U_{1}^{n}(k),X_{1}^{n}(k),U_{2}^{n}(l),X_{2}%
^{n}(l),X_{3}^{n}(k,l,m),Y^{n})\in T_{[QU_{1}U_{2}X_{1}X_{2}X_{3}Y]_{\epsilon
}}^{n}\}.
\]

From the union bound, the probability of error, averaged over the
random codebooks, is found as
\begin{align*}
P_{e}^{n}= &  \Pr(E_{1,1,1}^{c})\bigcup\cup_{(k,l,m)\neq(1,1,1)}E_{k,l,m},\\
\leq &  \Pr(E_{1,1,1}^{c})+\sum_{(k,m)\neq(1,1),l=1}\Pr(E_{k,1,m}%
)+\sum_{(l,m)\neq(1,1),k=1}\Pr(E_{1,l,m})+\sum_{k\neq1,l\neq1,m\neq1}%
\Pr(E_{k,l,m}).
\end{align*}

From (\ref{type_d2}), $\Pr(E_{1,1,1}^{c})\rightarrow0$ as
$n\rightarrow\infty$. We can also show that for $(k,m)\neq(1,1),l=1$,
\begin{align*}
\Pr(E_{k,1,m})= &  \Pr((q^{n},u_{1}^{n}(k),u_{2}^{n}(1),x_{1}^{n}(k),x_{2}%
^{n}(1),x_{3}^{n}(k,1,m),y^{n})\in T_{[QU_{1}U_{2}X_{1}X_{2}X_{3}Y]_{\epsilon
}}^{n})\\
\leq &  2^{-n(I(X_{1},U_{1},X_{3};Y|X_{2},U_{2},Q)-3\epsilon)}\\
= &  2^{-n(I(X_{1},X_{3};Y|X_{2},U_{2},Q)-3\epsilon)}.
\end{align*}

Similarly, for $(l,m)\neq(1,1)$ and $k=1$, we have
\begin{align*}
P(E_{k,1,m})= &  \Pr((q^{n},u_{1}^{n}(1),u_{2}^{n}(l),x_{1}^{n}(1),x_{2}%
^{n}(l),x_{3}^{n}(1,l,m),y^{n})\in T_{[QU_{1}U_{2}X_{1}X_{2}X_{3}Y]_{\epsilon
}}^{n})\\
\leq &  2^{-n(I(X_{2},X_{3};Y|X_{1},U_{1},Q)-3\epsilon)}.
\end{align*}

The third error event occurs for $k\neq1,l\neq1,m\neq1$, and we have if%
\begin{align*}
P(E_{k,l,m})= &  \Pr((q^{n},u_{1}^{n}(k),u_{2}^{n}(l),x_{1}^{n}(k),x_{2}%
^{n}(l),x_{3}^{n}(k,l,m),y^{n})\in T_{[QU_{1}U_{2}X_{1}X_{2}X_{3}Y]_{\epsilon
}}^{n})\\
\leq &  2^{-n(I(X_{1},X_{2},X_{3};Y|Q)-4\epsilon)}.
\end{align*}

Then, it follows that
\begin{align*}
P_{e}^{n}\leq &  \Pr(E_{1,1,1}^{c})+2^{n(R_{1}+R_{3})}2^{-n(I(X_{1}%
,X_{3};Y|X_{2},U_{2},Q)-3\epsilon)}+2^{n(R_{2}+R_{3})}2^{-n(I(X_{2}%
,X_{3};Y|X_{1},U_{1},Q)-3\epsilon)}\\
&  +2^{n(R_{1}+R_{2}+R_{3})}2^{-n(I(X_{1},X_{2},X_{3};Y|,Q)-4\epsilon)}.
\end{align*}
Letting $\epsilon\rightarrow0$ and $n\rightarrow\infty$, we have a
vanishing error probability given that the inequalities in (\ref{region cognitive})
are satisfied.

\section{Proof of Proposition \ref{prop:2}}

\label{app:2}

\subsection{Converse}

Similar to the converse in Appendix \ref{app:1}, we use the Fano inequality given
in (\ref{Fano1}). Then we have
\begin{align}
n(R_{1}+R_{3}) &  =H(W_{1},W_{3})=H(W_{1},W_{3}|W_{2})\leq I(W_{1},W_{3}%
;Y^{n}|W_{2})+n\delta_{n}\\
&  =\sum_{i=1}^{n}H(Y_{i}|W_{2},Y^{i-1})-H(Y_{i}|W_{1},W_{2},W_{3}%
,Y^{i-1})+n\delta_{n}\\
&  =\sum_{i=1}^{n}H(Y_{i}|X_{2i},W_{2},Y^{i-1})-H(Y_{i}|X_{1i},X_{2i}%
,X_{3i},W_{1},W_{2},W_{3},Y^{i-1})+n\delta_{n}
\end{align}
\begin{align}
&  \leq\sum_{i=1}^{n}H(Y_{i}|X_{2i})-H(Y_{i}|X_{1i},X_{2i},X_{3i})+n\delta
_{n}\label{1_4b}\\
&  =\sum_{i=1}^{n}I(X_{1i},X_{3i};Y_{i}|X_{2i})+n\delta_{n}%
\end{align}
where (\ref{1_4b}) follows from the fact that conditioning reduces entropy and
$(W_{1},W_{2},W_{3},Y^{i-1})-(X_{1i},X_{2i},X_{3i})-Y_{i}$ forms a Markov
chain. The other inequalities follow similarly.

\subsection{Achievability}

\textit{Code Construction}: Generate $2^{nR_{1}}$ codewords $x_{1}^{n}%
(w_{1}),$ $w_{1}\in\lbrack1,2^{nR_{1}}]$ by choosing each $i$-th letter i.i.d.
from probability distribution $p(x_{1}),$ $i=1,2,...,n.$ For each $w_{1},$ generate
$2^{nR_{3}}$ codewords $x_{3}^{n}(w_{1},w_{3}),$ $w_{3}\in\lbrack1,2^{nR_{3}%
}],$ i.i.d. according to $\prod_{i=1}^{n}p(x_{3i}|x_{1i}(w_{1})).$ Finally,
generate $2^{nR_{2}}$ codewords $x_{2}^{n}(w_{2})$ i.i.d. with each letter
drawn according to $p(x_{2}).$

\textit{Encoding} and\textit{\ error analysis} are rather standard (similar
to Appendix A) and are thus omitted.

\section{Proof of Proposition \ref{prop:conf}\label{app:conf}}

\subsection{Converse}

The converse follows from standard arguments based on the Fano inequality
(see, e.g., Appendix \ref{app:1}). Here, for illustration, we
derive only the first bound in (\ref{conf}), i.e., $R_{1}\leq I(X_{1};Y|X_{2}%
,X_{3},U_{1},U_{2})+C_{1},$ as follows. Define as $V_{1}\in\mathcal{V}_{1}$
and $V_{2}\in\mathcal{V}_{2}$ the messages of cardinality $|\mathcal{V}%
_{i}|\leq C_{i}$ sent over the two links from the sources to the relay. Notice
that $V_{i}$ is a function only of $W_{i}$ and that $X_{3}^{n}$ is a function
only of $W_{3},V_{1}$ and $V_{2}.$ Considering decoding of $W_{1},$ from the
Fano inequality%
\[
H(W_{1}|Y^{n},V_{1},V_{2},W_{2})\leq n\delta_{n},
\]
we get%
\begin{align*}
nR_{1} &  =H(W_{1}|W_{2})\leq I(W_{1};Y^{n},V_{1},V_{2}|W_{2})+n\delta_{n}\\
&  \leq I(W_{1};V_{1}|W_{2})+I(W_{1};V_{2}|V_{1},W_{2})+I(W_{1};Y^{n}%
|W_{2},V_{1},V_{2})+n\delta_{n}\\
&  \leq nC_{1}+I(W_{1};Y^{n}|W_{2},V_{1},V_{2})+n\delta_{n}\\
&  =nC_{1}+\sum_{i=1}^{n}H(Y_{i}|Y^{i-1},W_{2},V_{1},V_{2})-H(Y_{i}%
|Y^{i-1},W_{1},W_{2},V_{1},V_{2})
\end{align*}
\begin{align*}
&  =nC_{1}+\sum_{i=1}^{n}H(Y_{i}|Y^{i-1},X_{2i},X_{3i},W_{2},U_{1i}%
,U_{2i})-H(Y_{i}|Y^{i-1},X_{1i},X_{2i},X_{3i},W_{1},W_{2},U_{1i},U_{2i})\\
&  \leq nC_{1}+\sum_{i=1}^{n}H(Y_{i}|X_{2i},X_{3i},U_{1i},U_{2i}%
)-H(Y_{i}|X_{1i},X_{2i},X_{3i},U_{1i},U_{2i})\\
&  =nC_{1}+\sum_{i=1}^{n}I(X_{1i};Y_{i}|X_{2i},X_{3i},U_{1i},U_{2i}),
\end{align*}
where in the third line we have used the facts that $I(W_{1};V_{1}|W_{2})\leq
H(V_{1})\leq nC_{1}$ and $I(W_{1};V_{2}|V_{1},W_{2})=0$ and the definitions
$U_{1i}=V_{1}$ and $U_{2i}=V_{2}.$ The proof is concluded similarly to
Appendix \ref{app:1}.

\subsection{Achievability}

\textit{Code Construction}: Split the message of the terminals as
$W_{j}=[W_{jp}$ $W_{jc}]$ with $j=1,2,$ where $W_{jp}$ stands for the
``private'' message sent by each terminal without the help of the relay and
$W_{jc}$ for the ``common'' message conveyed to the destination with the help
of the relay. The corresponding rates are $R_{1}=R_{1p}+R_{1c}$ and
$R_{2}=R_{2p}+R_{2c}.$ Generate a sequence $Q^{n}$ i.i.d. using $p(q)$ for
$i=1,2,...,n.$ Fix a realization of such a sequence $Q^{n}=q^{n}$. Generate $2^{nR_{jc}}$
codewords $u_{j}^{n}(w_{jc}),$ $w_{jc}\in\lbrack1,2^{nR_{jc}}]$ by choosing
each $i$th letter independently with probability $p(u_{j}|q_{i}),$
$i=1,2,...,n,$ for $j=1,2.$ For each $w_{jc},$ generate $2^{nR_{jp}}$
codewords $x_{j}^{n}(w_{jc},w_{jp}),$ $j=1,2,$ $w_{jp}\in\lbrack1,2^{nR_{jp}%
}],$ i.i.d. with each letter drawn according to $p(x_{j}|u_{ji}(w_{jc}%
),q_{i}).$ Finally, for each pair $w_{1c},w_{2c},$ generate $2^{nR_{3}}$
codewords $x_{3}^{n}(w_{1c},w_{2c},w_{3}),$ $w_{3}\in\lbrack1,2^{nR_{3}}],$
i.i.d. according to $p(x_{3}|u_{1i}(w_{1c}),u_{2i}(w_{2c}),q_{i}).$

\textit{Encoders}: Given the messages and the arbitrary rate splits at the
transmitters ($w_{1}=[w_{1p}~w_{1c}],w_{2}=[w_{2p}~w_{2c}],w_{3}$), encoder 1
and encoder 2 send the messages $w_{1c}$ and $w_{2c},$ respectively, over the
finite-capacity channels which are then known at the relay before
transmission. Terminal 1 and terminal 2 then transmit $x_{j}^{n}(w_{jc}%
,w_{jp})$, $j=1,2,$ and the relay transmits $x_{3}^{n}(w_{1c},w_{2c},w_{3})$.

The rest of the proof follows similarly to Appendix \ref{app:1} by exploiting
the results in Sec. VII of \cite{Slepian}.

\section{Proof of Proposition \ref{p:achCF}}

\label{app:achCF}

We use the classical block Markov encoding for achievability, and we assume
$|\mathcal{Q}|=1$ for the sake of brevity of the presentation. Generalization to
arbitrary finite cardinalities follows from the usual techniques (see, e.g.,
Appendix \ref{app:1}).

\textit{Codebook generation: } Generate $2^{nR_{k}}$ i.i.d. codewords
$x_{k}^{n}$ from probability distribution $p(x_{k}^{n})=\prod_{i=1}^{n}p(x_{ki})$ for
$k=1,2$. Label each codeword, for $k=1,2$, as $x_{k}^{n}(w_{k})$, where
$w_{k}\in\lbrack1,2^{nR_{k}}]$. Generate $2^{n(R_{0}+R_{3})}$ i.i.d. codewords
$x_{3}^{n}$ from probability distribution $p(x_{3}^{n})=\prod_{i=1}^{n}p(x_{3i}) $. Label
each codeword as $x_{3}^{n}(s,w_{3})$, where $s\in\lbrack1,2^{nR_{0}}]$ and
$w_{3}\in\lbrack1,2^{nR_{3}}]$. Also, for each $x_{3}^{n}(s,w_{3})$, generate
$2^{nR_{0}}$ i.i.d. sequences $\hat{y}_{3}^{n}$ from probability distribution $p(\hat
{y}_{3}^{n}|x_{3}^{n}(s,w_{3}))=\prod_{i=1}^{n}p(\hat{y}_{3i}|x_{3}%
^{n}(s,w_{3}))$, where we define
\[
p(\hat{y}_{3}|x_{3})=\sum_{x_{1},x_{2},y_{1},y_{2},y_{3}}p(x_{1}%
)p(x_{2})p(y_{1},y_{2},y_{3}|x_{1},x_{2},x_{3})p(\hat{y}_{3}|y_{3},x_{3}).
\]
We label these sequences as $\hat{y}_{3}(m|s,w_{3})$, where
$m\in\lbrack1,2^{nR_{0}}]$, $s\in\lbrack1,2^{nR_{0}}]$ and $w_{3}\in
\lbrack1,2^{nR_{3}}].$

\textit{Encoding: } Let $(w_{1i}, w_{2i}, w_{3i})$ be the message to be
transmitted in block $i$, and assume that $( \hat{Y}_{3}^{n}(s_{i}|s_{i-1},
w_{3,i-1}), Y_{3}^{n}(i-1), x_{3}^{n}(s_{i-1}, w_{3,i-1}))$ are jointly
typical. Then the codewords $x_{1}^{n}(w_{1i})$, $x_{2}^{n}(w_{2i})$ and
$x_{3}^{n}(s_{i}, w_{3i})$ will be transmitted in block $i$.

\textit{Decoding: } After receiving $\hat{y}_{3}^{n}(i)$, the relay finds the
index $s_{i+1}$ such that
\[
(\hat{y}_{3}^{n}(s_{i+1}|s_{i},w_{3,i}),y_{3}^{n}(i),x_{3}^{n}(s_{i}%
,w_{3i}))\in T_{[\hat{Y}_{3}Y_{3}X_{3}]_{\epsilon}}^{n}.
\]
For large enough $n$, there will be such $s_{i+1}$ with high probability if
\[
R_{0}>I(Y_{3};\hat{Y}_{3}|X_{3}).
\]
We fix $R_{0}=I(Y_{3};\hat{Y}_{3}|X_{3})+\epsilon$.

At the end of block $i$, the receiver $k$ finds indices $\hat{s}_{i}^{k}$ and
$\hat{w}_{3i}^{k}$ such that
\[
(x_{3}^{n}(\hat{s}_{i}^{k},\hat{w}_{3i}^{k}),y_{k}^{n}(i))\in T_{[X_{3}%
Y_{k}]_{\epsilon}}^{n}%
\]
and
\[
(\hat{y}_{3}^{n}(\hat{s}_{i}^{k}|s_{i-1}^{k},w_{3,i-1}^{k}),x_{3}^{n}%
(s_{i-1}^{k},w_{3,i-1}^{k}),y_{k}^{n}(i-1))\in T_{[\hat{Y}_{3}X_{2}%
Y_{k}]_{\epsilon}}^{n}%
\]
are simultaneously satisfied, assuming that $s_{i-1}^{k}$ and $w_{3,i-1}^{k}$
have been previously correctly estimated. Receiver $k$ will find the correct
pair $(s_{i},w_{3i})$ with high probability provided that $n$ is large enough
and that
\[
R_{3}+I(Y_{3};\hat{Y}_{3}|X_{3},Y_{k})<I(X_{3};Y_{k}).
\]
Assuming that this condition is satisfied so that $\hat{s}_{i}^{k}=s_{i}^{k}$
and $\hat{w}_{3,i-1}^{k}=w_{3,i-1}^{k}$; using both $\hat{y}_{3}^{n}(s_{i}%
^{k}|s_{i-1}^{k},w_{3,i-1}^{k})$ and $y_{k}^{n}(i)$, the receiver $k$ then
declares that $(\hat{w}_{1,i-1}^{k},\hat{w}_{2,i-1}^{k})$ was sent in block
$i-1$ if
\[
(x_{1}^{n}(\hat{w}_{1,i-1}^{k}),x_{2}^{n}(\hat{w}_{2,i-1}^{k}),\hat{y}_{3}%
^{n}(s_{i}^{k}|s_{i-1}^{k},w_{3,i-1}^{k}),y_{k}^{n}(i-1),x_{3}^{n}(s_{i-1}%
^{k},w_{3,i-1}^{k}))\in T_{[X_{1}X_{2}X_{3}\hat{Y}_{3}Y_{k}]_{\epsilon}}^{n}.
\]
We have $(\hat{w}_{1,i-1}^{k},\hat{w}_{2,i-1}^{k})=(w_{1,i-1},w_{2,i-1})$ with
high probability provided that $n$ is large enough,
\begin{align}
R_{1} &  <I(X_{1};Y_{k},\hat{Y}_{3}|X_{3},X_{2}),\nonumber\\
R_{2} &  <I(X_{2};Y_{k},\hat{Y}_{3}|X_{3},X_{1}),
\end{align}
and
\[
R_{1}+R_{2}<I(X_{1},X_{2};Y_{k},\hat{Y}_{3}|X_{3}),
\]
for $k=1,2$.

\section{Proof of Proposition \ref{p:outerbound}}

\label{app:outbnd}

To simplify the presentation, here we prove (\ref{outer bound}) for $R_{3}=0.$
The case with $R_{3}>0$ follows similarly by following the same reasoning as
in Appendix \ref{converse prop 1}. From the Fano inequality, for $i=1,2$,
\[
H(W_{1},W_{2}|Y_{i}^{n})\leq n\delta_{n},
\]
where $\delta_{n}\rightarrow0$ for $n\rightarrow\infty$, from which it follows
that $H(W_{1}|Y_{i}^{n})\leq n\delta_{n}$ and $H(W_{2}|Y_{i}^{n})\leq
n\delta_{n}.$ For the relay terminal, we have
\begin{align}
H(W_{1}|W_{2},Y_{3}^{n})  &  =H(W_{1}|W_{2},X_{2}^{n},Y_{3}^{n},X_{3}%
^{n})\label{fano_s1}\\
&  \leq H(W_{1}|X_{2}^{n},X_{3}^{n})\label{fano_s1b}\\
&  =H(W_{1}|X_{2}^{n},X_{3}^{n},Y_{2}^{n})\label{fano_s2}\\
&  \leq H(W_{1}|Y_{2}^{n})\label{fano_s3}\\
&  \leq n\delta_{n},\label{fano_s4}%
\end{align}
where (\ref{fano_s1}) follows since $X_{2}^{n}$ is a function of $W_{2}$ and
$X_{3}^{n}$ is a function of $Y_{3}^{n}$; (\ref{fano_s2}) follows since
$Y_{2}^{n}-(X_{2}^{n},X_{3}^{n})-W_{1}$ form a Markov chain based on the
assumption in (\ref{ch_ass_1}); (\ref{fano_s3}) follows since conditioning
reduces entropy; and finally (\ref{fano_s4}) follows from the Fano inequality.
Similarly, we can also show
\[
H(W_{2}|W_{1},Y_{3}^{n})\leq n\delta_{n}.
\]

We also define the auxiliary random variables $U_{1i}\triangleq W_{1}$ and
$U_{2i}\triangleq W_{2}$, for $i=1,\ldots,n$. It follows that:
\begin{align}
nR_{1}  &  =H(W_{1})=H(W_{1}|W_{2})\label{rate1_inq1}\\
&  \leq I(W_{1};Y_{3}^{n}|W_{2})+n\delta_{n}\label{rate1_inq2}\\
&  =\sum_{i=1}^{n}I(W_{1};Y_{3i}|W_{2},Y_{3}^{i-1})+n\delta_{n}\nonumber\\
&  =\sum_{i=1}^{n}H(Y_{3i}|Y_{3}^{i-1},W_{2})-H(Y_{3i}|W_{1},W_{2},Y_{3}%
^{i-1})+n\delta_{n}\nonumber\\
&  =\sum_{i=1}^{n}H(Y_{3i}|Y_{3}^{i-1},W_{2},X_{2i},X_{3i})-H(Y_{3i}%
|W_{1},W_{2},Y_{3}^{i-1},X_{1i},X_{2i},X_{3i})+n\delta_{n}\label{rate1_inq3}\\
&  \leq\sum_{i=1}^{n} H(Y_{3i}|W_{2}, X_{2i},X_{3i})-H(Y_{3i}|X_{1i}%
,X_{2i},X_{3i})+n\delta_{n}\label{rate1_inq4}\\
&  \leq\sum_{i=1}^{n} H(Y_{3i}|U_{2i}, X_{2i},X_{3i})-H(Y_{3i}|U_{2i},
X_{1i},X_{2i},X_{3i})+n\delta_{n}\label{rate1_inq4b}\\
&  =\sum_{i=1}^{n}I(X_{1i};Y_{3i}|U_{2i}, X_{2i},X_{3i})+n\delta
_{n},\label{rate1_inq5}%
\end{align}
where (\ref{rate1_inq2}) follows from (\ref{fano_s4}); (\ref{rate1_inq3})
follows as $X_{1i}$ and $X_{2i}$ are functions of $W_{1}$ and $W_{2}$,
respectively, and $X_{3i}$ is a function of $Y_{3}^{i-1}$; (\ref{rate1_inq4})
follows from the fact that conditioning reduces entropy and also the fact that
$Y_{3i}-(X_{1i},X_{2i},X_{3i})-(W_{1},W_{2},Y_{3}^{i-1})$; and again
(\ref{rate1_inq4b}) follows from the fact that conditioning reduces entropy and
the definition of auxiliary random variable $U_{2i}$. Similarly, we can show
that%
\begin{equation}
nR_{2} \leq\sum_{i=1}^{n} I(X_{2i};Y_{3i}|U_{1i},X_{1i},X_{3i}%
).\label{rate1_inq6}%
\end{equation}

Next, we consider the bounds due to decoding at receivers. Focusing on the
first message and the first receiver, we have
\begin{align}
nR_{1}  &  =H(W_{1}|W_{2})\leq I(W_{1};Y_{1}^{n}|W_{2})+n\delta_{n}\nonumber\\
&  =\sum_{i=1}^{n}H(Y_{1i}|W_{2},Y_{1}^{i-1})-H(Y_{1i}|W_{1},W_{2},Y_{1}%
^{i-1})+n\delta_{n}\nonumber\\
&  =\sum_{i=1}^{n}H(Y_{1i}|W_{2},Y_{1}^{i-1})-H(Y_{1i}|W_{1},W_{2}%
,X_{1i},X_{2i},X_{3i},Y_{1}^{i-1})+n\delta_{n}\label{rate1b_inq3}\\
&  \leq\sum_{i=1}^{n}H(Y_{1i}|W_{2})-H(Y_{1i}|X_{1i},X_{2i},X_{3i}%
)+n\delta_{n}\label{rate1b_inq4}
\end{align}
\begin{align}
&  =\sum_{i=1}^{n}H(Y_{1i}|W_{2})-H(Y_{1i}|X_{1i},X_{3i})+n\delta
_{n}\label{rate1b_inq5}\\
&  \leq\sum_{i=1}^{n}H(Y_{1i}|U_{2i})-H(Y_{1i}|X_{1i},X_{3i},U_{2i}%
)+n\delta_{n}\label{rate1b_inq6}\\
&  =\sum_{i=1}^{n}I(X_{1i},X_{3i};Y_{1i}|U_{2i})+n\delta_{n}%
\label{rate1b_inq7}%
\end{align}
where (\ref{rate1b_inq3}) follows since $X_{1i}$ and $X_{2i}$ are functions of
$W_{1}$ and $W_{2}$, respectively, and $X_{3i}$ is a function of $Y_{3}^{i-1}%
$; (\ref{rate1b_inq4}) follows as conditioning reduces entropy and
$Y_{1i}-(X_{1i},X_{2i},X_{3i})-(W_{1},W_{2},Y_{1}^{i-1})$ forms a Markov
chain; (\ref{rate1b_inq5}) follows since $Y_{1i}-(X_{1i},X_{3i})-X_{2i}$ forms
a Markov chain; and finally in (\ref{rate1b_inq6}) we simply used the
definition of $U_{2i}$ and the fact that conditioning reduces entropy. We can
similarly obtain
\[
nR_{2}\leq\sum_{i=1}^{n}I(X_{2i},X_{3i};Y_{2i}|U_{1i})+n\delta_{n}.
\]

We now focus on the first message and the second receiver. We have
\begin{align}
nR_{1}  & = H(W_{1})\nonumber\\
&  \leq I(W_{1};Y_{2}^{n}|W_{2})+n\delta_{n}\nonumber\\
&  =\sum_{i=1}^{n}H(Y_{2i}|W_{2},Y_{2}^{i-1})-H(Y_{2i}|W_{1},W_{2},Y_{2}%
^{i-1})+n\delta_{n}\nonumber\\
&  =\sum_{i=1}^{n}H(Y_{2i}|W_{2},Y_{2}^{i-1},X_{2i})-H(Y_{2i}|W_{1}%
,W_{2},X_{1i},X_{2i},X_{3i},Y_{2}^{i-1})+n\delta_{n}\label{rate1c_inq3}\\
&  \leq\sum_{i=1}^{n}H(Y_{1i}|W_{2},X_{2i})-H(Y_{2i}|X_{1i},X_{2i}%
,X_{3i})+n\delta_{n}\label{rate1c_inq4}\\
&  =\sum_{i=1}^{n}H(Y_{2i}|W_{2},X_{2i})-H(Y_{2i}|X_{2i},X_{3i})+n\delta
_{n}\label{rate1c_inq5}\\
&  \leq\sum_{i=1}^{n}H(Y_{2i}|U_{2i},X_{2i})-H(Y_{2i}|X_{2i},X_{3i}%
,U_{2i})+n\delta_{n}\label{rate1c_inq6}\\
&  =\sum_{i=1}^{n}I(X_{3i};Y_{2i}|U_{2i},X_{2i})+n\delta_{n}%
.\label{rate1c_inq7}%
\end{align}
We can similarly obtain
\[
nR_{2}\leq\sum_{i=1}^{n}I(X_{3i};Y_{1i}|U_{1i},X_{1i})+n\delta_{n}.
\]
We also have
\begin{align*}
n(R_{1}+R_{2})  &  \leq I(W_{1},W_{2};Y_{1}^{n})+n\delta_{n}\\
&  =\sum_{i=1}^{n}H(Y_{1i}|Y_{1}^{i-1})-H(Y_{1i}|W_{1},W_{2},Y_{1}%
^{i-1})+n\delta_{n}\\
&  =\sum_{i=1}^{n}H(Y_{1i}|Y_{1}^{i-1})-H(Y_{1i}|X_{1i},X_{2i},X_{3i}%
,W_{1},W_{2},Y_{1}^{i-1})+n\delta_{n}\\
&  \leq\sum_{i=1}^{n}H(Y_{1i})-H(Y_{1i}|X_{1i},X_{2i},X_{3i})+n\delta_{n}\\
&  \leq\sum_{i=1}^{n}H(Y_{1i})-H(Y_{1i}|X_{1i},X_{3i})+n\delta_{n}
\end{align*}
\begin{align*}
&  =\sum_{i=1}^{n}I(X_{1i},X_{3i};Y_{1i})+n\delta_{n},
\end{align*}
and similarly
\[
n(R_{1}+R_{2})\leq\sum_{i=1}^{n}I(X_{2i},X_{3i};Y_{2i})+n\delta_{n}.
\]
Now, introducing the time-sharing random variable $Q$ uniformly distributed in
the set $\{1,2,..,n\}$ and defining $X_{j}=X_{jQ}$ for $j=1,2,3,$
$Y_{j}=Y_{jQ} $ and $U_{j}=U_{jQ}$ for $j=1,2,$ we get (\ref{outer bound})
(for $R_{3}=0)$.

\section{Proof of Proposition \ref{p:binary}}

\label{app:binary}

We first prove the converse showing that (\ref{binary example}) serves as an
outer bound, and prove the direct part describing a structured coding scheme
that achieves the outer bound.

To prove the converse, it is sufficient to consider the outer bound given by
(\ref{outer bound}) as applied to the channel characterized by (\ref{bsc}),
and show that an input distribution (\ref{distribution outer}) with
$X_{1},X_{2},X_{3},U_{1},U_{2}\sim\mathcal{B}(1/2)$ and independent of each
other maximizes all the mutual information terms. To this end, notice that in
the outer bound (\ref{outer bound}) with $R_{3}=0$ ignoring all the
constraints involving auxiliary random variables can only enlarge the region,
so that we have the conditions:
\begin{align}
R_{1} &  \leq I(X_{1};Y_{3}|X_{2},X_{3},Q),\\
R_{2} &  \leq I(X_{2};Y_{3}|X_{1},X_{3},Q)
\end{align}
and
\begin{align}
R_{1}+R_{2} &  \leq\min\{I(X_{1},X_{3};Y_{1}|Q),I(X_{2},X_{3};Y_{2}|Q)\}.
\end{align}
We can further write
\begin{align*}
I(X_{1};Y_{3}|X_{2},X_{3},Q) &  =H(Y_{3}|X_{2},X_{3},Q)-H(Y_{3}|X_{1}%
,X_{2},X_{3},Q)\\
&  \leq H(Y_{3})-H_{b}(\varepsilon_{3})\leq1-H_{b}(\varepsilon_{3}),
\end{align*}
and
\begin{align*}
I(X_{1},X_{3};Y_{1}|Q) &  =H(Y_{1}|Q)-H(Y_{1}|X_{1},X_{3},Q)\\
&  \leq H(Y_{1})-H_{b}(\varepsilon_{1})\leq1-H_{b}(\varepsilon_{1}).
\end{align*}
We can see that the inequalities hold with equality under the above stated
input distribution, which concludes the proof of the converse.

We now prove the direct part of the proposition. First, consider $R_{1}\geq
R_{2} $. Transmission is organized into $B$ blocks of size $n$ bits. In each
of the first $B-1$ blocks, say the $b$th, the $j$-th transmitter, $j=1,2,$
sends $nR_{j}$ new bits, conventionally organized into a $1\times\left\lfloor
nR_{j}\right\rfloor $ vector $\mathbf{u}_{j,b}$. Moreover, encoding at the
transmitters is done using the same binary linear code characterized by a
$\left\lfloor nR_{1}\right\rfloor \times n$ random binary generator matrix
$\mathbf{G}$ with i.i.d. entries $\mathcal{B}(1/2).$

Specifically, as in \cite{Nam:IZS:08}, terminal 1 transmits $\mathbf{x}%
_{1,b}=\mathbf{u}_{1,b}\mathbf{G}$ and terminal 2 transmits $\mathbf{x}%
_{2,b}=[\mathbf{0}$ $\mathbf{u}_{2,b}]\mathbf{G}$ where the all-zero vector is
of size $1\times\left\lfloor nR_{1}\right\rfloor -\left\lfloor nR_{2}%
\right\rfloor $ (zero-padding). Since capacity-achieving random linear codes
exist for BS channels, we assume that $\mathbf{G}$ is the generating matrix
for such a capacity achieving code.

We define $\mathbf{u}_{3,b}\triangleq\mathbf{u}_{1,b}\oplus\lbrack\mathbf{0}$
$\mathbf{u}_{2,b}]$. The relay can then decode $\mathbf{u}_{3,b}$ from the
received signal $\mathbf{y}_{3,b}=\mathbf{u}_{3,b}\mathbf{G+z}_{3}$ since
$\mathbf{x}_{1,b}\oplus\mathbf{x}_{2,b}$ is also a codeword of the code
generated by $\mathbf{G}$. This occurs with vanishing probability of error if
(\ref{bin example 1}) holds (see, e.g., \cite{medard}). In the following
$(b+1)$-th block, the relay encodes $\mathbf{u}_{3,b}$ using an independent
binary linear code with an $\left\lfloor nR_{1}\right\rfloor \times n$ random
binary generator matrix $\mathbf{G}_{3}$ as $\mathbf{x}_{3,b+1}=\mathbf{u}%
_{3,b}\mathbf{G}_{3}$. We use the convention that the signal sent by the relay
in the first block is $\mathbf{x}_{3,1}=\mathbf{0}$ or any other known sequence.

At the end of the first block ($b=1$), where the relay sends a known signal
(that can be canceled by both receivers), the $j$-th receiver can decode the
current $nR_{j}$ bits $\mathbf{u}_{j,1}$ from the $j$th transmitter if
$R_{j}\leq1-H_{b}(\varepsilon_{j}).$ Under this condition, we can now consider
the second block, or any other $(b+1)$-th block, assuming that the $j$-th
receiver already knows $\mathbf{u}_{j,b}$. In the $(b+1)$-th block, the first
receiver sees the signal $\mathbf{y}_{1,b+1}=\mathbf{u}_{1,b+1}%
\mathbf{G}\oplus\mathbf{u}_{3,b}\mathbf{G}_{3}\oplus\mathbf{z}_{1}.$ However,
since $\mathbf{u}_{1,b}$ is known at the first receiver, it can be canceled
from the received signal, leading to $\mathbf{y}_{1,b+1}^{\prime}%
=\mathbf{u}_{1,b+1}\mathbf{G}\oplus\mathbf{u}_{2,b}\mathbf{G}_{3}^{\prime
}\oplus\mathbf{z}_{1},$ where $\mathbf{G}_{3}^{\prime}$ is a $\left\lfloor
nR_{2}\right\rfloor \times n$ matrix that contains the last $\left\lfloor
nR_{2}\right\rfloor $ rows of $\mathbf{G}_{3}.$ Due to the optimality of
random linear codes over the BS MAC (see, e.g., \cite{medard}), $\mathbf{u}%
_{1,b+1}$ and $\mathbf{u}_{2,b}$ are correctly decoded by the first receiver
if $R_{1}+R_{2}\leq1-H_{b}(\varepsilon_{1})$. Repeating this argument for the
second receiver and then considering the case $R_{1}\geq R_{2}$ concludes the proof.

\section{Proof of Proposition \ref{p:Lattice}}

\label{app:Lattice}

We first give a brief overview of lattice codes (see \cite{Erez:IT:04},
\cite{Wilson:IT:08} for further details). An $n$-dimensional lattice $\Lambda$
is defined as
\[
\Lambda=\{GX:X\in\mathds{Z}^{n}\},
\]
where $G\in\mathbf{R}^{n}$ is the generator matrix. For any $x\in
\mathds{R}^{n}$, the quantization of $X$ maps $X$ to the nearest lattice point
in Euclidean distance:
\[
Q_{\Lambda}(X)\triangleq\arg\min_{Q\in\Lambda}\Vert X-Q\Vert.
\]
The mod operation is defined as
\[
X\mod\Lambda=X-Q_{\Lambda}(X).
\]
The fundamental Voronoi region $\mathcal{V}(\Lambda)$ is defined as
$\mathcal{V}(\Lambda)=\{X:Q_{\Lambda}(X)=0\}$, whose volume is denoted by
$V(\Lambda)$ and is defined as $V(\Lambda)=\int_{\mathcal{V}(\Lambda)}dX$. The
second moment of lattice $\Lambda$ is given by
\[
\sigma^{2}(\Lambda)=\frac{1}{nV(\Lambda)}\int_{\mathcal{V}(\Lambda)}\Vert
X\Vert^{2}dX,
\]
while the normalized second moment is defined as
\[
G(\Lambda)=\frac{\sigma^{2}(\Lambda)}{V(\Lambda)^{1/n}}=\sigma^{2}%
(\Lambda)=\frac{1}{nV(\Lambda)}\int_{\mathcal{V}(\Lambda)}\Vert X\Vert^{2}dX.
\]
We use a nested lattice structure as in \cite{Zamir:IT:02}, where $\Lambda
_{c}$ denotes the coarse lattice and $\Lambda_{f}$ denotes the fine lattice
and we have $\Lambda_{c}\subseteq\Lambda_{f}$. Both transmitters use the same
coarse and fine lattices for coding. We consider lattices such that
$G(\Lambda_{c})\approx\frac{1}{2\pi e}$ and $G(\Lambda_{f})\approx\frac
{1}{2\pi e}$, whose existence is shown in \cite{Zamir:IT:02}. In nested
lattice coding, the codewords are the lattice points of the fine lattice that
are in the fundamental Voronoi region of the coarse lattice. Moreover, we
choose the coarse lattice (i.e., the shaping lattice) such that $\sigma
^{2}(\Lambda_{c})=P$ to satisfy the power constraint. The fine lattice is
chosen to be good for channel coding, i.e., it achieves the Poltyrev exponent
\cite{Zamir:IT:02}.

We use a block Markov coding structure, that is the messages are coded into
$B$ blocks, and are transmitted over $B+1$ channel blocks. The relay forwards
the information relating to the messages from each block over the next channel
block. The relay is kept silent in the first channel block, while the
transmitters are silent in the last one. The receivers decode the messages
from the transmitters and the relay right after each block. Since there is no
coherent combining, transmitters send only new messages at each channel block,
thus sequential decoding with a window size of one is sufficient. We explain
the coding scheme for two consecutive channel blocks dropping the channel
block index in the expressions.

Each transmitter $i$ maps its message $W_{i}$ to a fine lattice point $V_{i}
\in\Lambda_{f}\cap\mathcal{V}(\Lambda_{c})$, $i=1,2$. Each user employs a
dither vector $U_{i}$ which is independent of the dither vectors of the other users and of the messages and
is uniformly distributed over $\mathcal{V}(\Lambda_{c})$. We assume all the
terminals in the network know the dither vectors. Now the transmitted codeword
from transmitter $i$ is given by $X_{i}= (V_{i} - U_{i}) \mod \Lambda_{c}$. It
can be shown that $X_{i}$ is also uniform over $\mathcal{V}(\Lambda_{c})$.

At the end of each block, we want the relay to decode $V \triangleq(V_{1} +
V_{2}) \mod \Lambda_{c}$ instead of decoding both messages. Following
\cite{Wilson:IT:08} (with proper scaling to take care of the channel gain
$\gamma$), it is possible to show that $V$ can be decoded at the relay if
\begin{equation}
\label{cond_dec_relay}R  \leq \frac1{n} \log_{2}|\Lambda_{f}\cap\mathcal{V}%
(\Lambda_{c})| < \frac12\log\left( \frac12+\gamma^{2} P\right) .
\end{equation}

%Following the same steps as in \cite{Wilson:IT:08}, the relay first obtains
%\begin{eqnarray*}
%\hat{V} &=& (\alpha Y_3 + U_1 + U_2) \mod \Lambda_c \nonumber \\
%&=& (X_1 + X_2 + U_1 + U_2 + \alpha Z_3 - (1-\alpha)(X_1 + X_2)) \mod \Lambda_c \nonumber \\
%&=& ((V_1 - U_1) \mod \Lambda_c + (V_2 - U_2) \mod \Lambda_c + U_1 + U_2 + \alpha Z_3 - (1- \alpha)(X_1 + X_2)) \mod \Lambda_c \nonumber \\
%&=& ((V_1 + V_2 \mod \Lambda_c) + \alpha Z_3 - (1- \alpha)(X_1 + X_2)) \mod \Lambda_c \nonumber \\
%&=& ((V + \alpha Z_3 - (1- \alpha)(X_1 + X_2)) \mod \Lambda_c \nonumber
%\end{eqnarray*}
%where we define $V\triangleq (V_1 + V_2) \mod \Lambda_c$. We have $V \in \Lambda_f\cap \mathcal{V}(\Lambda_c)$.

Then in the next channel block, while the transmitters send new information,
the relay terminal broadcasts the index of $V$ to both receivers. The relay
uses rate-splitting \cite{Rimoldi:IT:96}, and transmits each part of the $V$
index using a single-user random code. Let $R_{1}$ and $R_{2}$ be the rates of
the two codes the relay uses, with power allocation $\delta$ and $P-\delta$,
respectively. Each receiver applies successive decoding; the codes from the
relay terminal are decoded using a single-user typicality decoder, while the
signal from the transmitter is decoded by a Euclidean lattice decoder.
Successful decoding is possible if
\begin{align}
R_{1}  & \leq \frac12 \log\left( 1+ \eta^{2} \delta\right) ,\nonumber\\
R  & \leq \frac12 \log\left( 1+ \frac{P}{1+\eta^{2} \delta} \right)
\mbox{ and}\nonumber
\end{align}
\begin{align}
R_{2}  & \leq \frac12 \log\left( 1+ \frac{\eta^{2}(P-\delta)}{1+\eta^{2}
\delta+P} \right) ,\nonumber
\end{align}
where $R_{1}+R_{2} = R$. This is equivalent to having
\[
R \leq\left\{ \frac12 \log\left( 1+ \eta^{2} P \right) , \frac12 \log\left( 1+
P \right) , \frac1{4} \log\left( 1+ (1+\eta^{2})P \right)  \right\} .
\]
Combining this with (\ref{cond_dec_relay}), we obtain the rate constraint
given in the theorem.

\section{Proof of Proposition \ref{prop:6}}

\label{app:prop6}

The proof follows from \cite{Bross}, and here we only briefly sketch the main
arguments. The first step is to notice that the capacity region for the
Gaussian channel (\ref{GMAC}) with power constraints (\ref{power constraints})
is given by the region (\ref{region cognitive}) where the further constraints
$E[X_{j}^{2}]=\sum_{q=1}^{4}p(q)E[X_{j}^{2}|Q=q]\leq P_{j}$ for $j=1,2,3$ are
imposed on the input distribution. One then fixes a given value of $Q=q$ and
powers $P_{j}(q)\triangleq E[X_{j}^{2}|Q=q]$ and shows that, for any given
input distribution (recall (\ref{pmf})) ($U_{1q},U_{2q},X_{1q},X_{2q},X_{3q}%
$)$\sim p(x_{1},u_{1}|q)p(x_{2},u_{2}|q)p(x_{3}|u_{1},u_{2},q)$ satisfying the
power constraints $E[X_{j}^{2}|Q=q]\leq P_{j}(q)$, one can find a set of
jointly Gaussian variables ($U_{1q}^{\mathcal{G}},U_{2q}^{\mathcal{G}}%
,X_{1q}^{\mathcal{G}},X_{2q}^{\mathcal{G}},X_{3q}^{\mathcal{G}}$) such that:
(\textit{i}) the joint distribution can be factorized as in (\ref{pmf});
(\textit{ii}) the power constraints are satisfied; and (\textit{iii}) all the
mutual information terms in (\ref{region cognitive}) are larger than or equal
to the corresponding values obtained with the original distribution.

Notice that, as discussed in \cite{Bross}, the existence of such a tuple of
Gaussian variables does not follow immediately from the conditional maximum
entropy theorem. In fact, variables satisfying given Markov constraints (as in
(\ref{pmf})) might have a covariance matrix for which a joint Gaussian
distribution with the same Markov constraints cannot be constructed. However,
in our case, similarly to \cite{Bross}, jointly Gaussian random variables
satisfying (\textit{i})-(\textit{iii}) can be found as discussed below.

First, for a given tuple $(U_{1q},U_{2q},X_{1q},X_{2q},X_{3q})$, construct
another tuple ($V_{1q},V_{2q},X_{1q},X_{2q},X_{3q}$) with $V_{1q}%
=E[X_{1}|U_{1}]-E[X_{1}]$ and $V_{2q}=E[X_{2}|U_{2}]-E[X_{2}].$ It can be
readily seen that with this new input distribution, all the mutual information
terms in (\ref{region cognitive}) are larger than or equal to the
corresponding values for $(U_{1q},U_{2q},X_{1q},X_{2q},X_{3q})$ (property
(\textit{iii}) above)$.$ In fact, the non-conditional term
(\ref{non-conditional}) is unchanged (due to the fact the joint distribution
of $X_{1q},X_{2q},X_{3q}$ has not changed), while the remaining terms, which
contain conditioning with respect to $V_{1q},V_{2q},$ are increased, since
$V_{1q},V_{2q}$ are deterministic functions of $U_{1q},U_{2q},$ respectively
\cite{Bross}. Now, define $(U_{1q}^{\mathcal{G}},U_{2q}^{\mathcal{G}}%
,X_{1q}^{\mathcal{G}},X_{2q}^{\mathcal{G}},X_{3q}^{\mathcal{G}})$ as the
zero-mean jointly Gaussian tuple with the same covariance matrix as
$(V_{1q},V_{2q},X_{1q},X_{2q},X_{3q})$. For this distribution the power
constraints are clearly satisfied (property (\textit{ii}) above). Moreover, we
can show that this jointly Gaussian distribution factorizes as (\ref{pmf})
(property (\textit{i})). This follows similarly to Lemma 6 in \cite{Bross}. In
fact, $X_{1q},X_{2q}$ are independent and, since $U_{1q},U_{2q}$ are
independent, so are $V_{1q},V_{2q}$; as a consequence, the covariance matrix
of $(U_{1q}^{\mathcal{G}},U_{2q}^{\mathcal{G}},X_{1q}^{\mathcal{G}}%
,X_{2q}^{\mathcal{G}},X_{3q}^{\mathcal{G}})$ is fully defined by the
subcovariance matrices of $(U_{1q}^{\mathcal{G}},X_{1q}^{\mathcal{G}}%
,X_{3q}^{\mathcal{G}}) $ and $(U_{2q}^{\mathcal{G}},X_{2q}^{\mathcal{G}%
},X_{3q}^{\mathcal{G}}).$ Now, since both submatrices satisfy the conditions
of Lemma 5 of \cite{Bross}, we can conclude that the jointly Gaussian vector
$(U_{1q}^{\mathcal{G}},U_{2q}^{\mathcal{G}},X_{1q}^{\mathcal{G}}%
,X_{2q}^{\mathcal{G}},X_{3q}^{\mathcal{G}})$ satisfies the desired Markov
conditions (\ref{pmf}). It finally follows from the conditional maximum
entropy theorem that $(U_{1q}^{\mathcal{G}},U_{2q}^{\mathcal{G}}%
,X_{1q}^{\mathcal{G}},X_{2q}^{\mathcal{G}},X_{3q}^{\mathcal{G}})$ is optimal
for a given $Q=q.$ The final step is to use the concavity of the mutual
informations at hand with respect to the powers $P_{j}(q)$ to see that time
sharing (i.e., the variable $Q$ with non-singleton domain) is not necessary.

From the arguments sketched above, we conclude that a zero-mean jointly
Gaussian distribution is optimal, thus we can write
\begin{equation}
X_{j}=\sqrt{\alpha_{j}P_{j}}U_{j}+\sqrt{\bar{\alpha}_{j}P_{j}}S_{j}%
\label{def Xj}%
\end{equation}
\ for $j=1,2$ and
\begin{equation}
X_{3}=\sqrt{\alpha_{3}^{\prime}P_{3}}U_{1}+\sqrt{\alpha_{3}^{\prime\prime
}P_{3}}U_{2}+\sqrt{\bar{\alpha}_{3}P_{3}}S_{3}\label{def X3}%
\end{equation}
with $0\leq\alpha_{1},\alpha_{2},\alpha_{3}^{\prime},\alpha_{3}^{\prime\prime
}\leq1, $ $\alpha_{3}^{\prime}+\alpha_{3}^{\prime\prime}\leq1,$ $\bar{\alpha
}_{1}=1-\alpha_{1},$ $\bar{\alpha}_{2}=1-\alpha_{2}$ and $\bar{\alpha}%
_{3}=1-\alpha_{3}^{\prime}-\alpha_{3}^{\prime\prime},$ and where $U_{i}$ and
$S_{i} $ are independent zero-mean and unit-variance Gaussian random
variables. The capacity region (\ref{region cognitive}) then reads%
\begin{align}
R_{3}  &  \leq\frac12\log(1+\bar{\alpha}_{3}P_{3}),\\
R_{1}+R_{3}  &  \leq\frac12 \log(1+\bar{\alpha}_{1} P_{1}+\bar{\alpha}_{3}
P_{3}+(\sqrt{\alpha_{1} P_{1}}+\sqrt{\alpha_{3}^{\prime}P_{3}})^{2})),\\
R_{2}+R_{3}  &  \leq\frac12 \log(1+\bar{\alpha}_{2} P_{2}+\bar{\alpha}_{3}
P_{3}+(\sqrt{\alpha_{2} P_{2}}+\sqrt{\alpha_{3}^{\prime\prime}P_{3}})^{2}))
\mbox{ and }\\
R_{1}+R_{2}+R_{3}  &  \leq\frac12 \log(1+P_{1}+P_{2}+P_{3}+2\sqrt{\alpha_{1}
P_{1}\alpha_{3}^{\prime}P_{3}}+2\sqrt{\alpha_{2} P_{2}\alpha_{3}^{\prime
\prime}P_{3}}),
\end{align}
where each term is clearly seen to be maximized by $\alpha_{1}=1$ and
$\alpha_{2}=1.$ This concludes the proof.

\end{document}